\tikzset{elabelcolor/.style={color=blue} 
    vertex/.style={circle,draw,minimum size=1.5em},
    edge/.style={->,> = latex'}
}  
\definecolor{THcolor}{rgb}{0.1,0.7,0.1}
\definecolor{VHcolor}{rgb}{0.7,0.3,0.9}
\definecolor{MRocolor}{rgb}{0.1,0,1}
\newcommand{\ignore}[1]{}
\definecolor{colormacro}{rgb}{0.6,0.4,0.4}  
\definecolor{colormacromath}{rgb}{0.2,0.4,0.8}
\newtheorem{conjecture}{Conjecture}
\newtheorem{cor}{Corollary}
\newtheorem{defi}{Definition}
\newtheorem{lemma}{Lemma}
\newtheorem{question}{Question}
\newtheorem{thm}{Theorem}
\newtheorem*{obs}{Observation}
\newcommand{\N}{{\sf N}}
\newcommand{\D}{\sf{D}}
\newcommand{\E}{{\sf{E}}}
\newcommand{\I}{\mathscr{I}}
\newcommand{\J}{\mathscr{J}}
\newcommand{\K}{\mathscr{K}}
\newcommand{\uI}{\underline{\mathscr{I}}}
\newcommand{\uJ}{\underline{\mathscr{J}}}
\newcommand{\uK}{\underline{\mathscr{K}}}
\newcommand{\Il}{\check{\mathscr{I}}}
\newcommand{\Jl}{\check{\mathscr{J}}}
\newcommand{\Kl}{\check{\mathscr{K}}}
\newcommand{\uIl}{\check{\underline{\mathscr{I}}}}
\newcommand{\uKl}{\check{\underline{\mathscr{K}}}}
\newcommand{\Ih}{\hat{\mathscr{I}}}
\newcommand{\Kh}{\hat{\mathscr{K}}}
\newcommand{\uIh}{\hat{\underline{\mathscr{I}}}}
\newcommand{\uKh}{\hat{\underline{\mathscr{K}}}}
\newcommand{\Ilc}{{\Il^\dagger}}
\newcommand{\Klc}{{\Kl^\dagger}}
\newcommand{\Ihc}{{\Ih^\dagger}}
\newcommand{\Khc}{{\Kh^\dagger}}
\newcommand{\lat}[1]{\mathfrak{L}_{\text{#1}}}
\newcommand{\ent}{{\sf S}}  
\newcommand{\mi}{{\sf I}}  
\newcommand{\er}{\mathscr{R}} 
\newcommand{\erssa}{\er_{_\text{SSA}}} 
\newcommand{\erq}{\er_{_\text{Q}}} 
\newcommand{\erh}{\er_{_\text{H}}} 
\newcommand{\erkc}{\er_{_\text{KC}}} 
\newcommand{\ghyp}{\mathbb{H}}  
\newcommand{\comp}[1]{{#1}^c}   
\newcommand{\pmi}{\mathbb{P}}   
\newcommand{\pmimap}{\pi} 
\newcommand{\face}{\mathscr{F}}   
\newcommand{\psys}[1]{[\![#1]\!]}
\newcommand{\mgs}{\mathcal{E}}  
\newcommand{\vece}[1]{\vv{#1}}   
\newcommand{\genstirlingII}[3]{%
  \genfrac{\{}{\}}{0pt}{#1}{#2}{#3}%
}
\newcommand{\stirlingII}[2]{\genstirlingII{}{#1}{#2}}
\newcommand{\covb}{%
  \mathrel{%
    \begin{tikzpicture}[baseline=-\the\dimexpr\fontdimen22\textfont2\relax ]
    \draw[thick] (-0.9ex,0ex) -- (0.0ex,0ex);
    \draw[thick] (0ex,0ex) -- (0.6ex,0.6ex);
    \draw[thick] (0ex,0ex) -- (0.6ex,-0.6ex);
    \end{tikzpicture}
  }
}
\newcommand{\cov}{%
  \mathrel{%
    \begin{tikzpicture}[baseline=-\the\dimexpr\fontdimen22\textfont2\relax ]
    \draw[thick] (0.9ex,0ex) -- (0.0ex,0ex);
    \draw[thick] (0ex,0ex) -- (-0.6ex,0.6ex);
    \draw[thick] (0ex,0ex) -- (-0.6ex,-0.6ex);
    \end{tikzpicture}
  }
}
\newcommand{\covbs}{%
  \mathrel{%
    \begin{tikzpicture}[baseline=-\the\dimexpr\fontdimen22\textfont2\relax ]
    \draw[thick] (-0.9ex,0ex) -- (0.0ex,0ex);
    \draw[thick] (0ex,0ex) -- (0.6ex,0.6ex);
    \draw[thick] (0ex,0ex) -- (0.6ex,-0.6ex);
    \filldraw[black] (0.5ex,0ex) circle (0.1ex);
    \end{tikzpicture}
  }
}
\newcommand{\covs}{%
  \mathrel{%
    \begin{tikzpicture}[baseline=-\the\dimexpr\fontdimen22\textfont2\relax ]
    \draw[thick] (0.9ex,0ex) -- (0.0ex,0ex);
    \draw[thick] (0ex,0ex) -- (-0.6ex,0.6ex);
    \draw[thick] (0ex,0ex) -- (-0.6ex,-0.6ex);
    \filldraw[black] (-0.5ex,0ex) circle (0.1ex);
    \end{tikzpicture}
  }
}
\newcommand{\veedot}{%
  \!\!\mathrel{%
    \begin{tikzpicture}[baseline=-\the\dimexpr\fontdimen22\textfont2\relax ]
    \node at (0ex,0ex) {$\vee$};
    \filldraw[black] (0ex,0.5ex) circle (0.1ex);
    \end{tikzpicture}
  \!\!}
}
\newcommand{\pt}{%
  \mathrel{%
    \begin{tikzpicture}[baseline=-\the\dimexpr\fontdimen22\textfont2\relax ]
    \filldraw (-0.8ex,0ex) circle(0.3ex);
    \filldraw (0.8ex,0ex) circle(0.3ex);
    \filldraw(0ex,0.8ex) circle(0.3ex);
    \filldraw(0ex,-0.8ex) circle(0.3ex);
    \draw (-0.8ex,0ex) -- (0.8ex,0ex);
    \draw (0ex,-0.8ex) -- (0ex,0.8ex);
    \end{tikzpicture}
  }
}
\newcommand{\bpg}{%
  \mathrel{%
    \begin{tikzpicture}[baseline=-\the\dimexpr\fontdimen22\textfont2\relax ]
    \filldraw (-0.8ex,0ex) circle(0.3ex);
    \filldraw(0.8ex,0ex) circle(0.3ex);
    \draw(-0.8ex,0ex) -- (0.8ex,0ex);
    \end{tikzpicture}
  }
}
\newcommand{\bp}[2]{%
  \mathrel{%
    \begin{tikzpicture}[baseline]
    \filldraw (-0.8ex,0ex) circle(0.3ex);
    \filldraw(0.8ex,0ex) circle(0.3ex);
    \draw(-0.8ex,0ex) -- (0.8ex,0ex);
    \node at (-0.9ex,1.3ex) {{\tiny \textit{#1}}};
    \node at (0.9ex,1.3ex) {{\tiny \textit{#2}}};
    (-.13ex, 0ex) (1.3ex + .13ex, 0ex); 
    \end{tikzpicture}
  }
}
\def\@fpheader{~}\makeatother
\newcommand{\kcjoin}{\vee_{_{_{\!\!\text{kc}}}}} 
\newcommand{\zds}{{\zeta\text{-down-set}}}
\newcommand{\zdss}{{\zeta\text{-down-sets}}}
\title{On the relation between the subadditivity cone and the quantum entropy cone}
\author[a]{Temple He,}
\emailAdd{templehe@caltech.edu}
\author[b]{Veronika E. Hubeny,}
\emailAdd{veronika@physics.ucdavis.edu}
\author[c]{Massimiliano Rota}
\emailAdd{m.rota@uva.nl}
\affiliation[a]{Walter Burke Institute for Theoretical Physics \\ California Institute of Technology, Pasadena, CA 91125 USA}
\affiliation[b]{Center for Quantum Mathematics and Physics (QMAP)\\ 
Department of Physics \& Astronomy, University of California, Davis, CA 95616 USA}
\affiliation[c]{Institute for Theoretical Physics, University of Amsterdam,
Science Park 904, Postbus 94485, 1090 GL Amsterdam, The Netherlands}
\abstract{Given a multipartite quantum system, what are the possible ways to impose mutual independence among some subsystems, and the presence of correlations among others, such that there exists a quantum state which satisfies these demands? This question and the related notion of a \textit{pattern of marginal independence} (PMI) were introduced in \cite{Hernandez-Cuenca:2019jpv}, and then argued in \cite{Hernandez-Cuenca:2022pst} to be central in the derivation of the holographic entropy cone. Here we continue the general information theoretic analysis  of the PMIs allowed by \textit{strong subadditivity} (SSA) initiated in \cite{Hernandez-Cuenca:2019jpv}.
We show how the computation of these PMIs simplifies when SSA is replaced by a weaker constraint, dubbed \textit{Klein's condition} (KC), which follows from the necessary condition for the saturation of subadditivity (SA).
Formulating KC in the language of partially ordered sets, we show that the set of PMIs compatible with KC forms a lattice, and we investigate several of its structural properties. One of our main results is the identification of a specific lower dimensional face of the SA cone that contains on its boundary all the extreme rays (beyond Bell pairs) that can possibly be realized by quantum states. We verify that for four or more parties, KC is strictly weaker than SSA, but nonetheless the PMIs compatible with SSA can easily be derived from the KC-compatible ones. For the special case of 1-dimensional PMIs, we conjecture that KC and SSA are in fact equivalent. To make the presentation self-contained, we review the key ingredients from lattice theory as needed.}
\begin{document}
 
 
\maketitle

\section{Introduction}

An important endeavor in quantum information theory is the derivation of universal constraints satisfied by information quantities that characterize various properties of quantum states. For some information quantities, the classification of these constraints has been completed, in the sense that it has been possible to show that no additional independent constraints can exist besides the known ones. For example, this is the case for the quantum relative entropy, as it was shown in \cite{Ibinson_2006} that it cannot satisfy inequalities other than \textit{non-negativity} (also known as \textit{Klein's inequality}) and \textit{monotonicity} (also known as the \textit{data processing inequality}). 

Interestingly, this program is much more complicated in the case of the von Neumann entropy, and only two fundamental inequalities are currently known, namely \textit{subadditivity} (SA) and \textit{strong subadditivity} (SSA), which may be understood respectively as non-negativity and monotonicity of \textit{mutual information}, a measure of the total amount of correlation between subsystems.  Since both these inequalities are implied by the monotonicity of relative entropy, one may wonder if these could be the only inequalities. The fact that this cannot quite be the case is clarified by the existence of \textit{constrained} inequalities \cite{Linden:2004ebt,Cadney_2012}, i.e., inequalities which are satisfied only when other inequalities are saturated. However, it remains an open question whether additional unconstrained inequalities exist. The eventual discovery of new inequalities would likely have important implications, since the saturation of an entropy inequality is typically associated with a particular structure of the density matrix with interesting properties. For example, the saturation of subadditivity is related to independence between subsystems, while the saturation of strong subadditivity is attained only when the density matrix is a quantum Markov chain \cite{Hayden_2004}.

While these investigations often aim at the analysis of constraints which are universal, in the sense that they hold for arbitrary quantum states,  a similar program can also be pursued for restricted classes of states, like stabilizer states or classical probability distributions. For classical probabilities, the von Neumann entropy reduces to the Shannon entropy, which additionally satisfies \textit{monotonicity} and, for four or more parties, infinitely many new inequalities \cite{Zhang1998OnCO, Makarychev:2003, Dougherty:2006, Matus2006}. Similarly, the entropy of stabilizer states has been shown to satisfy all ``balanced'' classical inequalities \cite{Gross_2013} and a certain type of \textit{linear rank inequalities} \cite{LindenRuskaiWinter}.

A class of particular interest for us is that of \textit{geometric states} in quantum gravity, in the context of gauge/gravity duality \cite{Maldacena:1997re,Gubser:1998bc,Witten:1998qj}. More precisely, by a geometric state we mean a state of a holographic conformal field theory which describes a classical bulk geometry. In this context, the von Neumann entropy can be computed using the HRRT prescription \cite{Ryu:2006bv,Hubeny:2007xt}, and has likewise 
been shown to satisfy new inequalities specific to this class. The first of such \textit{holographic entropy inequalities}, known as \textit{monogamy of mutual information}, was found in \cite{Hayden:2011ag}, while a systematic search for new inequalities was started in \cite{Bao:2015bfa}. In particular, it was shown in \cite{hayden2016holographic} that this class of states satisfies all stabilizer entropy inequalities.

Drawing on the ideas of \cite{Hubeny:2018trv,Hubeny:2018ijt}, some of the authors argued in the recent work \cite{Hernandez-Cuenca:2022pst} that all holographic entropy inequalities can be derived from the solution of a conceptually simpler problem, dubbed the \textit{holographic marginal independence problem} (HMIP). This is the holographic version of the general \textit{quantum marginal independence problem} (QMIP) introduced in \cite{Hernandez-Cuenca:2019jpv} for arbitrary quantum states. Intuitively, the QMIP asks the following question: For a given multipartite quantum system, if we demand that certain collections of subsystems are independent while \textit{all} others manifest some correlation, is there a density matrix whose marginals satisfy these demands? An arbitrary choice of such demands was defined in \cite{Hernandez-Cuenca:2019jpv} as a ``pattern of marginal independence'' (PMI), so the QMIP asks which PMIs can be realized by arbitrary quantum states.

As mentioned above, the independence among subsystems is captured by the vanishing of mutual information, or equivalently by the saturation of SA. Since for a multipartite quantum system there exist multiple linearly dependent instances of SA, many combinations are trivially not realizable by either linear dependence or SA. By taking into account these basic restrictions, one can then focus on a more meaningful set of possibilities. Following \cite{Hernandez-Cuenca:2022pst}, we will then update the original definition of a PMI from \cite{Hernandez-Cuenca:2019jpv} to ignore all the possible demands which are incompatible with linear dependence and SA. As we will review, a convenient way to formalize the QMIP with this notion of a PMI is via the construction of polyhedral cone in the space of entropy vectors,\footnote{\, As we will review in detail, an entropy vector is a vector whose components are the entropies of all subsystems of a system described by a given density matrix.} carved by all instances of SA, which will be called the \textit{subadditivity cone} (SAC). The PMIs can then be defined as the linear subspaces spanned by the faces of the SAC. 

For any number of parties, the SAC is an outer bound to the quantum entropy cone, and one can easily imagine that most of these PMIs are non-realizable in quantum mechanics. This is indeed the case, and at least for a small number of parties, the key reason is SSA. Specifically, for most faces of the SAC, all vectors in their interior violate SSA and therefore are not entropy vectors for any density matrix. The goal of the present work is to continue the general analysis initiated in \cite{Hernandez-Cuenca:2019jpv} for arbitrary quantum states, explore in more detail the structure of the set of PMIs which are allowed by SSA, and suggest a new route for the explicit computation of this set. 

From a quantum information theoretical perspective, the analysis presented here is a necessary step, following \cite{Hernandez-Cuenca:2019jpv}, towards the solution to the QMIP. This in turn could lead to new information about the structure of the full quantum entropy cone, as well as an important set of restrictions on the form of possible new entropy inequalities. For the class of geometric states, the hope is that this would allow us to answer a deep question about the very origin of the holographic constraints on the entropy. The results of \cite{Hernandez-Cuenca:2022pst} indeed strongly suggest that the essence of these constraints is captured by the set of holographically realized extreme rays of the SAC. Quantum mechanics imposes severe constraints on the possible extreme PMIs, but does the holographic set-up impose even more stringent restrictions? If it does not, then combined with the result of \cite{Hernandez-Cuenca:2022pst}, we would have a complete characterization of the holographic entropy constraints. Otherwise, there must exist even more fundamental constraints whose origin remains elusive.

To explicitly construct the set of PMIs which are allowed by SSA, one way to proceed would be to slice the SAC with all instances of SSA, and then determine the PMIs of the faces of the resulting cone \cite{Hernandez-Cuenca:2019jpv}. This procedure however is highly inefficient, since many faces would correspond to the same PMIs. Therefore, as we will explain in detail, to characterize the set of PMIs which are allowed by SSA, we will consider a different constraint, which is weaker than SSA but has the advantage of being formulated solely in terms of SA. This constraint on the form of a PMI, which we will call \textit{Klein's condition} (KC), arises from the fact that the factorization of a density matrix is not only a sufficient condition for the saturation of SA, but also a necessary one.

Physically, KC states that any subsystems of independent systems remain independent.  Since this is a constraint solely on the form of a collection of SA inequalities, we can impose it on the set of PMIs without having to introduce new entropy inequalities. More specifically, KC will be phrased in terms of \textit{down-sets} in a certain partially ordered set of instances of mutual information, and it can be thought of as a \textit{combinatorial} constraint on the \textit{lattice of faces} of the SAC. This formulation will allow us to show that the set of PMIs compatible with KC has the structure of a lattice, and by constructing this lattice explicitly for $\N \leq 4$, we prove a series of results about the structure of this lattice for an arbitrary number of parties. The usefulness of this formulation will be most evident in the proof of a theorem that identifies, for any number of parties, a specific face of the SAC which contains all the extreme rays that can possibly be realized by quantum states. Since this face is lower dimensional, this result dramatically simplifies the explicit computation of these extreme rays \cite{He:2022wip}.

This analysis will also show that KC and SSA are not equivalent constraints for the QMIP, and that KC can only be seen as an approximation to SSA for the QMIP.\footnote{\, While it is obvious that KC is strictly weaker than SSA for vectors of entropy space, this result is non-trivial for PMIs, since the compatibility of a PMI with SSA only requires the existence of a single SSA-satisfying vector on the corresponding face of the SAC.} Indeed, for $\N \geq 4$, we will prove the existence of KC-compatible PMIs corresponding to faces of the SAC that violate SSA in the interior and therefore are not realized by quantum states. Nevertheless, we will show that the set of PMIs allowed by SSA is also a lattice, with the same meet operation as the lattice of KC-compatible PMIs, and we will comment on how this property allows for the construction of the former lattice from the latter. Furthermore, by taking advantage of the computational speedup obtained from \Cref{thm:new_coatoms}, we will be able to construct all extreme rays of the SAC that are KC-compatible for $\N \leq 5$ and verify that they all satisfy SSA. This observation will motivate us to conjecture that for an arbitrary number of parties, the approximation of SSA by KC is in fact exact for 1-dimensional PMIs. 

The plan of the paper is as follows. In \S\ref{sec:pmi_and_KC} we first review the precise definition of a PMI along with the notions of realizability and compatibility with SSA, and then we introduce KC and the poset of instances of the mutual information. In \S\ref{sec:KC_PMIs} we look at the structure of the set of PMIs satisfying KC, showing that it is a lattice, deriving some of its properties (for an arbitrary number of parties), and explaining their implications for the QMIP. In \S\ref{sec:realizability} we comment on realizability, the structure of the set of realizable PMIs for general classes of quantum states, and the relation between KC and SSA. We end in \S\ref{sec:discussion} with an extensive discussion of open questions and interesting future directions of investigation.

\section{Patterns of marginal independence and Klein's condition}
\label{sec:pmi_and_KC}

We begin by reviewing in \S\ref{subsec:cones} the basic definitions of entropy vectors, entropy space, and entropy cones. We next introduce in \S\ref{subsec:sac_pmi} the mutual information arrangement (MIA), the subadditivity cone (SAC), and the pattern of marginal independence (PMI). The notion of a PMI was originally introduced in \cite{Hernandez-Cuenca:2019jpv}, but we will instead use the updated, more ``geometric'' definition from \cite{Hernandez-Cuenca:2022pst}. We will discuss the realizability of PMIs in \S\ref{subsec:ssa_k_condition}, where we give a precise definition for a PMI to be compatible with strong subadditivity (SSA), and also introduce KC as a necessary condition for such compatibility. Finally, in \S\ref{subsec:mi_poset}, we introduce the poset of instances of mutual information, and show how KC can naturally be phrased in terms of its down-sets.

\subsection{Preliminaries: entropy vectors and the quantum entropy cone}
\label{subsec:cones}

Consider an $\N$-party quantum system, where we denote the parties numerically as $1,2,\ldots,\N$.\footnote{\, Throughout this paper, for small $\N$ we will also oftentimes  denote the parties alphabetically as $A,B,C,\ldots$.} Given a density matrix $\rho$ corresponding to a state in this system, we can associate to it a vector $\vec\ent(\rho)$ whose components are the subsystem entropies $\ent_\I$, where $\I \subseteq [\N] \equiv \{1,2\ldots,\N\}$.\footnote{\, We will always implicitly restrict to $\I\neq\varnothing$. \label{fn:nonempty}} These vectors live in the vector space $\mathbb{R}^{\D}$, with $\D=2^{\N}-1$, referred to as \textit{entropy space}. 

While every density matrix corresponds to a vector in entropy space, not every vector in entropy space is the vector of entropies of a density matrix. We will refer to any vector in entropy space as an \textit{entropy vector}, even if it is not the vector of entropies for any density matrix, and say that it is \emph{realizable} if such a density matrix exists. A trivial illustration of the fact that not all entropy vectors are realizable is that any entropy vector that is not in the positive orthant of entropy space is not realizable, since entropies are constrained to be non-negative, i.e., $\ent_\I \geq 0$ for all $\I$.

It was shown in \cite{1193790} that, for an arbitrary number of parties $\N$, 
the set of realizable entropy vectors is a convex cone in entropy space, called the $\N$-party \emph{quantum entropy cone} (QEC$_\N$).\footnote{\, More precisely, it is the topological closure of the set of realizable entropy vectors which is a convex cone \cite{1193790}.} The QEC$_\N$ is known for $\N=2,3$, and it coincides with the polyhedral cone specified by all instances of the following inequalities:
\begin{align}
\label{eq:basic_ineq1}
    & \ent_{\I}+\ent_{\K}\geq \ent_{\I\K}\qquad  & \text{subadditivity (SA)}\\
    & \ent_{\I}+\ent_{\I\K}\geq \ent_{\K}\qquad  & \text{Araki-Lieb (AL)} \label{eq:basic_AL} \\
    & \ent_{\I\K}+\ent_{\J\K}\geq \ent_{\K} + \ent_{\I\J\K} & \text{strong subadditivity (SSA)} \label{eq:basic_SSA} \\
    & \ent_{\I\K}+\ent_{\J\K}\geq \ent_{\I} + \ent_{\J} & \text{weak monotonicity (WM)}
\end{align}  
where we use the shorthand notation $\ent_{\I\K}$ for $\ent_{\I\cup\K}$, and we always assume that $\I\cap\K=\varnothing$ (and similarly for other pairs of indices).
As we already mentioned in the introduction, for larger $\N$ no other unconstrained inequality is known.\footnote{\, The reason why we have not included the non-negativity of the entropy in the list above is that it is not a fundamental inequality, as can immediately be seen by adding SA and AL.\label{fn:nonnegS}}

The quantum entropy cone is clearly symmetric under all permutations of the $\N$ parties, but its symmetry group is actually Sym$_{\N+1}$ rather than just Sym$_{\N}$. The reason for this enhanced symmetry is the existence of a \textit{purification} for any density matrix, and the fact that for pure states the entropy of a subsystem and that of its complement are equal. Explicitly, given an $\N$-party density matrix $\rho_{\N}$, we can construct an auxiliary Hilbert space corresponding to an additional party, for convenience denoted by the number $0$ (or by the letter $O$), which we dub the \textit{purifier}, and a pure state $\ket{\psi}$ in the enlarged Hilbert space such that
\begin{equation}\label{eq:pure}
    \rho_{\N}=\text{Tr}_{0}\ket{\psi}\bra{\psi}\,.
\end{equation}
Introducing a new index 
$\uI \subseteq $\psys{\N}$ \coloneqq \{0,1,\ldots,\N\}$,\footnote{\, We use $\I,\J,\K$ to denote subsets of the parties that \textit{do not} include the purifier, and $\uI,\uJ,\uK$ to denote subsets of the parties that \textit{can} include the purifier.} we then have for all $\uI$
\begin{align}
\label{eq:purification}
    \ent_{\uI} = \ent_{\comp{\uI}}\,.
\end{align}
where the superscript $c$ on $\uI$ indicates complement, or $\comp{\uI} \coloneqq \psys{\N} \setminus \uI$.

Given an inequality involving any of the $\N+1$ parties, we can then use this symmetry to obtain a new inequality by permuting the $\N+1$ parties and using \eqref{eq:purification} to remove the purifier when it appears. As an example, for $\N=2$ with parties $A,B$ and purifier $O$, we can transform an instance of the Araki-Lieb inequality into an instance of subadditivity as follows:
\begin{align}
\begin{split}
    &\ent_A + \ent_{AB} \geq \ent_{B} \\ 
    \implies\quad & \ent_A + \ent_{AO} \geq \ent_{O} \\
    \implies\quad & \ent_{A} + \ent_B \geq \ent_{AB}\,,
\end{split}
\end{align}
where we obtained the second line by exchanging $B$ and $O$, and in the last line we applied the purification symmetry. We leave it as an exercise for the reader to verify that one can similarly map an instance of weak monotonicity to an instance of strong subadditivity and vice-versa.

In the rest of the paper we will find it useful to always exploit this symmetry and write the inequalities of \eqref{eq:basic_ineq1} as
\begin{align}
\label{eq:basic_ineq2}
    & \ent_{\uI}+\ent_{\uK}\geq \ent_{\uI\uK}\qquad  & \text{subadditivity (SA)}\\
    & \ent_{\uI\uK}+\ent_{\uJ\uK}\geq \ent_{\uK} + \ent_{\uI\uJ\uK} & \text{strong subadditivity (SSA)}\,,
\end{align}
therefore including all instances of the Araki-Lieb inequality in the set of instances of subadditivity, and all instances of weak monotonicity in the set of instances of strong subadditivity. Throughout this work the purifier will always be considered an ancillary party that \textit{could} be added to the $\N$-party system to purify a state, but is never really part of an $\N$-party system. It will merely be viewed as a tool to write instances of AL as instances of SA.

\subsection{Patterns of marginal independence and the subadditivity cone}
\label{subsec:sac_pmi}

We now recall the notion of a \textit{pattern of marginal independence} (PMI) given in \cite{Hernandez-Cuenca:2022pst}. Before giving the formal definition, let us briefly comment on the intuition behind it and introduce some of the geometric constructs that will be used for the definition. Given an $\N$-party density matrix $\rho_{\N}$, the \textit{mutual information} (MI) between any non-intersecting pair of subsystems $\uI$ and $\uK$ is defined as
\begin{align}\label{eq:MI_def}
    \mi(\uI:\uK) \coloneqq \ent_{\uI} + \ent_{\uK} - \ent_{\uI\uK}\,.
\end{align}
The MI is a measure of the total amount of correlation between two subsystems, and it vanishes if and only if the two subsystems are \textit{independent}, i.e., if and only if the \textit{marginal} $\rho_{\uI\uK}$ of $\rho_{\N}$ factorizes:\footnote{\, More precisely, if $\uI$ or $\uK$ includes the purifier, $\rho_{\uI\uK}$ is a marginal of the density matrix $\ket{\psi}\!\bra{\psi}$, where $\ket{\psi}$ is the purification defined in \eqref{eq:pure}.} 
\begin{equation}
\label{eq:mi_vanishing}
    \mi(\uI:\uK)=0\qquad \Longleftrightarrow\qquad \rho_{\uI\uK}=\rho_{\uI}\otimes\rho_{\uK}\, .
\end{equation}

Given an $\N$-party quantum system, we denote by $\mgs$ the collection of all instances of MI (including the purifier).\footnote{\, It is clear from \eqref{eq:MI_def} that MI is manifestly symmetric in its arguments, i.e., $\mi(\uI:\uK) = \mi(\uK:\uI)$. The collection of MI instances in $\mgs$ takes this into account, so that to avoid over-counting, for every non-intersecting pair $\uI$ and $\uK$, $\mgs$ will include only $\mi(\uI:\uK)$ or $\mi(\uK:\uI)$, but not both.} Consider now a bipartition $(\mgs^0,\mgs^*)$ of $\mgs$, and suppose that we want to construct a state for the system such that all instances of MI in $\mgs^0$ vanish, while all the instances in $\mgs^*$ do not. One may wonder whether it is useful to define a PMI to be any such bipartition. However, since we are interested in the realizability of a given pattern, defining the complete set of patterns such that most of them are manifestly unrealizable is not the most convenient starting point.  Instead, we will make an initial reduction of the full set to a more manageable subset by eliminating patterns that are trivially unrealizable. 

One immediate class of such unrealizable patterns stems from the fact that the instances of MI are not all linearly independent. As a simple example, consider the case of $\N=3$, and suppose that we specify a bipartition of $\mgs$ such that
\begin{align}\label{eq:dis_bip}
\begin{split}
    & \{\,\mi(A\!:\!B),\,\mi(B\!:\!C),\,\mi(AB\!:\!C)\}\subseteq\, \mgs^0 \\
    & \{\,\mi(A\!:\!BC)\}\subseteq\, \mgs^* \,.
\end{split}
\end{align}
Then a density matrix which satisfies these requirements cannot exist simply because of the following identity: 
\begin{align}\label{eq:ld}
    \mi(A\!:\!B) + \mi(AB\!:\!C) - \mi(B\!:\!C) - \mi(A\!:\!BC) = 0 \,.
\end{align}

To take such trivial constraints into account, so that we can ignore all bipartitions of $\mgs$ violating linear dependence, it is convenient to use a geometric description. For any MI instance $\mi(\uI:\uK)$, the equation
\begin{equation}
\label{eq:mihyp}
    \ghyp:\qquad \mi(\uI:\uK)=0
\end{equation}
specifies a \textit{hyperplane} $\ghyp$ in the entropy space. The collection of all hyperplanes of this form, for any fixed $\N$, is a hyperplane arrangement in entropy space, which we now define.
\begin{defi}[MI hyperplane arrangement (MIA$_{\N}$)]
    For a given number of parties $\N$, the mutual information hyperplane arrangement is the collection of all hyperplanes 
    $\ghyp$ specified by \eqref{eq:mihyp}.
\end{defi}

Using this definition,\footnote{\, The reader familiar with the definition given in \cite{Hernandez-Cuenca:2019jpv} will notice that here we additionally include the hyperplanes where $\uI\cup\uK = \psys{\N}$. As we will discuss (cf.\ \Cref{lem:facet}), these hyperplanes do not correspond to facets of the subadditivity cone, so this inclusion will have no effect on the set of PMIs defined below.}
it should then be clear that the bipartitions of $\mgs$ that are compatible with linear dependence correspond precisely to the linear subspaces of entropy space that are intersections of hyperplanes, which are called the \textit{flats} of the arrangement.\footnote{\, By convention, the set of flats also includes the entire entropy space, which can be thought of as the intersection of the empty collection of hyperplanes.} Indeed, the disallowed bipartition \eqref{eq:dis_bip} would never be considered, since \eqref{eq:ld} implies that the intersection of the hyperplanes $\mi(A\!:\!B) = 0$, $\mi(B\!:\!C) = 0$ and $\mi(AB\!:\!C) = 0$ must lie on the hyperplane $\mi(A\!:\!BC) = 0$. 

One may then be tempted to define a PMI to be a flat of the MIA. However, this is likewise inconvenient, as most flats correspond to bipartitions of $\mgs$ that are disallowed by another basic constraint, namely SA. Rephrased in terms of  MI, SA is simply the statement that MI is always non-negative, and specifies a \textit{halfspace} $\ghyp^+$ of entropy space:
\begin{align}
\label{eq:mi_sa}
\ghyp^+:\qquad    \mi(\uI:\uK) \geq 0\, .
\end{align}
Consider now the following bipartition of $\mgs$ (again for $\N=3$):
\begin{equation}
\label{eq:dis_bip2}
    \mgs^0 = \{\,\mi(A\!:\!B),\,\mi(AB\!:\!C)\},\qquad 
    \mgs^* = \mgs\setminus\mgs^0\, .
\end{equation}
We leave it as an exercise to the reader to verify that there is no instance of MI in $\mgs^*$ which is a linear combination of the instances in $\mgs^0$, so this bipartition corresponds to a flat of MIA$_3$. However, notice that \eqref{eq:ld} then reduces to 
\begin{equation}
\label{eq:flat_sa_violation}
     - \mi(B\!:\!C) - \mi(A\!:\!BC) = 0\,,
\end{equation}
and \eqref{eq:mi_sa} then implies that $\mi(B\!:\!C)=\mi(A\!:\!BC)=0$. Therefore, the bipartition in \eqref{eq:dis_bip2} is ruled out not because it violates linear dependence, but because any non-vanishing entropy vector that satisfies \eqref{eq:dis_bip2} and \eqref{eq:flat_sa_violation} cannot satisfy all instances of SA.

To also take into account these implications of SA, we again use a geometric construction. The collection of all halfspaces of the form \eqref{eq:mi_sa} specifies the following convex polyhedral cone.\footnote{\, Convexity follows from the fact that we are intersecting halfspaces, polyhedrality from the finiteness of the set of SA instances, and the fact that the resulting convex polyhedron is a cone is implied by the fact that SA is a homogeneous inequality.}

\begin{defi}[Subadditivity cone (SAC$_{\N}$)]
For a given number of parties $\N$, the \emph{subadditivity cone} is the polyhedral cone in entropy space specified by the intersection of all halfspaces $\ghyp^+$ given by \eqref{eq:mi_sa}.
\end{defi}

We define a \textit{face} $\face$ of the SAC$_{\N}$ as the intersection of the SAC$_{\N}$ with an arbitrary linear hyperplane such that the SAC$_{\N}$ is entirely contained in a closed halfspace. Equivalently, we can obtain lower-dimensional faces from intersections of higher-dimensional ones.  Two particularly important classes of faces are the \textit{facets}, which are $(\D-1)$-dimensional faces (corresponding to a region of a given hyperplane $\ghyp$ of \eqref{eq:mihyp} resulting from its intersection with the SAC$_\N$), 
and the \textit{extreme rays}, which are the $1$-dimensional faces. It is conventional to include in the set of faces also a $\D$-dimensional face, which is the entire SAC$_\N$. 

Using the above definition of a face, we define a PMI as the following linear subspace of entropy space:\footnote{ \, This definition is the same as the one used in \cite{Hernandez-Cuenca:2022pst}, which improves the original one given in \cite{Hernandez-Cuenca:2019jpv} to account for SA.}

\begin{defi}[Pattern of marginal independence (PMI)]
\label{def:pmi}
For a given number of parties
$\N$, a pattern of marginal independence $\pmi$ is the linear span of a face of the \emph{SAC$_{\N}$}.
\end{defi}

One natural specification of a given PMI is the set $\mgs^0$, since this automatically prescribes the bipartition $(\mgs^0,\mgs^* \coloneqq \mgs\setminus\mgs^0)$ of $\mgs$. This definition is motivated by the previous discussion regarding the consistency of a bipartition of $\mgs$,  both with linear dependence among the MI instances and with all instances of SA. Indeed, any PMI as defined above is a flat of the MIA, and therefore respects linear dependence. Furthermore, it is guaranteed by construction to contain at least one entropy vector that satisfies all instances of SA, as it contains the corresponding face of the SAC$_\N$. In fact, the PMIs as defined are \textit{precisely} the bipartitions of $\mgs$ with these properties (see \cite{Hernandez-Cuenca:2019jpv} for more details).\footnote{\, Note however that even though PMIs satisfy both linear dependence and SA, most of them are in fact still unrealizable by any quantum state, because they violate SSA or other inequalities.  We will return to SSA in \S\ref{subsec:ssa_k_condition}.
}

For any given $\N$, we will denote the set of all PMIs by $\lat{PMI}$, and similarly, we will denote the set of all faces of the SAC$_{\N}$ by $\lat{SAC}$ 
(when we wish to specify $\N$ explicitly we will include it as a superscript, but to avoid cluttering the notation, we will often keep the $\N$ dependence implicit). As we will discuss in detail in \S\ref{sec:KC_PMIs}, these sets have the structure of a lattice,\footnote{\, A lattice is a particular type of partially ordered set that additional structural properties (see \Cref{def:lattice}).} hence the choice of notation. We will denote a PMI by $\pmi$, and with a slight abuse of notation, we will use the same symbol to refer to both its \textit{geometric description} as the set of vectors in entropy space that belong to that subspace, and to the corresponding \textit{combinatorial description} as a bipartition of $\mgs$. It should always be clear from context which particular representation of $\pmi$ we have in mind.  In a similar fashion, we will likewise use the notation $\face$ for the dual purpose of referring to a (geometric) face of the SAC$_{\N}$ as well as a (combinatorial) element of the set $\lat{SAC}$.

On the other hand, we will be careful to distinguish a PMI from the corresponding face of the SAC, since geometrically these are different objects. Nevertheless, for any fixed $\N$, there is the following bijection between the faces of the SAC and the PMIs:
\begin{align}
\label{eq:lambda}
    \mu :\quad  \lat{SAC}\; & \to\;\lat{PMI}\nonumber\\
     \face\; & \mapsto\; \pmi=\mu(\face)\coloneqq\text{span}(\face)\,,
\end{align}
with its inverse given by
\begin{align}
\label{eq:lambdainv}
    \mu^{-1} :\quad \lat{PMI}\; & \to\;\lat{SAC}\nonumber\\
     \pmi\; & \mapsto\; \face=\mu^{-1}(\pmi)=\pmi\cap \text{SAC}_{\N}\,.
\end{align}
We will prove in \S\ref{ssec:proof_lattice} that $\mu$ is not only a bijective map between $\lat{SAC}$ and $\lat{PMI}$, but that for any $\N$ it is in fact a lattice isomorphism.

We conclude this subsection with a mathematical observation about the instances of SA
that correspond (when saturated) to facets of the SAC$_{\N}$. We defined the SAC$_{\N}$ above as the intersection of a collection of halfspaces corresponding to some linear inequalities. In such a construction, an inequality corresponds to a facet of the resulting cone if it is \textit{non-redundant}, i.e., if it is not a positive linear combination of other inequalities in the set. We now prove that for any $\N$, the redundant instances of SA are precisely those of the form $\mi(\uI:\comp{\uI}) \geq 0$, which reduce to non-negativity of the entropy:\footnote{\, In deriving the r.h.s.\ of \eqref{eq:trivial_ins}, we used the fact that $\ent_{\uI\comp{\uI}}=0$, since $\uI\comp{\uI}=\psys{\N}$ comprises of the full system and the purifier.}
\begin{equation}
\label{eq:trivial_ins}
    0\leq\mi(\uI:\comp{\uI})=\ent_{\uI}+ \ent_{\comp{\uI}}-\ent_{\uI\comp{\uI}}=2\ent_{\uI}\,.
\end{equation}

\begin{lemma}\label{lem:facet}
For an arbitrary number of parties $\N$, a halfspace $\mi(\uI:\uK) \geq 0$ corresponds\footnote{\, In the sense that the span of the facet is the hyperplane $\mi(\uI:\uK)= 0$.} to a facet of the \emph{SAC}$_\N$ if and only if $\uK \neq \comp{\uI}$.
\end{lemma}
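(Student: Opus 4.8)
The plan is to prove both implications through the standard fact that, for a pointed full-dimensional polyhedral cone, the facets are in bijection with the irredundant inequalities of its defining system, where an inequality is redundant precisely when it is a nonnegative combination of the others. That the SAC$_\N$ has these properties is easy to see: the complete-graph cut vector $\ent_\uI = |\uI|\,|\comp{\uI}|$ satisfies every instance of SA strictly (one computes $\mi(\uI:\uK) = 2|\uI||\uK| > 0$), so it is an interior point and the cone is full-dimensional; and since $\ent_\uI = \tfrac12\,\mi(\uI:\comp{\uI})$ by \eqref{eq:trivial_ins}, the MI functionals span $\mathbb{R}^\D$, so the lineality space is trivial and the cone is pointed. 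It then suffices to decide, for each instance, whether it is a nonnegative combination of the remaining ones.

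For the direction $\uK=\comp{\uI}\Rightarrow$ not a facet, I would exhibit the trivial instance explicitly as such a combination. Writing $\comp{\uI}=\{o\}\sqcup\uJ$ for any single party $o\in\comp{\uI}$ and $\uJ\coloneqq\comp{\uI}\setminus\{o\}$ (assuming $|\comp{\uI}|\geq 2$; the case $|\comp{\uI}|=1$ is handled by the purification-symmetric splitting of $\uI$, and for $\N\geq 2$ one of the two always applies), a short computation using \eqref{eq:purification} gives
\begin{equation}
\mi(\uI:o)+\mi(\uI:\uJ)=\big(\ent_\uI+\ent_o-\ent_\uJ\big)+\big(\ent_\uI+\ent_\uJ-\ent_o\big)=2\ent_\uI=\mi(\uI:\comp{\uI})\,.
\end{equation}
Since $o\neq\comp{\uI}$ and $\uJ\neq\comp{\uI}$, the two instances on the left are distinct from the trivial one, so $\mi(\uI:\comp{\uI})\geq 0$ is redundant and hence not facet-defining.

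For the converse, $\uK\neq\comp{\uI}\Rightarrow$ facet, I would produce a witness vector $\vec\ent^\star\in\mathbb{R}^\D$ satisfying every instance of SA except $\mi(\uI:\uK)\geq0$ itself, i.e. with $\mi(\uI:\uK)(\vec\ent^\star)<0$ but $\mi_\alpha(\vec\ent^\star)\geq 0$ for every other SA instance $\mi_\alpha$. Evaluating a putative relation $\mi(\uI:\uK)=\sum_\alpha c_\alpha\,\mi_\alpha$ with $c_\alpha\geq 0$ on $\vec\ent^\star$ would then equate a negative number with a nonnegative one, a contradiction, so the instance is irredundant and thus facet-defining. The hypothesis enters essentially: $\uK\neq\comp{\uI}$ is exactly the statement that the third region $\uJ\coloneqq\psys{\N}\setminus(\uI\uK)$ is nonempty, and since $\ent_{\uI\uK}=\ent_\uJ$ by \eqref{eq:purification}, the target condition $\mi(\uI:\uK)<0$ reads $\ent_\uJ>\ent_\uI+\ent_\uK$, which has room to hold only because $\uJ\neq\varnothing$. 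I would seek $\vec\ent^\star$ as a permutation-symmetric vector, i.e. one depending on any $\uI'$ only through the triple $(|\uI'\cap\uI|,|\uI'\cap\uK|,|\uI'\cap\uJ|)$, which collapses the family of SA inequalities to a finite system on a function $g$ of three integer variables in which the target appears as a single linear relation to be driven negative while all others stay nonnegative.

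The main obstacle is the construction of this witness, and the subtlety is genuine: any SSA-satisfying (in particular, realizable) vector that saturates $\mi(\uI:\uK)$ necessarily saturates $\mi(\uI':\uK')$ for all $\uI'\subseteq\uI$, $\uK'\subseteq\uK$ as well, since factorization of $\rho_{\uI\uK}$ forces factorization of all submarginals, so no realizable vector can sit in the relative interior of the candidate facet. The witness must therefore violate SSA — which is allowed, as the SAC$_\N$ imposes only SA — and indeed the identity $\mi(\uI:\uK)=\mi(\uI:\uK')+\mi(\uI\uK':\uK'')-\mi(\uK':\uK'')$ for $\uK=\uK'\sqcup\uK''$ shows that $\mi(\uI:\uK)$ can be made negative with all sub-instances kept nonnegative precisely by letting $\mi(\uI\uK':\uK'')<\mi(\uK':\uK'')$, i.e. by violating strong subadditivity. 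Verifying that the permutation-symmetric $g$ can be chosen to meet all of these inequalities simultaneously is a finite linear-feasibility check, which I would settle either by an explicit formula or by perturbing a generic strictly subadditive seed so that the target slack is the unique one pushed below zero; this is the one real computation of the proof.
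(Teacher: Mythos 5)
Your setup and your treatment of the redundant direction are fine: full-dimensionality of the SAC$_\N$ via the complete-graph cut vector (every instance strictly positive), the standard facet-iff-irredundant correspondence for full-dimensional cones (pointedness is not actually needed here), and the explicit splitting $\mi(\uI:o)+\mi(\uI:\uJ)=2\ent_{\uI}=\mi(\uI:\comp{\uI})$ is exactly the paper's step (i), which disposes of the trivial instances by adding SA and AL.

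The genuine gap is in the forward direction, and it is not a small one: the entire content of that direction is the existence of the witness $\vec\ent^\star$, which you never construct. By Farkas' lemma for homogeneous systems, the existence of a vector satisfying all SA instances except $\mi(\uI:\uK)\geq 0$ is \emph{exactly equivalent} to the statement that $\vec\mi(\uI:\uK)$ is not a nonnegative combination of the other instances --- i.e., it is the lemma itself in dual form. Deferring it to ``a finite linear-feasibility check'' therefore leaves everything unproven, and your perturbation fallback is circular: whether a strictly subadditive seed can be perturbed so that only the target slack goes negative is again precisely the Farkas alternative for this same system. The symmetric-ansatz route could in principle be completed --- witnesses do exist, e.g.\ for $\N=2$, $\uI=\{A\}$, $\uK=\{B\}$ the vector $(\ent_A,\ent_B,\ent_{AB})=(1,1,3)$ works --- but you would have to exhibit and verify the formula for general $\uI,\uK,\N$, which is nontrivial bookkeeping over all reduced instances involving the purifier. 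The paper sidesteps witnesses entirely with a short coefficient-counting argument: every vector $\vec\mi$ has exactly two $+1$ components and one $-1$ component, so summing components of a putative relation $\vec\mi(\uI^*\!:\!\uK^*)=\sum_i\alpha_i\,\vec\mi(\uI^i\!:\!\uK^i)$ forces $\sum_i\alpha_i=1$; the $+1$ in the $\uI^*$ component then forces $\uI^i\in\{\uI^*,\uI^{*c}\}$ for every $i$, similarly for $\uK^*$, and disjointness together with $\uK^*\neq\uI^{*c}$ collapses every term to the target itself. That argument is self-contained and avoids the feasibility question on which your proof still rests; your observation that any witness must violate SSA is correct and a nice sanity check, but it does not substitute for producing one.
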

\begin{proof}
    We will prove the lemma in two steps.

    i) As shown in \eqref{eq:trivial_ins}, if $\uK = \comp{\uI}$, then SA reduces to the non-negativity of the entropy, which (as we already mentioned in \Cref{fn:nonnegS}) is trivially implied by adding \eqref{eq:basic_ineq1} and \eqref{eq:basic_AL}. 
    
    ii) Conversely, to prove that any instance of SA with $\uK \neq \comp{\uI}$ is not redundant, we will show that it cannot be obtained as a linear combination of other MI instances. Consider a particular instance $\mi(\uI^*\!:\!\uK^*)$ where $\uK^* \not= \uI^{*c}$, and suppose that it can be written as a positive linear combination of a collection of $n$ instances $\mi(\uI^i\!:\!\uK^i)$ for $i=1,\ldots,n$. By the argument in (i), we may without loss of generality assume that $\uK^i \not= (\uI^i)^c$ for all $i$. Denoting by $\vec\mi(\uI:\uK)$ the vector of coefficients of the entropies in the expression \eqref{eq:MI_def} of $\mi(\uI:\uK)$, i.e., the normal vector to the hyperplane \eqref{eq:mihyp} with the appropriate orientation, we can write\footnote{\, For all $i\leq n$, if either $\uI^i$ or $\uK^i$ includes the purifier, we use the purification symmetry \eqref{eq:purification} to rewrite $\mi(\uI^i\!:\!\uK^i)$ in terms of the entropies $\ent_{\J}$, with $\J\subseteq [\N]$. We similarly do this for $\uI^*$ and $\uK^*$.}
    \begin{align}
    \label{eq:vec_comb}
        \vec\mi(\uI^*\!:\!\uK^*) = \sum_{i=1}^n \alpha_i\; \vec\mi(\uI^i\!:\!\uK^i), \quad\text{where $\alpha_i > 0$\,.}
    \end{align}
    Notice that any vector $\vec\mi(\uI:\uK)$ has precisely two $+1$ components and one $-1$ component, so summing all the components of each vector on both sides of \eqref{eq:vec_comb} we get
    \begin{equation}
    \label{eq:alpha_cons}
    1=\sum_{i=1}^n \alpha_i \,.
    \end{equation}
    Consider now the component of $\vec\mi(\uI^*:\uK^*)$ corresponding to the entropy of $\uI^*$ (or $\uI^{*c}$ if $\uI^*$ includes the purifier), which must be $+1$. Due to \eqref{eq:alpha_cons}, the r.h.s.\ of \eqref{eq:vec_comb} can have a $+1$ in the $\uI^*$ (or $\uI^{*c}$) component only if every $\vec\mi(\uI^i:\uK^i)$ also has $+1$ in the $\uI^*$ (or $\uI^{*c}$) component. This means 
    \begin{equation} \label{eq:I_constraint}
        \uI^i=\uI^*\quad \text{or}\quad \uI^i = \uI^{*c} \quad \forall\;i\, ,
    \end{equation}
    and we can then rewrite \eqref{eq:vec_comb} as
    \begin{align}
    \label{eq:vec_comb2}
        \vec\mi(\uI^*:\uK^*) = \sum_{i:\,\uI^i = \uI^*} \alpha_i\; \vec\mi(\uI^*:\uK^i) + \sum_{i:\, \uI^i = \uI^{*c}} \alpha_i\; \vec\mi(\uI^{*c}:\uK^i)\,.
    \end{align}
   Following the same reasoning for $\uK^*$, we need to impose on \eqref{eq:vec_comb2} the analogue of \eqref{eq:I_constraint} for $\uK^*$. However, notice that the assumption $\uK^* \not= \uI^{*c}$ (along with  $\uI^* \cap \uK^{*} = \varnothing$) means that out of the four possible combinations of $(\uI^i,\uK^i)$, the only pair with disjoint arguments is the one with $\uI^i = \uI^{*}, \uK^i = \uK^{*}$, which means the sum trivializes to all terms having the form of the l.h.s.\ of \eqref{eq:vec_comb2}. This completes the proof. 
\end{proof}

\subsection{Realizability of PMIs, strong subadditivity, and Klein's condition}
\label{subsec:ssa_k_condition}

Having defined a PMI, we now turn to the notion of its realizability. We begin by reviewing the map that associates a PMI to an entropy vector in the SAC$_{\N}$ from \cite{Hernandez-Cuenca:2019jpv}: 
\begin{align}
\label{eq:pmimap}
\pmimap:\quad &\text{SAC}_{\N}\; \to\; \lat{PMI}\nonumber\\
& \vec\ent\; \mapsto\; \pmi=\pmimap(\vec\ent)= \bigcap_{\vec\ent\in\pmi'} \pmi' \,.
\end{align}
In practice, this map associates an entropy vector $\vec\ent\in\text{SAC}_{\N}$ to the PMI $\pmi$ of \textit{lowest dimension} that contains $\vec\ent$. Since the map $\pi$ is restricted to the SAC$_{\N}$, from now on we will always implicitly restrict to vectors of entropy space that belong to SAC$_{\N}$ (in particular, they are compatible with all instances of SA). The existence and uniqueness of such a PMI should be intuitively clear, and will be formalized and guaranteed by the aforementioned lattice structure of $\lat{PMI}$ in \S\ref{sec:KC_PMIs} (see \cite{Hernandez-Cuenca:2019jpv} for more details). 

Denoting by $\Omega$ an arbitrary class of quantum states, which can either be the set of all possible quantum states, or a more restricted class like the set of stabilizer states, we introduce the following definition.

\begin{defi}[Realizable PMI in $\Omega$] \label{def:realizable_pmi}
For a given class of states $\Omega$ and a \emph{PMI} $\pmi$, we say that $\pmi$ is \emph{realizable in $\Omega$} if there exists a state $\rho\in\Omega$ such that $\pmimap(\vec\ent(\rho)) = \pmi$.
\end{defi}

For any given $\N$, it easy to see that because of strong subadditivity, many PMIs are actually not realizable by any quantum state. The reason is simply the fact that for a PMI $\pmi$, it can happen that all entropy vectors in the interior of the corresponding face $\face=\mu^{-1}(\pmi)$ of the SAC violate at least one instance of SSA. 

Since SSA is a universal quantum mechanical inequality, it is satisfied by all classes of states of interest. Therefore, we should only focus on the set of PMIs that are not excluded by SSA, leading us to the following definition.

\begin{defi}[SSA-compatible PMI]
A \emph{PMI} $\pmi$ is said to be \emph{compatible} with \emph{SSA} if there exists an entropy vector $\vec\ent$ such that $\pmimap(\vec\ent) = \pmi$ and $\vec\ent$ satisfies all instances of 
\emph{SSA}.
\end{defi}

In light of this definition, one would like to further characterize the set of SSA-compatible PMIs, and one way to proceed in this direction would be to construct a new entropy cone, carved by all instances of SA and SSA, and consider its faces. However, this would give rise to a rich, yet irrelevant, structure, since there would be numerous faces which would correspond to the same PMI.\footnote{\, As a simple example, consider the $\N=3$ case, where the cone specified by SA and SSA (which is the full QEC$_3$) has the extreme ray $\vec\ent=(1,1,1,1,1,1,1)$, i.e., the entropy vector of a density matrix obtained from the $4$-party GHZ state by tracing out any one of the qubits. As one can immediately verify, no instance of SA is saturated by this entropy vector, and its PMI (given by \eqref{eq:pmimap}) is therefore the full space $\mathbb{R}^{\D}$. 
Moreover, a moment's thought reveals that $\mathbb{R}^{\D}$ is also the PMI of any vector in the interior of any face of QEC$_3$ which includes $\vec\ent$ on its boundary, since $\vec\ent$ is in the interior of the SAC$_3$. \label{ft:ssa}} Because of this, we will follow another direction, where we replace SSA with a weaker constraint (which we will call ``Klein's condition'') that can be formulated purely combinatorially via the faces of the SAC, without having to introduce any additional inequality. 

As we discussed in the previous subsection, the definition of a PMI takes into account the linear dependence among the instances of MI, as well as all instances of SA. However, we have not fully taken into account the fact that, as indicated in \eqref{eq:mi_vanishing}, the MI between two subsystems vanishes \textit{if and only if} the density matrix \textit{factorizes}. For completeness, let us first briefly review why this is the case. The easiest way to see this is via the application of Klein's inequality\footnote{\, For all positive-definite Hermitian matrices $\rho$ and $\sigma$, and for all differentiable convex functions $f:(0,\infty) \to \mathbb{R}$, the following inequality holds:
\begin{equation*}
    \text{Tr}(f(\rho)-f(\sigma)-(\rho-\sigma)f'(\sigma))\geq 0\,.
\end{equation*}
Furthermore, if $f$ is strictly convex, equality holds if and only if $\rho=\sigma$. For the quantum relative entropy, one obtains \eqref{eq:monrelent} and \eqref{eq:monrelentsat} by choosing $f(t)=t\;\text{log}\;t$.} to the quantum relative entropy: 
\begin{equation}
\label{eq:monrelent}
    {\sf R}\,(\rho\,||\,\sigma)\;=\;\text{Tr}\,\rho\,\text{log}\,\rho-\text{Tr}\,\rho\,\text{log}\,\sigma\; \geq \;0\qquad \forall\;\rho,\sigma\,,
\end{equation}
and in particular 
\begin{equation}
\label{eq:monrelentsat}
    {\sf R}\,(\rho\,||\,\sigma)\;=\; 0\qquad \Longleftrightarrow\qquad \rho=\sigma\,.
\end{equation}
Setting $\rho \equiv \rho_{\uI\uK}$ and $\sigma \equiv \rho_{\uI} \otimes \rho_{\uK}$, it is easy to see that this translates to \eqref{eq:mi_vanishing}. 

Suppose now that the MI instance $\mi(\uI:\uJ\uK)$ vanishes. We then have
\begin{align}
\label{eq:KC_implications}
    & \mi(\uI:\uJ\uK)=0\nonumber\\
    \implies \quad & \rho_{\uI\uJ\uK}=\rho_{\uI}\otimes\rho_{\uJ\uK}\nonumber\\
    \implies \quad & \rho_{\uI\uJ}=\rho_{\uI}\otimes\rho_{\uJ}\quad \text{and}\quad \rho_{\uI\uK}=\rho_{\uI}\otimes\rho_{\uK}\nonumber\\
    \implies \quad & \mi(\uI:\uJ)=0\quad \text{and}\quad \mi(\uI:\uK)=0 \,.
\end{align}
It is crucial to notice that this implication, which is respected by the instances of MI for any entropy vector of a density matrix, is independent from the constraints that we considered when we defined PMIs. Indeed, as was proven in \Cref{lem:facet}, the hyperplane $\mi(\uI:\uJ\uK)=0$ (for $\uI\uJ\uK \not= \psys{\N}$) supports a facet of the SAC, and is therefore by itself a PMI $\pmi$ (i.e., in the specification of this PMI we are assuming that no other instance of MI vanishes, so that in particular $\mi(\uI:\uJ) \not= 0$ and $\mi(\uI:\uK) \not= 0$). Any vector $\vec\ent$ in the \textit{interior} of the face $\face=\mu^{-1}(\pmi)$ is therefore not compatible with the implications in \eqref{eq:KC_implications}, and it cannot be the entropy vector of a density matrix. Thus, the implication in \eqref{eq:KC_implications} is a necessary condition for the PMI to be realizable by a quantum state. Since this restriction follows immediately from Klein's inequality, we will call it \textit{Klein's condition}.

\begin{defi}[KC-compatible PMIs]\label{def:kcv1}
    \emph{Klein's condition (KC)}  for a \emph{PMI} $\pmi$ specified by a set $\mgs^0$ is the requirement that for all $\uI,\uK$
    \begin{align}
    \label{eq:KC}
         \mi(\uI:\uK)\in\mgs^0 \quad\implies \quad 
            \mi(\uI':\uK')\in\mgs^0 \quad \forall \ 
            \uI' \subseteq \uI \ , \  \uK' \subseteq \uK \, .
    \end{align}
    If $\pmi$ satisfies \emph{KC}, then we call $\pmi$ a \emph{KC-compatible PMI}.
\end{defi}    

It is clear in general KC is weaker than SSA, since SSA can also be rewritten as \emph{monotonicity of mutual information}:
\begin{equation}
\label{eq:ssa}
\mi(\uI:\uJ\uK)-\mi(\uI:\uK)\geq 0 \,,
\end{equation}
from which \eqref{eq:KC} straightforwardly follows.\footnote{\,
More precisely, KC is implied not by a single instance of SSA, but rather by a collection of SSAs and SAs.} In particular, it is immediately obvious that for entropy vectors KC is \textit{strictly} weaker, since (for example) any entropy vector in the interior of the SAC trivially satisfies \eqref{eq:KC}, while not all these vectors satisfy SSA.
For PMIs, this means that KC is a necessary condition for SSA-compatibility. However, it is far from obvious that KC is not effectively equivalent to SSA for PMIs, since in this case SSA-compatibility of a KC-compatible PMI $\pmi$ only requires the existence of a \textit{single} entropy vector which satisfies SSA and whose PMI is $\pmi$. Indeed, one of our results will be that they are not equivalent even for PMIs (cf. \Cref{thm:KC_SSA}). However, as mentioned above, we still view the construction of the set of KC-compatible PMIs as a useful strategy for the derivation of the SSA-compatible ones, and for this reason from now on we will focus primarily on KC rather than on SSA. To this end, it will be useful to first formalize KC in a slightly different fashion, using the language of partially ordered sets. This will be the goal of the next subsection. 

We conclude this subsection by commenting on another condition that needs to be satisfied by a realizable PMI, but, as it turns out, is implied by SA. First recall that in quantum mechanics, the von Neumann entropy of a density matrix $\rho$ vanishes if and only if $\rho$ has unit rank, in which case all subsystems of $\rho$ have the same entropy as their complements (cf.\ \eqref{eq:purification}). Explicitly,
\begin{equation} \label{eq:purity}
    \ent_{\I}(\rho)=0 \qquad \implies \qquad \ent_{\K}=\ent_{\I\setminus\K},\quad \forall\,\K \subset \I\, . 
\end{equation}
However, this implication follows already from SA, in fact independently from the realizability of a PMI, since 
\begin{align}
\begin{split}
    & \ent_{\I\setminus\K}-\ent_{\K}+\ent_{\I}\geq 0\\
    - &\ent_{\I\setminus\K}+\ent_{\K}+\ent_{\I}\geq 0\,
\end{split}    
\end{align}
implies \eqref{eq:purity} after setting $\ent_{\I}=0$.

\subsection{The mutual information poset}
\label{subsec:mi_poset}

In this subsection, we reformulate KC (introduced in \Cref{def:kcv1}) in the language of order theory, by exploiting a certain partial order on the set of MI instances for any fixed $\N$. We begin by reviewing the definition of a partially ordered set (for more details we refer the reader to \cite{birkhoff1967lattice} and \cite{davey1990introduction}). 

Given a set $\mathcal{P}$, a \textit{partial order} on it is a binary relation $\preceq$ which is reflexive, antisymmetric, and transitive, i.e., for any $x,y,z\in \mathcal{P}$ we have
\begin{align}
\begin{split}
    & x\preceq x \\
    & x\preceq y\;\; \text{and}\;\; y\preceq x\quad \implies\quad x=y \\
    & x\preceq y\;\; \text{and}\;\; y\preceq z\quad \implies\quad x\preceq z \,.
\end{split}
\end{align}
A partially ordered set $(\mathcal{P},\preceq)$, or \textit{poset}, is a set $\mathcal{P}$ with a partial order $\preceq$. A partial order on $\mathcal{P}$ gives rise to the following relation $\prec$ of \textit{strict inequality}: 
\begin{equation}
    x\prec y\quad \Longleftrightarrow\quad x\preceq y\;\; \text{and}\;\; x\neq y \,.
\end{equation}
Two elements $x,y\in \mathcal{P}$ are said to be \emph{incomparable} if $x\npreceq y$ and $y\npreceq x$. 

We define the \textit{mutual information poset} as follows:

\begin{defi}[Mutual information poset (MI-poset)]\label{def:MIposet}
    For a given number of parties $\N$, the \emph{mutual information poset} $(\mgs,\preceq)$ is the set $\mgs$ of instances of \emph{MI} with the partial order given by\footnote{\, The two alternatives on the r.h.s.\ are necessary due to the fact that the MI is symmetric, and the partial order that we want to introduce here is insensitive to the ordering of the argument.}
    \begin{equation}
    \label{eq:po-def}
    \mi(\uI:\uK) \preceq \mi(\uI':\uK') \quad \iff\quad \uI\subseteq\uI'\; \text{\emph{and}}\; \uK\subseteq\uK'\quad  \text{\emph{or}}\quad  \uI \subseteq \uK'\; \text{\emph{and}}\; \uK \subseteq \uI'\,.
\end{equation}
\end{defi}
For given $\N$ the MI-poset has cardinality
\begin{equation}
\label{eq:e_def}
    \E=\stirlingII{\N+2}{3}\,,
\end{equation}
where we have already modded out by the symmetry between the arguments of MI and $\stirlingII{n}{k}$ is the Stirling number of the second kind.\footnote{\, In \cite{Hubeny:2018ijt,Hernandez-Cuenca:2019jpv} the number of MI instances was $3\stirlingII{\N+1}{3}$ because the trivial instances $\mi(\uI:\comp{\uI})$ were not included. Indeed one can immediately check that $3\stirlingII{\N+1}{3}+2^{\N}-1=\stirlingII{\N+2}{3}$.} The strict inequality 
\begin{equation}
    \mi(\uI:\uK) \prec \mi(\uI':\uK') 
\end{equation}
is attained when at least one of the inclusions is strict, i.e, when any of the $\subseteq$ in \eqref{eq:po-def} is replaced with $\subset$.

A convenient way of representing a poset, at least when the number of elements is not too large, is via a \textit{Hasse diagram}, whose construction we now review. In a poset $(\mathcal{P},\preceq)$, an element $x$ is said to be \textit{covered by} an element $y$, which we denote as $x\covb y$, if 
\begin{equation}
    \label{eq:cover}
    x\prec y\qquad \text{and}\qquad x\preceq z \prec y\;  \implies \;  z=x \,.
\end{equation}
Equivalently, if this implication holds, $y$ is said to \textit{cover} $x$, and we write $y\cov x$. Essentially, this means that there are no elements between $x$ and $y$ distinct from $x$ and $y$ themselves. The Hasse diagram of a poset $(\mathcal{P},\preceq)$ is a representation of the cover relation between the elements of the poset. It is obtained by first drawing a vertex for each element in the set $\mathcal{P}$, in such a way that for each pair $x,y$, if $x\prec y$ then the vertex for $x$ is drawn below the vertex for $y$. One then connects the vertices for $x$ and $y$ only if $x\covb y$.
As an example, we have drawn the Hasse diagram of the MI-poset for $\N=2$ in \Cref{fig:MIPosetN2}. 

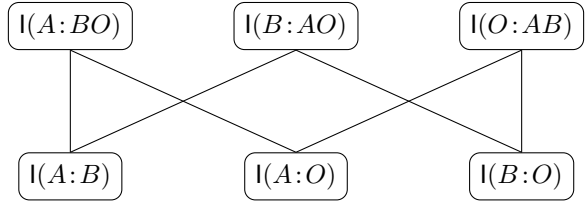
\begin{figure}[tb]
    \centering
    \begin{tikzpicture}
    \node[draw, rounded corners] (abo) at (-3,2) {{\footnotesize $\mi(A\!:\!BO)$}};
    \node[draw, rounded corners] (bao) at (0,2) {{\footnotesize $\mi(B\!:\!AO)$}};
    \node[draw, rounded corners] (oab) at (3,2) {{\footnotesize $\mi(O\!:\!AB)$}};
    \node[draw, rounded corners] (ab) at (-3,0) {{\footnotesize $\mi(A\!:\!B)$}};
    \node[draw, rounded corners] (ao) at (0,0) {{\footnotesize $\mi(A\!:\!O)$}};
    \node[draw, rounded corners] (bo) at (3,0) {{\footnotesize $\mi(B\!:\!O)$}};
    \draw[-] (abo.south) -- (ab.north);
    \draw[-] (abo.south) -- (ao.north);
    \draw[-] (bao.south) -- (ab.north);
    \draw[-] (bao.south) -- (bo.north);
    \draw[-] (oab.south) -- (ao.north);
    \draw[-] (oab.south) -- (bo.north);
    \end{tikzpicture}
    \caption{The Hasse diagram for the MI-poset for $\N=2$.}
    \label{fig:MIPosetN2}
\end{figure}

Given a poset $(\mathcal P, \preceq)$, a special class of subsets that will be of interest to us are the \textit{down-sets} (sometimes also called \textit{order ideals}). A subset $\mathcal{Q}\subseteq\mathcal{P}$ is a down-set if for all $x\in\mathcal{Q}$ and $y\in\mathcal{P}$,
\begin{equation}
\label{eq:ds_def}
	y \preceq x\quad \implies \quad y\in \mathcal{Q}\,.
\end{equation}
Using this definition, we can then rephrase KC for PMIs conveniently as follows.

\begin{lemma}
\label{lem:KCdownset}
    A \emph{PMI} is \emph{KC}-compatible if and only if its set $\mgs^0$ of vanishing \emph{MI} instances is a down-set in the \emph{MI}-poset.
\end{lemma}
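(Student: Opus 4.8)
The plan is to prove the biconditional by unwinding both characterizations against the common notion of containment of arguments, the bridge being the observation that the covering structure \eqref{eq:po-def} of the MI-poset encodes exactly the componentwise inclusion of arguments (up to the symmetry of MI) that appears in the statement \eqref{eq:KC} of KC. Throughout I would keep in mind the standing conventions that the arguments of any MI instance are nonempty (cf.\ \Cref{fn:nonempty}) and disjoint, so that $\uI' \subseteq \uI$ and $\uK' \subseteq \uK$ automatically yield a legitimate element $\mi(\uI':\uK') \in \mgs$; I would also recall that $\mgs$ stores only one representative of each symmetric pair, so that both KC and $\preceq$ must be read as statements about elements of $\mgs$ rather than about ordered pairs.

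For the direction ``down-set $\implies$ KC'', I would assume $\mgs^0$ is a down-set, fix any $\mi(\uI:\uK) \in \mgs^0$, and take arbitrary $\uI' \subseteq \uI$ and $\uK' \subseteq \uK$. The first alternative in \eqref{eq:po-def} gives at once $\mi(\uI':\uK') \preceq \mi(\uI:\uK)$, whence the defining property \eqref{eq:ds_def} of a down-set forces $\mi(\uI':\uK') \in \mgs^0$; this is precisely the implication \eqref{eq:KC}. For the converse, I would assume KC, fix $\mi(\uI:\uK) \in \mgs^0$, and take any $\mi(\uI':\uK') \preceq \mi(\uI:\uK)$, the goal being membership in $\mgs^0$. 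By \eqref{eq:po-def} one of two alternatives holds. If $\uI' \subseteq \uI$ and $\uK' \subseteq \uK$, then KC applied directly to $\mi(\uI:\uK)$ yields $\mi(\uI':\uK') \in \mgs^0$. If instead $\uI' \subseteq \uK$ and $\uK' \subseteq \uI$, I would apply KC to the \emph{same} element of $\mgs^0$ written in its other representation $\mi(\uK:\uI)$: taking the first argument to be $\uK$ and the second to be $\uI$, the inclusions $\uI' \subseteq \uK$ and $\uK' \subseteq \uI$ again place $\mi(\uI':\uK')$ in $\mgs^0$. Either way $\mgs^0$ satisfies \eqref{eq:ds_def}.

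The only point demanding care --- and the place where a naive argument could go wrong --- is the symmetry of MI, manifested as the ``or'' in the partial order \eqref{eq:po-def} and as the choice of representative for an element of $\mgs$. The content of the lemma is essentially that this ``or'' is exactly matched by the freedom, implicit in reading KC as a condition on $\mgs$, to apply \eqref{eq:KC} with the arguments of a given vanishing instance in either order. I would therefore state explicitly at the outset that $\mi(\uI:\uK)$ and $\mi(\uK:\uI)$ denote one and the same element of $\mgs$, so that asserting $\mi(\uI:\uK) \in \mgs^0$ licenses the use of KC with either ordering of its arguments. With that remark in place, both directions reduce to the transparent matching of inclusions above, and no genuine computation is required.
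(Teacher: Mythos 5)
Your proposal is correct and takes essentially the same route as the paper's proof, which likewise just matches the implication \eqref{eq:KC} against the partial order \eqref{eq:po-def} in both directions. The only difference is that you spell out explicitly how the second alternative in \eqref{eq:po-def} (the symmetric case $\uI' \subseteq \uK$, $\uK' \subseteq \uI$) is handled by reading the same element of $\mgs^0$ with its arguments swapped --- a point the paper leaves implicit in the word ``automatically'' --- and this added care is harmless and arguably an improvement.
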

\begin{proof}
    If $\mgs^0$ is a down-set in the MI-poset, then \eqref{eq:KC} is automatically satisfied given the partial order relation \eqref{eq:po-def} for MIs. Hence, $\pmi$ satisfies KC by \Cref{def:kcv1}. Conversely, if $\pmi$ is KC-compatible, then $\eqref{eq:KC}$ is precisely the statement that $\mgs^0$ is a down-set with partial order relation \eqref{eq:po-def}, thus proving our claim.
\end{proof}
In what follows, given a KC-compatible PMI, we will often refer to the corresponding down-set of vanishing MI instances as the \emph{$\zds$}.

We conclude this subsection with a few general comments about the structure of the MI-poset, which necessitates us to introduce a few more definitions from the theory of partially ordered sets. A \textit{chain} in a poset $(\mathcal{P},\preceq)$ is a subset $\mathcal{Q}\subseteq\mathcal{P}$ that is totally ordered, i.e., it does not contain any pair of incomparable elements, and takes the form (with $q=|\mathcal{Q}|$)
\begin{equation}
    x_1\prec x_2\prec\ldots\prec x_q \,.
\end{equation}
A chain is \textit{maximal} if it is not contained in any other chain. The \textit{length} of a chain is the number of its elements minus one (the number of ``jumps''), and the \textit{height} $h(x)$ of an element $x$ is the maximal length of a chain whose greatest element is $x$. In the case of the MI-poset, one can easily verify that the height function is given by 
\begin{equation}
\label{eq:height}
    h(\mi(\uI:\uK))=|\uI|+|\uK| - 2 \,,
\end{equation}
where $|\uI|$ and $|\uK|$ are the cardinalities of $\uI$ and $\uK$ respectively. It follows that for any given $\N$, the height function of the MI-poset is bounded by $0\leq h(x)\leq \N-1$. 

A poset $(\mathcal{P},\preceq)$ is said to satisfy the \textit{Jordan-Dedekind chain condition} (JDCC) if given any two elements $x,y \in \mathcal{P}$, all maximal chains with endpoints $x,y$ have the same length. It is easy to see that the MI-poset satisfies the JDCC for any $\N$, since given any two instances $\mi(\uI:\uK)\prec\mi(\uI':\uK')$, one obtains a maximal chain with these endpoints as a sequence of elements obtained by removing a single party at each step. Removing the parties in different orders will yield different maximal chains, but they all clearly have the same length.

A poset $(\mathcal{P},\preceq)$ is said to be \textit{graded} if there exists a function $g:(\mathcal{P},\preceq)\to\mathbb{N}$ such that
\begin{align}
\label{eq:garding}
    & x\succ y\quad \implies\quad g(x)> g(y)\nonumber\\
    & x \cov y\quad \implies\quad g(x)=g(y)+1 \,.
\end{align}
Since the MI-poset satisfies the JDCC, it follows that it is graded, and in particular that it is graded by its height function (see \cite{birkhoff1967lattice}).  

A particularly useful type of a poset, which will play a crucial role in what follows, is a \textit{lattice}, defined as:
\begin{defi}[Lattice]\label{def:lattice}
    A poset $(\mathcal{P},\preceq)$ is called a \emph{lattice} if for any two elements $x,y\in\mathcal{P}$, there exist both the least upper bound $x \vee y$ (called the \emph{join}) as well as the greatest lower bound $x \wedge y$ (called the \emph{meet}).
\end{defi}

One may then wonder if the MI-poset itself admits this additional structure. One can see that the MI-poset is \textit{not} a lattice for two different reasons. The simpler reason is that it is a finite poset, but it has neither a \emph{top} (an element which is greater than any other element) nor a \emph{bottom} (an element which is less than any other element). This implies that the maximal elements have no join, and that the minimal elements have no meet. However, this can be trivially fixed by formally adding these missing elements, and in the $\N=2$ case illustrated in \Cref{fig:MIPosetN2}, it is easy to check that by adding a formal top and bottom, one does obtain a lattice. However, for $\N\geq 3$, adding a top and a bottom is not sufficient to yield a lattice because of a deeper issue.

To see why this is the case, it is sufficient to consider the subset of the $\N=3$ MI-poset shown in \Cref{fig:MIPoset-nonlattice}. The elements $\mi(AB\!:\!CO)$ and $\mi(AC\!:\!BO)$ are common upper bounds to $\mi(A\!:\!O)$ and $\mi(B\!:\!C)$, but neither is the least upper bound as they are incomparable. Similarly, $\mi(A\!:\!O)$ and $\mi(B\!:\!C)$ are common lower bounds to $\mi(AB\!:\!CO)$ and $\mi(AC\!:\!BO)$, but neither is the greatest lower bound. This is sufficient to prove that the $\N=3$ MI-poset is not a lattice. Finally, notice that the subset in \Cref{fig:MIPoset-nonlattice} also appears as a subdiagram of the Hasse diagram of the MI-poset for any $\N>3$, and any new element that may appear cannot be less than both $\mi(AB\!:\!CO)$ and $\mi(AC\!:\!BO)$, since it involves new parties. This implies that the MI-poset cannot be turned into a lattice by adding top and bottom elements for any $\N\geq 3$. Nevertheless, while the MI-poset does not have the structure of lattice, in the next section we will see that one can build from it useful constructs which do have such a structure.

\begin{figure}[tb]
    \centering
    \begin{tikzpicture}
    \node[draw, rounded corners,fill=red!20!] (abco) at (-2,4) {{\footnotesize $\mi(AB\!:\!CO)$}};
    \node[draw, rounded corners,fill=red!20!] (acbo) at (2,4) {{\footnotesize $\mi(AC\!:\!BO)$}};
    \node[draw, rounded corners] (abo1) at (-6.3,2) {{\scriptsize $\mi(AB\!:\!O)$}};
    \node[draw, rounded corners] (aco1) at (-4.5,2) {{\scriptsize $\mi(AC\!:\!O)$}};
    \node[draw, rounded corners] (abo2) at (-2.7,2) {{\scriptsize $\mi(A\!:\!BO)$}};
    \node[draw, rounded corners] (aco2) at (-0.9,2) {{\scriptsize $\mi(A\!:\!CO)$}};
    \node[draw, rounded corners] (abc) at (0.9,2) {{\scriptsize $\mi(AB\!:\!C)$}};
    \node[draw, rounded corners] (boc) at (2.7,2) {{\scriptsize $\mi(BO\!:\!C)$}};
    \node[draw, rounded corners] (bac) at (4.5,2) {{\scriptsize $\mi(B\!:\!AC)$}};
    \node[draw, rounded corners] (bco) at (6.3,2) {{\scriptsize $\mi(B\!:\!CO)$}};
    \node[draw, rounded corners,fill=blue!20!] (ao) at (-2,0) {{\footnotesize $\mi(A\!:\!O)$}};
    \node[draw, rounded corners,fill=blue!20!] (bc) at (2,0) {{\footnotesize $\mi(B\!:\!C)$}};
    \draw[-] (abo1.south) -- (ao.north);
    \draw[-] (aco1.south) -- (ao.north);
    \draw[-] (abo2.south) -- (ao.north);
    \draw[-] (aco2.south) -- (ao.north);
    \draw[-] (abc.south) -- (bc.north);
    \draw[-] (boc.south) -- (bc.north);
    \draw[-] (bac.south) -- (bc.north);
    \draw[-] (bco.south) -- (bc.north);
    \draw[-] (abco.south) -- (abo1.north);
    \draw[-] (abco.south) -- (aco2.north);
    \draw[-] (abco.south) -- (abc.north);
    \draw[-] (abco.south) -- (bco.north);
    \draw[-] (acbo.south) -- (aco1.north);
    \draw[-] (acbo.south) -- (abo2.north);
    \draw[-] (acbo.south) -- (boc.north);
    \draw[-] (acbo.south) -- (bac.north);
    \end{tikzpicture}
    \caption{A subset of the $\N=3$ MI-poset, which shows that it is not a lattice. The elements in red are common upper bounds of the ones in blue, but neither is the least upper bound. Similarly, neither of the blue elements is the greatest lower bound of the red ones.}
    \label{fig:MIPoset-nonlattice}
\end{figure}
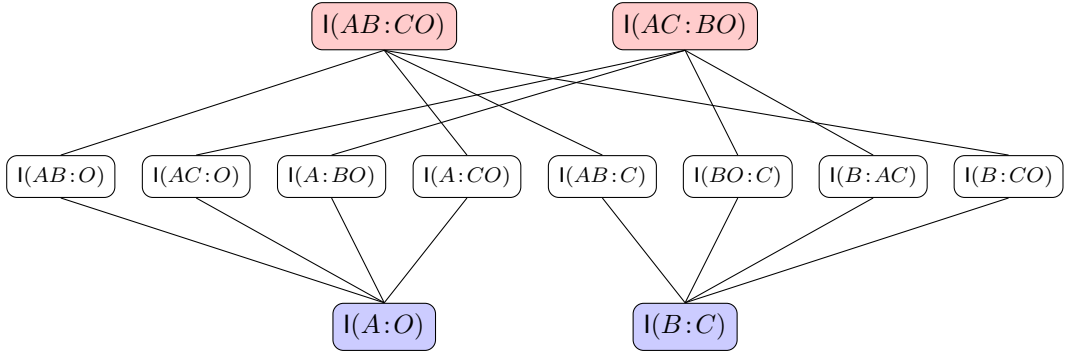

\section{The lattice of PMIs satisfying Klein's condition}
\label{sec:KC_PMIs}

In this section, we prove the main results of this work. We begin in \S\ref{ssec:proof_lattice} by describing the lattice $\lat{PMI}$ of all PMIs, building up from the set $\lat{SAC}$ of all the faces of the SAC$_{\N}$. This will allow us to show that the subset of KC-compatible PMIs, which we denote by $\lat{KC}$, is also a lattice. We then present $\lat{KC}$ for $\N=3$ in \S\ref{ssec:lattice3}, where we also discuss its structural properties in detail, and further comment on some of these properties for $\N=4$ in \S\ref{ssec:lattic4}. Using these observations, combined with a canonical construction of $\N'$-party PMIs from $\N$-party PMIs (with $\N'>\N$), we then prove in \S\ref{ssec:latticN} some general results about the structure of $\lat{KC}$ for an arbitrary number of parties. In particular, we prove a theorem about the structure of the $\zds$ of the KC-compatible extreme rays of the SAC$_{\N}$, which largely simplifies their explicit computation. Finally, in \S\ref{subsec:constructions}, we discuss generalizations of the canonical construction introduced in \S\ref{ssec:latticN}.

\subsection{The set of KC-compatible PMIs is a lattice}
\label{ssec:proof_lattice}

\subsubsection{SAC lattice \texorpdfstring{$\lat{SAC}$}{}}
\label{sss:SAClat}

We begin by explaining that the set $\lat{SAC}^\N$ of all the faces of the SAC$_{\N}$ for any fixed $\N$ is a lattice, and in particular we will describe the join and meet operations. While we will focus on the SAC, it is true that the set of faces of any polyhedral cone forms a lattice, and we will mainly follow \cite{MR1226888,z-lop-95}. 

Given $\lat{SAC}^\N$, we introduce the following partial order on this set: 
\begin{equation}
\label{eq:face_order}
    \face\preceq \face'\quad \Longleftrightarrow \quad \face\supseteq \face' \,.
\end{equation}
Notice that this is reverse inclusion, rather than inclusion.\footnote{\, Both inclusion and reverse inclusion are common in the literature. We prefer reverse inclusion for consistency with \cite{Hernandez-Cuenca:2019jpv} and because it allows for some of the results in later sections to take a more natural form.} To prove that this poset is a lattice, it will be convenient to use the following result from lattice theory.\footnote{\, Of course, the \textit{dual} of \Cref{thm:lattice_condition}, where ``bottom'' is replaced by ``top'' and ``join'' is replaced by ``meet'' also holds. We will use this dual version later in this subsection.} 

\begin{thm}
\label{thm:lattice_condition}
    A finite poset $(\mathcal{P},\preceq)$ is a lattice if and only if it has a bottom, and the join $x\vee y$ exists for every pair of elements $x,y\in\mathcal{P}$.
\end{thm}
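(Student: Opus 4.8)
The plan is to prove both implications, with essentially all the content residing in the reverse direction. For the forward implication, suppose $(\mathcal{P},\preceq)$ is a finite lattice. Then $x\vee y$ exists for every pair by the very definition of a lattice, and a bottom element is obtained as $\bigwedge\mathcal{P}$, the meet of all (finitely many) elements of $\mathcal{P}$: iterating the binary meet over a finite set yields a well-defined greatest lower bound of the whole poset, which is then smaller than every element. This direction requires no real work.

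For the reverse implication, assume $\mathcal{P}$ is finite, possesses a bottom element $\hat 0$, and satisfies the property that $x\vee y$ exists for all $x,y\in\mathcal{P}$; the goal is to produce the meet $x\wedge y$ for an arbitrary pair. First I would establish the auxiliary fact that, granted the existence of all binary joins, every nonempty finite subset $S\subseteq\mathcal{P}$ has a least upper bound $\bigvee S$. This follows by induction on $|S|$: writing $S=\{s_1,\dots,s_n\}$, one sets $j=\left(\bigvee\{s_1,\dots,s_{n-1}\}\right)\vee s_n$ and checks directly that $j$ is an upper bound of $S$, and that any upper bound $u$ of $S$ must dominate both $\bigvee\{s_1,\dots,s_{n-1}\}$ and $s_n$, hence dominates their join $j$, so that $j=\bigvee S$.

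The key step is then the standard ``join of the lower bounds'' construction. Given $x,y\in\mathcal{P}$, let $L(x,y)=\{z\in\mathcal{P}:z\preceq x\ \text{and}\ z\preceq y\}$ denote the set of common lower bounds. This set is nonempty, since $\hat 0\in L(x,y)$, and finite, since $\mathcal{P}$ is; hence by the auxiliary fact $m\coloneqq\bigvee L(x,y)$ exists. I would then verify $m=x\wedge y$ in two steps. First, $m$ is itself a common lower bound of $x$ and $y$: since every $z\in L(x,y)$ satisfies $z\preceq x$ and $z\preceq y$, both $x$ and $y$ are upper bounds of the set $L(x,y)$, so the \emph{least} upper bound $m$ obeys $m\preceq x$ and $m\preceq y$, placing $m\in L(x,y)$. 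Second, $m$ is the greatest common lower bound, because any $z\in L(x,y)$ satisfies $z\preceq m$ by definition of $m$ as an upper bound of $L(x,y)$. Thus the meet exists for every pair, and $\mathcal{P}$ is a lattice.

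The main obstacle — indeed the only conceptual subtlety — is the inference in the first verification step that $x$ and $y$ are themselves upper bounds of $L(x,y)$ and therefore dominate its join; this is precisely where the hypotheses are deployed in the slightly counterintuitive direction of manufacturing meets out of joins. The finiteness of $\mathcal{P}$ together with the existence of $\hat 0$ is exactly what guarantees that $\bigvee L(x,y)$ is well-defined (nonempty and finite), and everything else reduces to routine bookkeeping with the relation $\preceq$.
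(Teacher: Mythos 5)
Your proof is correct, and it is precisely the standard ``join of all common lower bounds'' argument found in Theorem 2.31 of Davey--Priestley, which is exactly what the paper cites in lieu of giving a proof. Since the paper defers entirely to that reference, you have simply supplied in full the same argument the paper points to, with all steps (the induction to finite joins, nonemptiness via the bottom element, and the two-step verification that $\bigvee L(x,y)$ is the meet) handled correctly.
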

\begin{proof}
   See for example Theorem 2.31 of \cite{davey1990introduction}.
\end{proof}

This theorem immediately implies that the set $\lat{SAC}$, with partial order \eqref{eq:face_order}, is a lattice. The bottom is the codimension-$0$ face that corresponds to the full cone, and the join is given by 
\begin{equation}
\label{eq:join_faces}
    \face\vee \face' = \face''\qquad \text{with}\qquad \face''=\face\cap \face' \,.
\end{equation}
To see that this is indeed a join, first notice that the face $\face''=\face\cap \face'$ is a common upper bound of $\face$ and $\face'$, since $\face'' \subseteq \face$ and $\face'' \subseteq \face'$, and so $\face\preceq\face''$ and $\face'\preceq\face''$. Furthermore, it is the least upper bound, since $\face \cap \face'$ is the largest dimensional face contained in both $\face$ and $\face'$, so any other face $\face'''$ contained in both necessarily obeys $\face''' \subseteq \face''$, or equivalently $\face''' \succeq \face''$. 

In what follows, it will be also useful to consider a representation of the faces, which is particularly convenient to describe the join. Since we are dealing with polyhedral cones, each face $\face$ can be represented by the set $\widehat{\face}$ of extreme rays that generate it,\footnote{\, Any face is the conical hull, i.e., a linear combination with non-negative coefficients, of the  extreme rays on its boundary.} and in terms of these sets, \eqref{eq:join_faces} can then be rewritten as
\begin{equation}
\label{eq:join_face_er}
    \face\vee \face' = \face''\qquad \text{with}\qquad \widehat{\face}''=\widehat{\face}\cap \widehat{\face}' \,.
\end{equation}
From now on, by $\lat{SAC}^\N$ we mean not just the \textit{set} of all faces of the SAC$_\N$, but the \textit{lattice} given by the partial order \eqref{eq:face_order} on such a set. 

While the existence of the join suffices to prove that $\lat{SAC}$ is a lattice by \Cref{thm:lattice_condition}, for our purposes it will be useful to also determine the meet operation explicitly. This is most easily done if we introduce the following notation (for more details see \S4.1 of \cite{MR1226888}). To every vector $\vec\ent$ in the SAC$_{\N}$ we can associate an element of $\{0,+\}^{\E}$ (with $\E$ the total number of MI instances given by \eqref{eq:e_def}), called a \textit{sign vector}, such that
\begin{align}
\label{eq:signed_vec}
    \sigma:\quad & \text{SAC}_{\N}\; \to\; \{0,+\}^{\E} \quad \nonumber\\
            & \vec\ent\; \mapsto\; \sigma(\vec\ent)=\{\,\text{sign}\,(\,\vec\ent\cdot\vec\mi(\uI^e:\uK^e)\,),\,\forall e\in [\E]\,\} \,,
\end{align}
where $[\E]={1,\ldots,\E}$, and $\vec\mi(\uI^e:\uK^e)$ is the vector of coefficients of the MI instance $\mi(\uI^e:\uK^e)$ introduced in the proof of \Cref{lem:facet}, or equivalently the vector normal to the hyperplane \eqref{eq:mihyp} and directed towards the positive halfspace \eqref{eq:mi_sa}.\footnote{\, Obviously we do not obtain negative signs because we are restricting the map $\sigma$ to the set of vectors that belong to the SAC$_{\N}$.
Also note that even though not all instances of MI in $\mgs$ correspond to facets of the SAC$_{\N}$, cf.\ \Cref{lem:facet}, to avoid introducing additional notation, we do not remove the redundant ones from the MI-poset. It should be clear that this choice has no effect on the following discussion and only affects the explicit form of the sign vectors.} 
Two vectors in the SAC$_{\N}$ correspond to the same sign vector if and only if they belong to the \textit{interior} of the same face. We can therefore represent each face $\face$ of the SAC$_{\N}$ by the sign vector $\sigma(\vec\ent)$ of any vector $\vec\ent$ in the interior of $\face$, motivating us to define the sign vector \textit{of a face} as
\begin{equation}
\label{eq:signed_vec_face}
    \vece{\face}=\sigma(\vec\ent)\quad \text{for any}\quad \vec\ent\in\text{int}\,(\face) \, .
\end{equation}

Given the sign vectors $\vece\face,\vece\face'$ of two faces $\face,\face'$, we can then define the following (commutative) composition:
\begin{equation}
\label{eq:composition}
    (\vece\face\circ \vece\face')_e=
    \begin{cases}
    0 \quad &\text{if}\;\; \vece\face_e=\vece\face'_e= 0 \\
    + \quad &\text{otherwise}
    \end{cases}\, ,
\end{equation}
where $\vece\face_e$ is the $e$-th component of $\vece\face$. We can interpret this composition geometrically as follows: given two sign vectors $\vece\face,\vece\face'$, the sign vector in \eqref{eq:composition} is that of a new entropy vector $\vec\ent''\in\text{int}\,(\face'')$ obtained by averaging $\vec\ent$ and $\vec\ent'$. Using sign vectors, we can now conveniently express the meet operation in $\lat{SAC}$ as follows:
\begin{equation}
\label{eq:meet_face}
    \face\wedge\face'=\face''\qquad \text{with}\qquad \vece\face''=\vece\face\circ\vece\face' \,.
\end{equation}
The fact that this operation is indeed the meet follows from the observation that in the entropy space, the face $\face''$ obtained via this construction is the face of lowest dimension that contains both $\face$ and $\face'$ on its boundary. 

Whereas the representation in terms of the conical hull of extreme rays was useful in describing the join \eqref{eq:join_face_er}, it is no longer convenient for describing the meet. In particular, the meet of two faces is not in general the union of the sets of extreme rays that generate them, since the resulting face may include additional extreme rays.
Conversely, the description of a face by its sign vector is not convenient to express the join. In particular, the sign vector of the join of two faces cannot in general be obtained by \textit{only} assigning a zero component to $\vece\face''$ whenever $\vece\face$ or $\vece\face'$ have a zero component, since the face resulting from the join can belong to additional hyperplanes \eqref{eq:mihyp}, and therefore have additional zero components. A simple example of these operations and the role of different representations for the faces is shown in \Cref{fig:cartoon_lattice} for a non-simplicial cone in $\mathbb{R}^3$ (we warn the reader that \Cref{fig:cartoon_lattice} is included only for the purpose of illustrating the join and meet operations in the lattice of faces of a generic polyhedral cone, and that the figure has no relation to the SAC$_{\N}$ for any $\N$).

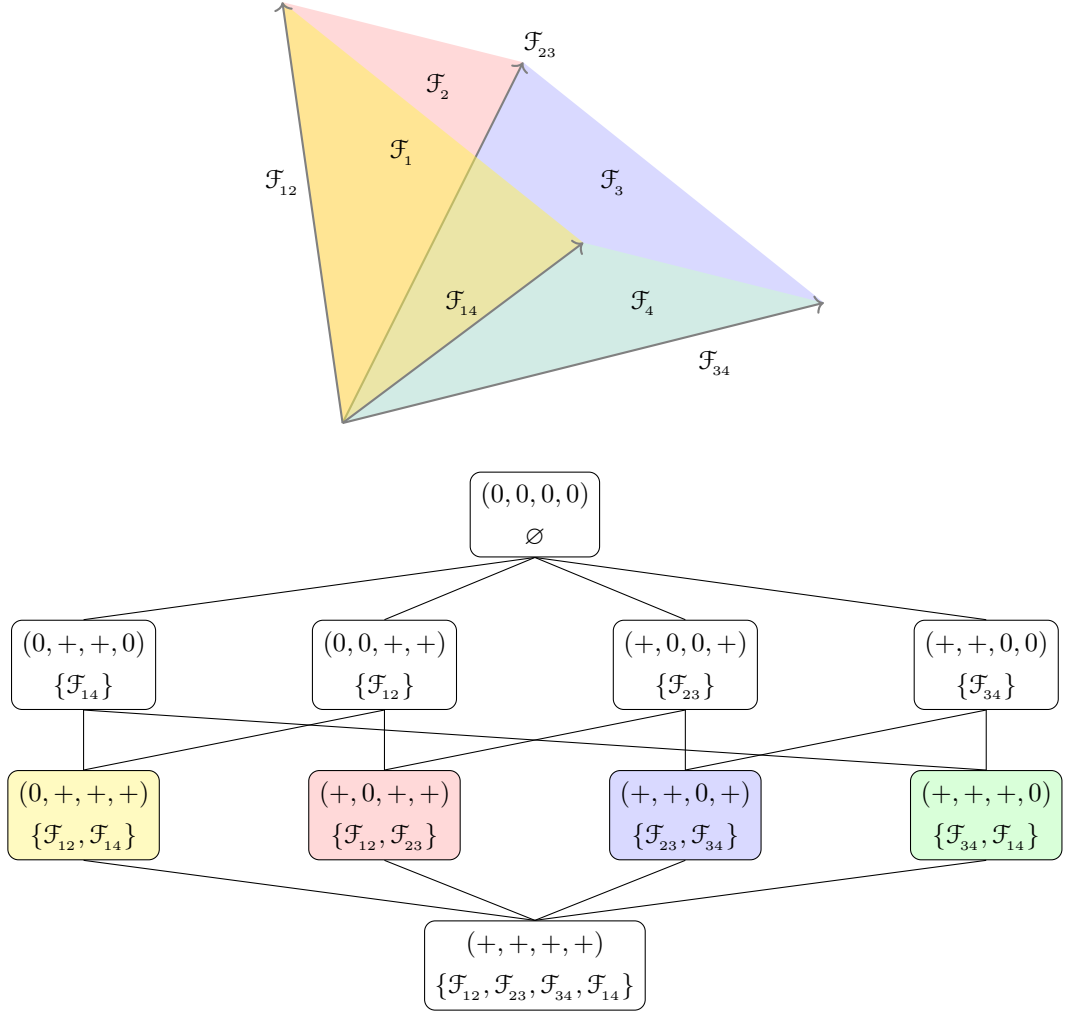
\begin{figure}[tbp]
    \centering
    \begin{subfigure}{0.9\textwidth}
    \centering
    \begin{tikzpicture}[scale=0.8]
    \fill[fill=blue!15!] (0,0) -- (8,2) -- (3,6) ;
    \fill[fill=red!15!] (0,0) -- (3,6) -- (-1,7) ;
    \draw[->, gray, thick] (0,0) -- (3,6) ;
    \fill[fill=yellow!80, opacity=0.5] (0,0) -- (-1,7) -- (4,3) ;
    \fill[fill=green!20, opacity=0.5] (0,0) -- (8,2) -- (4,3) ;
    \draw[->, gray, thick] (0,0) -- (4,3) ;
    \draw[->, gray, thick] (0,0) -- (-1,7) ;
    \draw[->, gray, thick] (0,0) -- (8,2) ;
    \node (f1) at (1,4.5) {{\footnotesize $\face_{_{\!1}}$}}; 
    \node (f2) at (1.6,5.6) {{\footnotesize $\face_{_{\!2}}$}}; 
    \node (f3) at (4.5,4) {{\footnotesize $\face_{_{\!3}}$}}; 
    \node (f4) at (5,2) {{\footnotesize $\face_{_{\!4}}$}};
    \node (f12) at (-1,4) {{\footnotesize $\face_{_{\!12}}$}}; 
    \node (f23) at (3.3,6.3) {{\footnotesize $\face_{_{\!23}}$}}; 
    \node (f34) at (6.2,1) {{\footnotesize $\face_{_{\!34}}$}}; 
    \node (f14) at (2,2) {{\footnotesize $\face_{_{\!14}}$}}; 
    \end{tikzpicture}
    \end{subfigure}
    
    \vspace{0.6cm}
    \begin{subfigure}{\textwidth}
    \centering
    \begin{tikzpicture}
    \node[draw, align=center, rounded corners] (point) at (0,2) {{\footnotesize $(0,0,0,0)$}\\{\footnotesize $\varnothing$}};
    \node[draw, align=center, rounded corners] (line1) at (-6,0) {{\footnotesize $(0,+,+,0)$}\\{\footnotesize $\{\face_{_{\!14}}\}$}};
    \node[draw, align=center, rounded corners] (line2) at (-2,0) {{\footnotesize $(0,0,+,+)$}\\{\footnotesize $\{\face_{_{\!12}}\}$}};
    \node[draw, align=center, rounded corners] (line3) at (2,0) {{\footnotesize $(+,0,0,+)$}\\{\footnotesize $\{\face_{_{\!23}}\}$}};
    \node[draw, align=center, rounded corners] (line4) at (6,0) {{\footnotesize $(+,+,0,0)$}\\{\footnotesize $\{\face_{_{\!34}}\}$}};
    \node[draw, align=center, rounded corners, fill=yellow!30!] (face1) at (-6,-2) {{\footnotesize $(0,+,+,+)$}\\{\footnotesize $ \{\face_{_{\!12}},\face_{_{\!14}}\}$}};
    \node[draw, align=center, rounded corners, fill=red!15!] (face2) at (-2,-2) {{\footnotesize $(+,0,+,+)$}\\{\footnotesize $ \{\face_{_{\!12}},\face_{_{\!23}}\}$}};
    \node[draw, align=center, rounded corners, fill=blue!15!] (face3) at (2,-2) {{\footnotesize $(+,+,0,+)$}\\{\footnotesize $ \{\face_{_{\!23}},\face_{_{\!34}}\}$}};
    \node[draw, align=center, rounded corners, fill=green!15!] (face4) at (6,-2) {{\footnotesize $(+,+,+,0)$}\\{\footnotesize $ \{\face_{_{\!34}},\face_{_{\!14}}\}$}};
    \node[draw, align=center, rounded corners] (cone) at (0,-4) {{\footnotesize $(+,+,+,+)$}\\{\footnotesize $\{\face_{_{\!12}},\face_{_{\!23}},\face_{_{\!34}},\face_{_{\!14}}\}$}};
    \draw[-] (point.south) -- (line1.north);
    \draw[-] (point.south) -- (line2.north);
    \draw[-] (point.south) -- (line3.north);
    \draw[-] (point.south) -- (line4.north);
    \draw[-] (line1.south) -- (face1.north);
    \draw[-] (line1.south) -- (face4.north);
    \draw[-] (line2.south) -- (face1.north);
    \draw[-] (line2.south) -- (face2.north);
    \draw[-] (line3.south) -- (face2.north);
    \draw[-] (line3.south) -- (face3.north);
    \draw[-] (line4.south) -- (face3.north);
    \draw[-] (line4.south) -- (face4.north);
    \draw[-] (face1.south) -- (cone.north);
    \draw[-] (face2.south) -- (cone.north);
    \draw[-] (face3.south) -- (cone.north);
    \draw[-] (face4.south) -- (cone.north);
    \end{tikzpicture}
    \vspace{0.4cm}
    \end{subfigure}
    \caption{An example of a $3$-dimensional non-simplicial polyhedral cone and its lattice of faces, ordered by reverse inclusion. We give two equivalent descriptions for each face: the sign vector from \eqref{eq:signed_vec_face}, and the set of extreme rays that generate it. Since the join of two faces is their intersection (cf.\ \eqref{eq:join_faces}), it is simply given by the intersection of sets of extreme rays that generate them (cf.\ \eqref{eq:join_face_er}).
    However, in terms of sign vectors, the join is not straightforward to compute, as it is \emph{not} the ``union'' of the sets of zero components (a component of the sign vector of the join of two faces can be zero even if the corresponding components for both faces are not). An example demonstrating this is $\face_{_{\!1}}\vee\face_{_{\!3}}$. 
    On the other hand, the representation in terms of sign vectors is convenient for computing the meet of two faces (cf.\ \eqref{eq:meet_face}), which is given by the ``intersection'' of the sets of zero components (a component is zero if and only if the corresponding components for both faces are zero). However, the meet is not given by the union of the sets of extreme rays that generate the faces, since the resulting face may contain additional extreme rays. An example demonstrating this is $\face_{_{\!1}}\wedge\face_{_{\!2}}$.}
    \label{fig:cartoon_lattice}
\end{figure}

\subsubsection{PMI lattice \texorpdfstring{$\lat{PMI}$}{}}
\label{sss:PMIlat}

Having described the join and meet operations in $\lat{SAC}$, we can now discuss the analogous structure in $\lat{PMI}$, the set of all PMIs. As we explained in \Cref{subsec:sac_pmi}, for any given $\N$ there is a bijection between the \emph{set} of faces of the SAC$_{\N}$ and the \emph{set} of PMIs that is given by the map $\mu$ defined in \eqref{eq:lambda} and \eqref{eq:lambdainv}. Starting from the set $\lat{PMI}$, we obtain a \emph{poset} by introducing the following partial order (again we use reverse inclusion, cf.\ \eqref{eq:face_order})
\begin{equation}
\label{eq:pmi_order}
    \pmi\preceq \pmi'\quad \Longleftrightarrow \quad \pmi\supseteq \pmi'\,.
\end{equation}
Since we have endowed both $\lat{SAC}$ and $\lat{PMI}$ with a partial order, we can then ask how the two posets are related, and intuitively it should already be clear that structurally they are identical. This intuition is confirmed by \Cref{lem:one-to-one}, where we show that the map $\mu$ is not just a bijective map between the two sets, but an \textit{order isomorphism}. Given two posets $(\mathcal{P},\preceq_{_{\mathcal{P}}})$ and $(\mathcal{Q},\preceq_{_{\mathcal{Q}}})$, an \emph{order isomorphism} is defined as a bijective map $\phi$ from $\mathcal{P}$ to $\mathcal{Q}$ such that for every $x,y \in \mathcal{P}$,
\begin{equation}
    x\preceq_{_{\mathcal{P}}} y\quad \Longleftrightarrow\quad \phi(x)\preceq_{_{\mathcal{Q}}} \phi(y)\,.
\end{equation}
In other words, order isomorphisms preserve partial order, and this is the reason why we can use the same symbol $\preceq$ for the partial order in both $\lat{SAC}$ and $\lat{PMI}$. It should be clear from context which one we mean.

\begin{lemma}\label{lem:one-to-one}
    The bijective map $\mu$ given in \eqref{eq:lambda} and \eqref{eq:lambdainv} is an order isomorphism between the posets $\lat{\emph{SAC}}$ and $\lat{\emph{PMI}}$, with partial order given respectively in \eqref{eq:face_order} and \eqref{eq:pmi_order}.
\end{lemma}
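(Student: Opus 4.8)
The plan is to build on the fact, already established in \Cref{subsec:sac_pmi}, that $\mu$ is a bijection from $\lat{SAC}$ onto $\lat{PMI}$ whose inverse is the map $\mu^{-1}$ of \eqref{eq:lambdainv}. Given this, the only thing left to check is that $\mu$ respects the two partial orders, both of which are reverse inclusion (cf.\ \eqref{eq:face_order} and \eqref{eq:pmi_order}). I would reduce the required biconditional $\face\preceq\face' \iff \mu(\face)\preceq\mu(\face')$ to two separate monotonicity statements, namely that $\mu$ is monotone and that $\mu^{-1}$ is monotone; combined with bijectivity these two statements immediately yield the biconditional.

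First I would verify monotonicity of $\mu$. If $\face\preceq\face'$, i.e.\ $\face\supseteq\face'$, then since every vector of $\face'$ lies in $\face$ and hence in its linear span, we have $\text{span}(\face)\supseteq\text{span}(\face')$, which is exactly $\mu(\face)\preceq\mu(\face')$. Next I would verify monotonicity of $\mu^{-1}$: if $\pmi\preceq\pmi'$, i.e.\ $\pmi\supseteq\pmi'$, then intersecting both sides with the fixed cone $\text{SAC}_{\N}$ preserves the inclusion, giving $\pmi\cap\text{SAC}_{\N}\supseteq\pmi'\cap\text{SAC}_{\N}$, that is $\mu^{-1}(\pmi)\preceq\mu^{-1}(\pmi')$. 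In other words, both maps are monotone simply because the linear span and the intersection with a fixed set are each monotone under inclusion.

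To conclude, I would assemble these two facts. The monotonicity of $\mu$ is precisely the forward implication of the biconditional. For the converse, assuming $\mu(\face)\preceq\mu(\face')$, I would apply the monotonicity of $\mu^{-1}$ to get $\mu^{-1}(\mu(\face))\preceq\mu^{-1}(\mu(\face'))$, and then use $\mu^{-1}\circ\mu=\mathrm{id}$ to rewrite this as $\face\preceq\face'$. I do not expect a genuine obstacle here: the only non-formal ingredient is the identity $\mu^{-1}\circ\mu=\mathrm{id}$, equivalently the statement that a face is recovered from its span as $\face=\text{span}(\face)\cap\text{SAC}_{\N}$, which is the standard property of faces of a polyhedral cone underlying the bijection in \eqref{eq:lambda}--\eqref{eq:lambdainv} and may therefore be taken for granted. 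The one point of care is bookkeeping: because both orders are \emph{reverse} inclusion, monotonicity with respect to $\preceq$ corresponds to monotonicity of $\text{span}$ and of $\cap\,\text{SAC}_{\N}$ with respect to ordinary $\supseteq$, and once this is tracked the argument is entirely formal.
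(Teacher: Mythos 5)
Your proposal is correct and follows essentially the same route as the paper: the paper's proof likewise establishes monotonicity of $\mu$ via the monotonicity of $\mathrm{span}$ under reverse inclusion, and monotonicity of $\mu^{-1}$ via intersection with the fixed cone $\mathrm{SAC}_{\N}$, with the biconditional then following from bijectivity. Your only addition is to spell out explicitly the final assembly step using $\mu^{-1}\circ\mu=\mathrm{id}$, which the paper leaves implicit.
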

\begin{proof}
   For the purpose of this proof, we will need to distinguish the partial orders in $\lat{SAC}$ and $\lat{PMI}$, and we respectively denote them as $\preceq_{_{\face}}$ and $\preceq_{_{\pmi}}$. For any two faces $\face,\face'$, we have
   \begin{equation}
       \face\preceq_{_{\face}}\face'\quad \implies\quad \text{span}(\face)\supseteq\text{span}(\face')\quad \implies\quad \mu(\face)\preceq_{_{\pmi}}\mu(\face')\,.
   \end{equation}
   Vice versa, for any two PMIs $\pmi,\pmi'$ we have
   \begin{equation}
       \pmi\preceq_{_{\pmi}}\pmi'\quad \implies\quad \pmi\cap\text{SAC}_{\N}\supseteq\pmi'\cap\text{SAC}_{\N}\quad \implies\quad \mu^{-1}(\pmi)\preceq_{_{\face}}\mu^{-1}(\pmi')\,.
   \end{equation}
   This completes the proof that $\mu$ is an order isomorphism.
\end{proof}

The fact that there exists an order isomorphism between the two posets $\lat{SAC}$ and $\lat{PMI}$ implies that they have the same order structure. Thus, since $\lat{SAC}$ is a lattice, $\lat{PMI}$ is also a lattice, and moreover the join and meet of any two elements in $\lat{PMI}$ correspond to those in $\lat{SAC}$.

From an algebraic point of view, this means that the map $\mu$ respects meet and join, i.e., it is a lattice homomorphism between $\lat{SAC}$ and $\lat{PMI}$, and in fact is a lattice isomorphism. Explicitly, we have for all $\face,\face'$
\begin{align}
\begin{split}
    & \mu(\face\wedge\face')=\mu(\face)\wedge\mu(\face')\\
    & \mu(\face\vee\face')=\mu(\face)\vee\mu(\face')\,,
\end{split}
\end{align}
and since $\mu$ is also an isomorphism, we have for all $\pmi,\pmi'$
\begin{align}
\begin{split}
    & \mu^{-1}(\pmi\wedge\pmi')=\mu^{-1}(\pmi)\wedge\mu^{-1}(\pmi')\\
    & \mu^{-1}(\pmi\vee\pmi')=\mu^{-1}(\pmi)\vee\mu^{-1}(\pmi')\,.
\end{split}
\end{align}
Using these relations, we can then compute the meet and join for PMIs in terms of meet and join for faces of the SAC via
\begin{align}
    & \pmi\wedge\pmi'=\mu\left[\mu^{-1}(\pmi)\wedge\mu^{-1}(\pmi')\right] \label{eq:KCmeet} \\
    & \pmi\vee\pmi'=\mu\left[\mu^{-1}(\pmi)\vee\mu^{-1}(\pmi')\right]\label{eq:KCjoin}\,.
\end{align}

Let us briefly comment on these expressions, starting with the meet \eqref{eq:KCmeet}. Consider two PMIs $\pmi_1$ and $\pmi_2$ written in terms of their sets of vanishing MI instances $\mgs^0_1$ and $\mgs^0_2$.\footnote{\, Observe that when we write the PMIs in terms of their sets of vanishing MI instances, reverse inclusion of PMIs corresponds to inclusion of these sets, i.e., both $\pmi_1 \supseteq \pmi_2$ and (equivalently) $\mgs^0_1 \subseteq \mgs^0_2$ correspond to $\pmi_1\preceq\pmi_2$. \label{fn:revincl}}
The sign vectors $\vece\face_1$ and $\vece\face_2$ of the faces $\face_1=\mu^{-1}(\pmi_1)$ and $\face_2=\mu^{-1}(\pmi_2)$ are then given by %
\begin{equation}
    (\vece\face_i)_e=\begin{cases}
    0\quad & \text{if}\;\;\mi(\uI_e:\uK_e)\in\mgs^0_i\\
    +\quad & \text{otherwise}
    \end{cases},
    \qquad e\in\,[\E]\,,\;\; i \in\{1,2\}\,.
\end{equation}
Using \eqref{eq:composition} and \eqref{eq:meet_face} we can write the sign vector of $\face_1\wedge\face_2$ as
\begin{equation}
\label{eq:meetmgs0}
    (\vece\face_1\circ\vece\face_2)_e=\begin{cases}
    0\quad & \text{if}\;\;\mi(\uI_e:\uK_e)\in\mgs^0_1\cap\mgs^0_2\\
    +\quad & \text{otherwise}
    \end{cases},
    \qquad e\in\,[\E]\,.
\end{equation}
From \eqref{eq:lambda} and \eqref{eq:KCmeet}, we have 
\begin{equation}
    \pmi_1\wedge\pmi_2=\text{span}\,(\face_1\wedge\face_2)\,,
\end{equation}
and from \eqref{eq:meetmgs0} it follows that the span of the face $\face_1\wedge\face_2$ is the subspace resulting from the intersection of the hyperplanes \eqref{eq:mihyp} associated to the MI instances in $\mgs^0_1\cap\mgs^0_2$. We can then write the meet of two PMIs conveniently as
\begin{equation}\label{eq:PMImeet}
\mgs^0(\pmi_1\wedge\pmi_2) = \mgs_1^0(\pmi_1)\cap\mgs_2^0(\pmi_2)\,,
\end{equation}
where by $\mgs^0(\pmi)$ we denote the set of vanishing MI instances of $\pmi$.

On the other hand, to determine the join between two PMIs, it is convenient to express the faces $\face_1$ and $\face_2$ in terms of their extreme rays, obtaining from \eqref{eq:join_face_er}
\begin{equation}
\label{eq:PMIjoin}
    \pmi_1\vee\pmi_2=\text{span}\,(\face_1\vee\face_2)=\text{span}\,(\widehat{\face}_1\cap\widehat{\face}_2)\,.
\end{equation}
Notice in particular that unlike the case of the faces, the join of two PMIs is \textit{not} necessarily equal to their intersection,\footnote{\, For example, the intersection of two PMIs could be a linear subspace that is not the span of a face of the SAC, and therefore not a PMI. We could nevertheless geometrically describe the join of two PMIs as the span of their intersection with the SAC.} and in general we only have
\begin{equation}\label{eq:join-contained}
    \pmi_1\vee\pmi_2\subseteq\pmi_1\cap\pmi_2\,.
\end{equation}

Having described the structure of the lattices $\lat{SAC}$ and $\lat{PMI}$ and their relationship in detail, we now prove the main result of this subsection.

\subsubsection{KC lattice \texorpdfstring{$\lat{KC}$}{}}
\label{sss:KClat}

The subset $\lat{KC}$ of PMIs that satisfy KC can naturally be seen as a poset with partial order induced by $\lat{PMI}$. However, it is not immediately obvious that this set has any additional structure. The reason is that in general a subset of a lattice need not be a lattice, since the meet or the join for some pair of elements may not exist. Furthermore, even when it is a lattice, it is not necessarily a \emph{sublattice}\footnote{\, A sublattice is a subset of a lattice $\lat{}$ which is also a lattice, with the same meet and join as the original ones in $\lat{}$.} since the meet and join operations may change, and several other structural properties may differ substantially from those of the parent lattice. We begin by proving that $\lat{KC}$ is in fact a lattice for an arbitrary number of parties, and we will devote most of the rest of this work to analyzing its structure in detail.

\begin{thm}
\label{thm:KC-PMIlattice}
    For any given number of parties $\N$, the subset $\lat{\emph{KC}}^\N \subseteq\lat{\emph{PMI}}^\N$ of \emph{KC}-compatible \emph{PMIs}, with partial order induced from $\lat{\emph{PMI}}^\N$, is a lattice. Furthermore, the meet in $\lat{\emph{KC}}^\N$ is the same as in $\lat{\emph{PMI}}^\N$, and is given by \eqref{eq:KCmeet}, or equivalently \eqref{eq:PMImeet}.
\end{thm}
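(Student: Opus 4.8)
The plan is to leverage the two structural facts already in hand: that a PMI is KC-compatible exactly when its set $\mgs^0$ of vanishing MI instances is a down-set in the MI-poset (\Cref{lem:KCdownset}), and that the meet in the parent lattice $\lat{PMI}^\N$ is computed at the level of these sets by intersection, $\mgs^0(\pmi_1\wedge\pmi_2)=\mgs^0_1\cap\mgs^0_2$ (\eqref{eq:PMImeet}). Rather than trying to build joins directly — the join in $\lat{KC}^\N$ will in general \emph{not} agree with the $\lat{PMI}^\N$-join — the strategy is to verify the hypotheses of the dual of \Cref{thm:lattice_condition}: a finite poset that possesses a top and in which every pair of elements has a meet is automatically a lattice.

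The first step is to show that $\lat{KC}^\N$ is closed under the meet operation inherited from $\lat{PMI}^\N$. Given two KC-compatible PMIs $\pmi_1,\pmi_2$, their sets $\mgs^0_1,\mgs^0_2$ are down-sets by \Cref{lem:KCdownset}, and the intersection of two down-sets is again a down-set: if $x\in\mgs^0_1\cap\mgs^0_2$ and $y\preceq x$ in the MI-poset, then $y$ lies in each $\mgs^0_i$ by the down-set property, hence in the intersection. By \eqref{eq:PMImeet} this intersection is precisely $\mgs^0(\pmi_1\wedge\pmi_2)$, so the $\lat{PMI}^\N$-meet is itself KC-compatible and lies in $\lat{KC}^\N$. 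It follows that the meet of any pair in $\lat{KC}^\N$ exists and coincides with the meet of $\lat{PMI}^\N$, which is exactly the ``furthermore'' assertion, given by \eqref{eq:KCmeet} or equivalently \eqref{eq:PMImeet}.

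The second step is to exhibit a top for $\lat{KC}^\N$. Under the partial order \eqref{eq:pmi_order}, reverse inclusion of PMIs corresponds to inclusion of the associated sets $\mgs^0$, so the top is the PMI whose $\mgs^0$ is the largest. The full set $\mgs$ is trivially a down-set, and it is realized as a genuine PMI by the fully factorized configuration in which every MI instance vanishes (e.g.\ a globally pure product state, for which all entropies, and hence all MI instances, vanish). This ``fully factorized'' PMI, with $\mgs^0=\mgs$, is therefore KC-compatible, and since it carries the maximal possible $\mgs^0$ it is simultaneously the top of $\lat{PMI}^\N$; being contained in $\lat{KC}^\N\subseteq\lat{PMI}^\N$, it is also the top of $\lat{KC}^\N$. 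With a top and all pairwise meets established, the dual of \Cref{thm:lattice_condition} yields at once that $\lat{KC}^\N$ is a lattice.

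I expect the only genuinely delicate point to be conceptual rather than computational, and it is worth flagging in the writeup: one must resist computing the join through the naive union of down-sets. Although $\mgs^0_1\cup\mgs^0_2$ is again a down-set, it need not be the $\mgs^0$ of any face of the SAC$_\N$, so it generally fails to be a PMI at all; the join in $\lat{KC}^\N$ is instead forced by the lattice criterion as the meet of all common upper bounds. This is precisely why $\lat{KC}^\N$ need not be a \emph{sublattice} of $\lat{PMI}^\N$, and why only the meet — and not the join — is inherited, a distinction that motivates restricting the explicit statement of the theorem to the meet.
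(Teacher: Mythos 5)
Your proposal is correct and takes essentially the same route as the paper's own proof: both invoke the dual of \Cref{thm:lattice_condition}, identify the top of $\lat{KC}^\N$ as the trivial subspace (the origin of the SAC$_\N$, whose $\zds$ is all of $\mgs$ and hence trivially a down-set), and establish closure under the $\lat{PMI}^\N$-meet via \Cref{lem:KCdownset} together with \eqref{eq:PMImeet}, using the fact that the intersection of two down-sets is a down-set. Your closing caveat that the join is not computed by the union of $\zdss$ (so only the meet is inherited, and $\lat{KC}^\N$ need not be a sublattice) accurately anticipates the paper's subsequent discussion.
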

\begin{proof}
    For a given $\N$, consider the poset $\lat{KC}^\N\subseteq\lat{PMI}^\N$ of PMIs that are KC-compatible, with partial order induced from $\lat{PMI}^\N$. Notice that the top element of $\lat{KC}^\N$ is the trivial subspace that corresponds to the origin of the SAC$_\N$, and obviously satisfies KC since its $\zds$ is the entire MI-poset, which is clearly a down-set. Hence, since $\lat{KC}^\N$ is a finite poset, it suffices by (the order dual of) \Cref{thm:lattice_condition} to prove that it has a meet, and that it is the same as the one in $\lat{PMI}^\N$. 
    Consider two elements $\pmi_1,\pmi_2\in\lat{PMI}^\N$, with respective $\zdss$ $\mgs_1^0$ and $\mgs_2^0$. If both PMIs are in $\lat{KC}^\N$, by \Cref{lem:KCdownset} both $\mgs_1^0$ and $\mgs^0_2$ are down-sets in the MI-poset. The meet $\pmi_1\wedge\pmi_2$ in $\lat{PMI}^\N$ is given by \eqref{eq:PMImeet}, and since the intersection of two down-sets is a down-set, $\pmi_1\wedge\pmi_2$ is also an element of $\lat{KC}^\N$. This proves that the meet in $\lat{KC}^\N$ is precisely the meet in $\lat{PMI}^\N$ restricted to $\lat{KC}^\N$, completing the proof.
\end{proof}

In light of \Cref{thm:KC-PMIlattice}, we now want to determine if, like the meet, the join in $\lat{KC}$ is also the same operation as in $\lat{PMI}$, in which case $\lat{KC}$ would be a sublattice of $\lat{PMI}$. Since $\lat{KC}$ is finite, the join of two elements $\pmi,\pmi'$ can formally be written using the meet as \cite{davey1990introduction}
\begin{equation}
    \label{eq:KC_join}
    \pmi \kcjoin \pmi'=\bigwedge_{\pmi'' \in \{\pmi,\pmi'\}^u}\pmi''\,,
\end{equation}
where
\begin{equation}
\label{eq:common_up}
    \{\pmi,\pmi'\}^u=\{\pmi'' \in \lat{KC}: \;\pmi''\succeq\pmi\;\;\text{and}\;\;\pmi''\succeq\pmi'\}
\end{equation}
is the set of common upper bounds to $\pmi$ and $\pmi'$ in $\lat{KC}$. Notice that we used a new symbol ($\kcjoin$) for the join, since in principle this can be a different operation from the one in $\lat{PMI}$. On the other hand, because of \Cref{thm:KC-PMIlattice}, we denote the meet in $\lat{KC}$ by the same symbol that we used for the meet in $\lat{PMI}$. 

To understand \eqref{eq:KC_join} geometrically, let us first discuss the case of the analogous formula for $\lat{SAC}$. In that case, the join (least upper bound) is the largest dimensional face on the boundary of both $\face$ and $\face'$ (because of reverse inclusion), which is simply their intersection. Furthermore, the set $\{ \face, \face' \}^u$ corresponds to the set of all faces on the boundary of both $\face$ and $\face'$, and the meet (greatest lower bound) of all such faces is then the smallest dimensional face containing them (again because of reverse inclusion), which is simply the composition of all the faces in $\{ \face, \face' \}^u$. This shows why the version of \eqref{eq:KC_join} for $\lat{SAC}$ holds.

With this intuition from $\lat{SAC}$, we now want to understand what changes in the case of $\lat{KC}$. By the correspondence established in \Cref{lem:one-to-one} between faces of the SAC and PMIs, rather than working with KC-compatible PMIs, it will be convenient to use the corresponding ``KC-compatible faces'' in the SAC. We denote the lattice of these faces by $\lat{$\face$KC}$ (which is obviously isomorphic to $\lat{KC}$). The main difference now is that the join of KC-compatible faces is not necessarily their intersection, but might be a face \textit{contained} in such an intersection. 

Consider two KC-compatible faces $\face$ and $\face'$ which are incomparable.\footnote{\, If they are comparable, the join is just the greater 
of the two faces, and \eqref{eq:KC_join} trivializes.} Their join $\face^*=\face\vee\face'$ in $\lat{SAC}$ is a face $\face^*$ that may or may not be KC-compatible. If $\face^*$ is KC-compatible, then the join of $\face$ and $\face'$ in $\lat{$\face$KC}$ coincides with $\face^*$, and we just explained why \eqref{eq:KC_join} holds. Hence, suppose that $\face^*$ is not KC-compatible. We denote by $d$ the dimension of $\face^*$, and by $\partial_{(\kappa)}\face^*$ the set of KC-compatible faces on the boundary of $\face^*$ with dimension $d-\kappa$. Notice that the union of all such sets for $\kappa=1,\ldots,d$\, is precisely the set defined in the version of \eqref{eq:common_up} for KC-compatible faces. Let $\kappa^*$ be the smallest value of $\kappa$ such that $\partial_{(\kappa)}\face^*$ is non-empty. Notice that since the top of $\lat{$\face$KC}$ is the origin, which is trivially KC-compatible, $\kappa^*$ is guaranteed to exist. Suppose now that $\partial_{(\kappa^*)}\face^*$ contains more than one element. Since the  meet in $\lat{SAC}$ and $\lat{$\face$KC}$ are the same, the meet of these elements in $\lat{SAC}$ is then a KC-compatible face $\widetilde{\face^*}$ of greater dimension, which must also be on the boundary of $\face^*$.\footnote{\, It cannot be $\face^*$ since $\face^*$ is not KC-compatible by assumption, and it cannot be a larger face that contains $\face^*$ because the meet is the composition of faces (cf.\ \eqref{eq:meet_face}).} This contradicts the assumption that $\kappa^*$ is minimal. It follows that $\partial_{(\kappa^*)}\face^*$ must only contain a single element, and this is precisely the largest dimensional KC-compatible face contained in $\face^*$. Hence, because of reverse inclusion, the single element in $\partial_{(\kappa^*)}\face^*$ is the join of $\face$ and $\face'$ in $\lat{$\face$KC}$, which is precisely the l.h.s.\ of \eqref{eq:KC_join}. Notice that $\partial_{(\kappa^*)}\face^*$ is also the composition of all the KC-compatible faces on the boundary of $\face$ and $\face'$, which is exactly the r.h.s.\ of the version of \eqref{eq:KC_join} for $\lat{$\face$KC}$.

In the above argument we considered the possibility that the PMI of the face $\face^*$ is not in $\lat{KC}$. If that is the case, and there exist two KC-compatible PMIs $\pmi$ and $\pmi'$ such that $\pmi \vee \pmi' = \mu(\face^*)$, then the lattice $\lat{KC}$ is \textit{not} a sublattice of $\lat{PMI}$, since $\pmi \kcjoin \pmi'$ would be a different PMI. As we will see in \Cref{ssec:latticN}, this occurs whenever $\N\geq 4$. This also implies that if a face corresponds to a PMI in $\lat{KC}$, in general it is not the case that all the faces on its boundary necessarily correspond to PMIs in $\lat{KC}$. Similarly, it is clear that if the PMI of a face is not in $\lat{KC}$, the PMIs corresponding to some of the faces on its boundary can still be in $\lat{KC}$ (e.g., the origin is in $\lat{KC}$ and is on the boundary of all the faces). It is in principle even possible for a face to correspond to a PMI in $\lat{KC}$ and yet have none of its facets correspond to a PMI in $\lat{KC}$. As we will see, this indeed happens for $\N\geq 4$, and it implies that $\lat{KC}$ does not satisfy the JDCC for any $\N \geq 4$.

We conclude this subsection with a comment about the relation between $\lat{KC}$ and the MI-poset, further motivating the detailed analysis that we will carry out in later subsections. A \textit{distributive lattice} is a lattice $\lat{}$ with a particularly nice structure, such that for all elements $x,y,z \in \lat{}$, the distributive laws for the meet and join hold:
\begin{align}
\begin{split}
    x \wedge (y \vee z) &= (x \wedge y) \vee (x \wedge z)\\
    x \vee (y \wedge z) &= (x \vee y) \wedge (x \vee z)\,.
\end{split}
\end{align}
For any poset $(\mathcal{P},\preceq)$, the set of all its down-sets, with partial order given by inclusion, is a distributive lattice, with the meet and join corresponding respectively to the intersection and union. Since KC can be conveniently formulated in terms of down-sets of the MI-poset, we define the following lattice.
\begin{defi}[Mutual information down-set (MID) lattice] \label{def:mid_lattice}
    For a given number of parties $\N$, the 
    \emph{mutual information down-set lattice} $\lat{\emph{MID}}^{\N}$ is the of lattice of down-sets of the $\N$-party \emph{MI}-poset.
\end{defi}

\begin{figure}[t]
    \centering
    \begin{tikzpicture}[scale=0.9]
    \node[draw,rounded corners, fill=blue!20!] (11) at (0,4) {{\footnotesize $\{\,\mi(A\!:\!BO), \mi(B\!:\!AO), \mi(O\!:\!AB)\,\}$}};
    \node[draw,rounded corners] (21) at (-5,2) {{\footnotesize $\{\,\mi(A\!:\!BO), \mi(B\!:\!AO)\,\}$}};
    \node[draw,rounded corners] (22) at (0,2) {{\footnotesize $\{\,\mi(A\!:\!BO), \mi(O\!:\!AB)\,\}$}};
    \node[draw,rounded corners] (23) at (5,2) {{\footnotesize $\{\,\mi(B\!:\!AO), \mi(O\!:\!AB)\,\}$}};
    \node[draw,rounded corners] (31) at (-5,0) {{\footnotesize $\{\,\mi(A\!:\!BO),\mi(B\!:\!O)\,\}$}};
    \node[draw,rounded corners] (32) at (0,0) {{\footnotesize $\{\,\mi(B\!:\!AO), \mi(A\!:\!O)\,\}$}}; 
    \node[draw,rounded corners] (33) at (5,0) {{\footnotesize $\{\,\mi(O\!:\!AB), \mi(A\!:\!B)\,\}$}};
    \node[draw,rounded corners] (41) at (-6,-2) {{\footnotesize $\{\,\mi(A\!:\!B), \mi(A\!:\!O), \mi(B\!:\!O)\,\}$}};
    \node[draw,rounded corners, fill=blue!20!] (42) at (-1,-2) {{\footnotesize $\{\,\mi(A\!:\!BO)\,\}$}};
    \node[draw,rounded corners, fill=blue!20!] (43) at (2.5,-2) {{\footnotesize $\{\mi(B\!:\!AO)\,\}$}};
    \node[draw,rounded corners, fill=blue!20!] (44) at (6,-2) {{\footnotesize $\{\,\mi(O\!:\!AB)\,\}$}};
    \node[draw,rounded corners] (51) at (-5,-4) {{\footnotesize $\{\,\mi(A\!:\!B), \mi(A\!:\!O)\,\}$}};
    \node[draw,rounded corners] (52) at (0,-4) {{\footnotesize $\{\,\mi(A\!:\!B), \mi(B\!:\!O)\,\}$}};
    \node[draw,rounded corners] (53) at (5,-4) {{\footnotesize $\{\,\mi(A\!:\!O), (B\!:\!O)\,\}$}};
    \node[draw,rounded corners, fill=blue!20!] (61) at (-5,-6) {{\footnotesize $\{\,\mi(A\!:\!B)\,\}$}};
    \node[draw,rounded corners, fill=blue!20!] (62) at (0,-6) {{\footnotesize $\{\,\mi(A\!:\!O)\,\}$}};
    \node[draw,rounded corners, fill=blue!20!] (63) at (5,-6) {{\footnotesize $\{\,\mi(B\!:\!O)\,\}$}};
    \node[draw,rounded corners, fill=blue!20!] (71) at (0,-8) {{\footnotesize $\varnothing$}};
    \draw[-] (11.south) -- (21.north);
    \draw[-] (11.south) -- (22.north);
    \draw[-] (11.south) -- (23.north);
    \draw[-] (21.south) -- (31.north);
    \draw[-] (21.south) -- (32.north);
    \draw[-] (22.south) -- (31.north);
    \draw[-] (22.south) -- (33.north);
    \draw[-] (23.south) -- (32.north);
    \draw[-] (23.south) -- (33.north);
    \draw[-] (31.south) -- (41.north);
    \draw[-] (31.south) -- (42.north);
    \draw[-] (32.south) -- (41.north);
    \draw[-] (32.south) -- (43.north);
    \draw[-] (33.south) -- (41.north);
    \draw[-] (33.south) -- (44.north);
    \draw[-] (41.south) -- (51.north);
    \draw[-] (41.south) -- (52.north);
    \draw[-] (41.south) -- (53.north);
    \draw[-] (42.south) -- (51.north);
    \draw[-] (43.south) -- (52.north);
    \draw[-] (44.south) -- (53.north);
    \draw[-] (51.south) -- (61.north);
    \draw[-] (51.south) -- (62.north);
    \draw[-] (52.south) -- (61.north);
    \draw[-] (52.south) -- (63.north);
    \draw[-] (53.south) -- (62.north);
    \draw[-] (53.south) -- (63.north);
    \draw[-] (61.south) -- (71.north);
    \draw[-] (62.south) -- (71.north);
    \draw[-] (63.south) -- (71.north);
    \end{tikzpicture}
    \caption{
    The Hasse diagram of the down-set lattice
    $\lat{MID}$ of the $\N=2$ MI-poset. Each vertex represents a down-set, labeled by the antichain generating it. The elements in blue are the elements of $\lat{PMI}$, which for $\N=2$ is a subset of the down-set lattice.
    The remaining (white) elements are not in $\lat{PMI}$, because there are additional MI instances which necessarily vanish as a consequence of linear dependence.  For example, $\{\mi(A\!:\!B), (A\!:\!O)\}$ both vanishing implies  $\mi(A\!:\!BO)=0$ as well.}
    \label{fig:ALatticeMIPosetN2}
\end{figure}
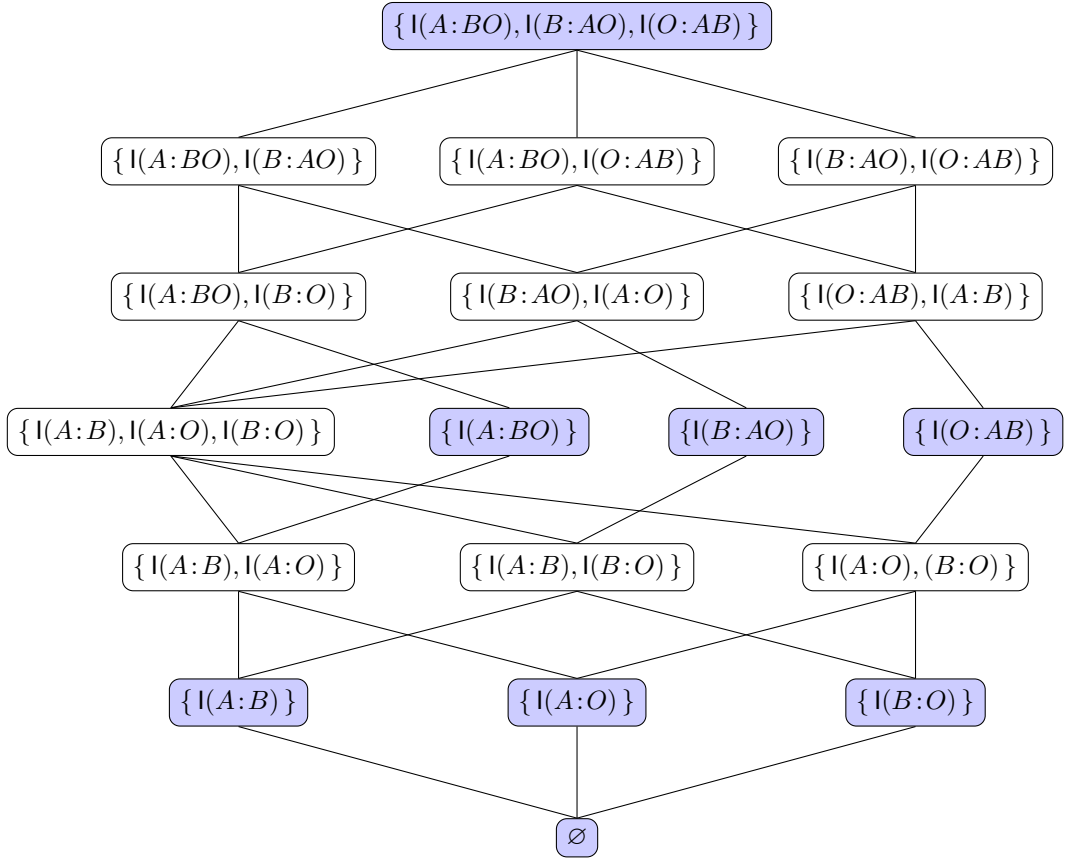

For any set $\mathcal{X}$, the lattice obtained by endowing the power set with a partial order given by inclusion is known as the \textit{power set lattice} of $\mathcal{X}$, and we denote it by $\mathfrak{P}_\mathcal{X}$. As for down-set lattices, the meet and join correspond to the intersection and union respectively, so it follows that $\lat{MID}$ is a sublattice of $\mathfrak{P}_\mgs$, the power set lattice of the set of MI instances $\mgs$. Furthermore, since any PMI is specified by its $\zds$ $\mgs^0\in \mathfrak{P}_\mgs$, $\lat{PMI}$ can be viewed as a subset of $\mathfrak{P}_\mgs$. It is however not a sublattice, since the join in $\lat{PMI}$ is not (in general) the union of the sets of vanishing MI instances. Nevertheless, since both $\lat{MID}$ and $\lat{PMI}$ are subsets of $\mathfrak{P}_\mgs$, we have (as sets)
\begin{equation}
\label{eq:lattice_intersection}
    \lat{KC}=\lat{PMI}\cap\lat{MID}\,.
\end{equation}
Furthermore, the order relation in all these lattices is the same as the one inherited from $\mathfrak{P}_\mgs$ (cf.\ \Cref{fn:revincl}).

We show in \Cref{fig:ALatticeMIPosetN2} the simple example of these lattices for $\N=2$, where we label each element by the \textit{antichain} that generates it.\footnote{\, An \textit{antichain} $\mathcal{A}$ is a subset $\mathcal{A}\subseteq\mathcal{P}$ of a poset $(\mathcal{P},\preceq)$ that is totally unordered, i.e., all elements in $\mathcal A$ are incomparable. Any antichain $\mathcal{A}$ naturally generates the down-set $\mathcal{D}_{\!\mathcal{A}}$ via the relation
\begin{equation*}
    \mathcal{D}_{\!\mathcal{A}}=\{x\in\mathcal{P},\;x\preceq\mathcal{A}\}\,.
\end{equation*}
Since the MI-poset is finite, any down-set is generated by an antichain in this fashion.} Finally, we stress that when we represent a PMI by its set of vanishing MI instances $\mgs^0$, all these lattices have the same meet operation, which is simply given by intersection. 

Since both the lattices $\lat{MID}$ and $\lat{PMI}$ have several nice structural properties, in light of \eqref{eq:lattice_intersection} it is natural to ask which of these properties, if any, are preserved in $\lat{KC}$ for an arbitrary number of parties, and if $\lat{KC}$ has additional properties which are not inherited from its parent lattices. The main motivation for these investigations is that there exist classes of lattices whose theory is well developed, and for which there are powerful tools to characterize their structure in more detail.

\subsection{The lattice \texorpdfstring{$\lat{KC}^3$}{} of KC-compatible PMIs for \texorpdfstring{$\N=3$}{N=3}}
\label{ssec:lattice3}

\begin{table}[tb]
    \centering
    \footnotesize
    \begin{tabular}{c|c|c|c|c|c|c|c|c|}
        \cline{2-9}
        & 0 & 1 & 2 & 3 & 4 & 5 & 6 & 7\\
        \hline
        \multicolumn{1}{|c|}{$\lat{PMI}^3$} & 1 & 11 & 48 & 107 & 127 & 75 & 18 & 1\\
        \hline
        \multicolumn{1}{|c|}{$\lat{KC}^3$} & 1 & 7 & 21 & 35 & 32 & 15 & 6 & 1\\
        \hline
    \end{tabular}
    \caption{The number of subspaces in $\lat{PMI}^3$ and $\lat{KC}^3$ for each dimension $0\leq d\leq 7$.}
    \label{tab:n3}
\end{table}

This subsection is devoted to a detailed description of the lattice of KC-compatible PMIs for $\N=3$, which we will denote as $\lat{KC}^3$. We will look at several structural properties, which we will then analyze for larger $\N$ in the following subsections.
Throughout this subsection we will briefly review 
relevant definitions and results from lattice theory; for more details the reader is referred to the standard books \cite{birkhoff1967lattice,gratzerbook,davey1990introduction}.

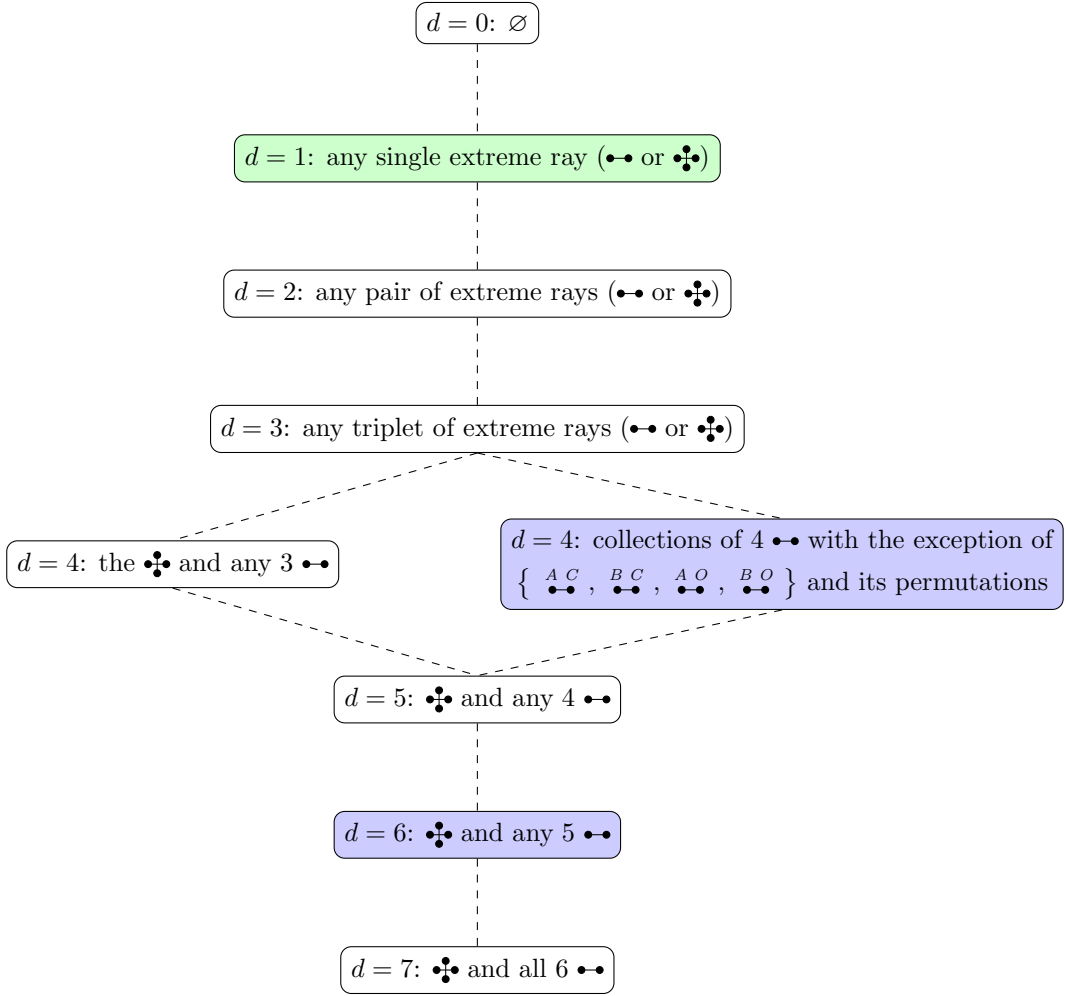
\begin{figure}[tb]
    \centering
    \begin{tikzpicture}[scale=0.9]
    \node[draw,rounded corners] (0) at (0,8) {{\footnotesize $d=0$: $\varnothing$}};
    \node[draw,rounded corners, fill=green!20!] (1) at (0,6) {{\footnotesize $d=1$: any single extreme ray ($\bpg$ or $\pt$)}};
    \node[draw,rounded corners] (2) at (0,4) {{\footnotesize $d=2$: any pair of extreme rays ($\bpg$ or $\pt$)}};
    \node[draw,rounded corners] (3) at (0,2) {{\footnotesize $d=3$: any triplet of extreme rays ($\bpg$ or $\pt$)}};
    \node[draw,rounded corners] (41) at (-4.5,0) {{\footnotesize $d=4$: the $\pt$ and any $3$ $\bpg$}};
    \node[draw, align=center,rounded corners, fill=blue!20!] (42) at (4.5,0) 
    {{\footnotesize $d=4$: collections of 4 $\bpg$ with the exception of}\\ {\footnotesize $\bigl\{\,\bp{A}{C}, \bp{B}{C}, \bp{A}{O}, \bp{B}{O}\bigr\}$ and its permutations}};
    \node[draw,rounded corners] (5) at (0,-2) {{\footnotesize $d=5$: $\pt$ and any $4$ $\bpg$}};
    \node[draw,rounded corners, fill=blue!20!] (6) at (0,-4) {{\footnotesize $d=6$: $\pt$ and any $5$ $\bpg$}};
    \node[draw,rounded corners] (7) at (0,-6) {{\footnotesize $d=7$: $\pt$ and all $6$ $\bpg$}};
    \draw[dashed] (0.south) -- (1.north);
    \draw[dashed] (1.south) -- (2.north);
    \draw[dashed] (2.south) -- (3.north);
    \draw[dashed] (3.south) -- (41.north);
    \draw[dashed] (3.south) -- (42.north);
    \draw[dashed] (41.south) -- (5.north);
    \draw[dashed] (42.south) -- (5.north);
    \draw[dashed] (5.south) -- (6.north);
    \draw[dashed] (6.south) -- (7.north);
    \end{tikzpicture}
    \caption{A schematic representation (not the Hasse diagram) of the lattice $\lat{KC}^3$, where each block corresponds to a collection of PMIs. The dashed lines describe the cover relations schematically. For each block, we have indicated the type of collection of SAC$_3$ extreme rays that spans the PMI. The symbols $\protect\bpg$ and $\protect\pt$ represent respectively the entropy vector of a Bell pair, and the entropy vector of the $4$-party perfect state. The join-irreducible elements (including the $d=6$ atoms) are shown in blue, and the meet-irreducible elements 
    (here just the $d=1$ coatoms) are shown in green (the terminology is defined below).}  
    \label{fig:N3KPMIlattice}
\end{figure}


\paragraph{Description of $\lat{KC}^3$:} The lattice $\lat{KC}^3$ has a total of 118 elements. In \Cref{tab:n3} we report the number of elements for each dimension $0\leq d\leq 7$, and compare it to the number of elements of the same dimension in the full lattice of PMIs $\lat{PMI}^3$. Given the fairly large number of elements, drawing the full Hasse diagram of $\lat{KC}^3$ is impractical, and we draw instead a more schematic representation of its structure in \Cref{fig:N3KPMIlattice}. In this diagram, each block corresponds to a collection of PMIs with the indicated dimension. The dashed lines describe the cover relations schematically. When two blocks are connected by a line, we mean that every element in one block covers (or is covered by) at least one element in the other block, and at the same time that every element does not cover (and is not covered by) any element in any unconnected block. The PMIs in a block are described slightly differently from the representations that we have seen earlier. Instead of writing a PMI as the set of all vanishing MI instances $\mgs^0$, or representing it by all the extreme rays generating the corresponding face of the SAC$_3$, we describe each PMI $\pmi$ by a collection of elements of a fixed basis of entropy space consisting of 1-dimensional PMIs in $\lat{KC}^3$; the said collection comprises of all such 1-dimensional PMIs contained in $\pmi$. 
The fact that this is possible is not obvious, and is in fact a special property of the $\N=3$ case, as we will discuss later. 

Each 1-dimensional PMI in $\lat{KC}^3$ corresponds to an extreme ray of SAC$_3$, and there are two types of such rays in $\lat{KC}^3$. One is the entropy vector of a Bell pair state (denoted by $\bpg$), and there are six possibilities, depending on which pair of parties we choose among $\{A,B,C,O\}$. The second is the entropy vector associated to the 4-party perfect state (denoted by $\pt$), which is unique since this state is invariant under any permutation of the parties.\footnote{\, This is the entropy vector of the density matrix obtained by tracing out any of the parties from a 4-party state which is maximally entangled for all bipartitions.} 

From \Cref{fig:N3KPMIlattice} we can make the following observations (explained in the next paragraph): Any collection of generators which include $\pt$ is an element of $\lat{KC}^3$, and this is also the case for any collection of three or fewer $\bpg$. Almost all collections of four $\bpg$ (without $\pt$) also belong to $\lat{KC}^3$, with the only three exceptions being indicated in \Cref{fig:N3KPMIlattice}. Lastly, any collection of five or six $\bpg$ (again without $\pt$) is not an element of $\lat{KC}^3$. 

To develop more intuition for why $\lat{KC}^3$ has this structure, let us first consider any collection of generators which include $\pt$.  Any entropy vector in the interior of the corresponding face has mutual information which is strictly positive whenever it involves at least three parties, e.g.\ $\mi(A\!:\!BC)>0$ (since by SA the positive contribution from $\pt$ cannot be canceled by any other contribution). This means that the only instances which can possibly vanish are MIs involving single parties, so any such collection is automatically a down-set (since it only contains minimal elements of the MI-poset). On the other hand, consider the collection $\{\,\bp{A}{C}, \bp{B}{C}, \bp{A}{O}, \bp{B}{O}\,\}$.  Any vector in the span of these Bell pairs has $\mi(A\!:\!B)=0$ and $\mi(C\!:\!O)=0$, while all instances involving at least three parties are strictly positive.  This means that $\pt$ lies within this subspace (i.e., its addition does not alter the PMI).  Therefore since $\pt$ was not included in our collection, such a collection is not in $\lat{PMI}^3$ and hence cannot lie in $\lat{KC}^3$. It is clear that adding any more Bell pairs to our collection still retains the redundancy of $\pt$ (and hence the requirement of its presence in order for the collection to be in $\lat{PMI}^3$); conversely, we leave it as an exercise for the reader to verify that any fewer Bell pairs would always maintain some vanishing mutual information involving 3 parties (so that $\pt$ does not lie within this subspace).

Armed with this understanding, we can additionally derive the numbers in the last row of \Cref{tab:n3}.  For example, as we already mentioned above, there are seven 1-dimensional PMIs (six $\bpg$ and one $\pt$), 21 2-dimensional PMIs (15 of the type $\bpg \ \bpg$ and 6 $\bpg \ \pt$), and so on.  At the facet end ($d=6$), there are six facets which correspond to $\pt$ with five $\bpg$'s, the single missing $\bpg$ from this collection giving the only pair of single parties that have vanishing mutual information.

Finally, notice that since in \Cref{fig:N3KPMIlattice} we have represented each PMI by a basis, the full collection of extreme rays that generate the face corresponding to a PMI is not always visible. We stress that in general, a PMI in $\lat{KC}^{\N}$ can correspond to a face which includes on its boundary extreme rays that generate PMIs which are not in $\lat{KC}^{\N}$. This in fact already happens for $\N=3$, for any face of dimension $d\geq 5$.


\paragraph{The join operation:} 

Since $\lat{KC}^3$ is a subset of $\lat{PMI}^3$ and $\lat{MID}^3$  (cf.\ \eqref{eq:lattice_intersection}), with the same meet operation, it is natural to ask if the join in $\lat{KC}^3$ has any relation to the join in $\lat{PMI}^3$ or $\lat{MID}^3$, and in particular if $\lat{KC}^3$ is a sublattice of either 
of the other two. 

To this end, the representation of the elements of $\lat{KC}^3$ that we have used in \Cref{fig:N3KPMIlattice} is particularly convenient. Given a PMI $\pmi$ in $\lat{KC}^3$, we will denote its representation using the basis shown in \Cref{fig:N3KPMIlattice} by $\widehat{\pmi}$. Then one can immediately see from \Cref{fig:N3KPMIlattice} that given two PMIs $\pmi,\pmi'\in\lat{KC}^3$, their join in $\lat{KC}^3$ corresponds to their intersection 
\begin{equation} \label{eq:kc3join}
    \pmi''=\pmi\kcjoin\pmi' \qquad \text{with}\qquad \widehat{\pmi''}=\widehat{\pmi}\cap\widehat{\pmi'}\,.
\end{equation}
This means that the face corresponding to $\pmi\kcjoin\pmi'$ must be the intersection of the faces corresponding to $\pmi$ and $\pmi'$, implying by \eqref{eq:PMIjoin} that $\lat{KC}^3$ is in fact a sublattice of $\lat{PMI}^3$.

On the other hand, $\lat{KC}^3$ is not a sublattice of $\lat{MID}^3$. To see why, consider the three join-irreducible elements of $\lat{KC}^3$ with $\mgs^0$ corresponding to the principal down-sets of the following three instances in the MI-poset: 
\begin{equation} \label{eq:kcmidex}
    \mi(A\!:\!BC)\,,\quad \mi(O\!:\!AB)\,,\quad \mi(C\!:\!BO)\,.
\end{equation}
The join (in $\lat{MID}^3$) of these principal down-sets is just their union. On the other hand, the linear subspace corresponding to the vanishing of all such MI instances is the origin of $\mathbb{R}^{\D}$, so the join in $\lat{KC}^3$ of the principal down-sets of the instances in \eqref{eq:kcmidex} is the entire MI-poset. Since the join in $\lat{KC}^3$ is not the same as that in $\lat{MID}^3$, $\lat{KC}^3$ cannot be a sublattice of $\lat{MID}^3$.


\paragraph{Atomistic and coatomistic lattices:} In a poset with a bottom element, the elements that cover the bottom are called \textit{atoms}, and a lattice is said to be \textit{atomistic} if every element (other than the bottom) is the join of a collection of atoms. In a finite lattice $\lat{}$, an element $x\in\lat{}$ is said to be \textit{join-irreducible} if it is not the bottom element, and the following implication holds
\begin{equation}
    x=y\vee z \quad\implies\quad x=y\;\; \text{or}\;\; x=z\quad \forall\,y,z\in\lat{}\,.
\end{equation}
Notice that an element is join-irreducible if and only if it covers precisely one element, so in particular all atoms are join-irreducible.

All these notions have a dual version. If a poset has a top element, the elements covered by the top are called \textit{coatoms}, and a lattice is said to be \textit{coatomistic} if every element (other than the top) is the meet of a collection of coatoms. In a finite lattice $\lat{}$, an element $x\in\lat{}$ is said to be \textit{meet-irreducible} if it is not the top element, and the following implication holds
\begin{equation}
    x=y\wedge z \quad\implies\quad x=y\;\; \text{or}\;\; x=z\quad \forall\,y,z\in\lat{}\,.
\end{equation}
Correspondingly, an element is meet-irreducible if and only if it is covered by precisely one element, so in particular all coatoms are meet-irreducible.

For any $\N$, the lattice $\lat{SAC}^{\N}$ is both atomistic and coatomistic, with atoms corresponding to facets, and coatoms to extreme rays.\footnote{\, Obviously this is true for any polyhedral cone.} Any face can in fact be obtained from the intersection (join) of a collection of facets, or as the composition (meet) of a collection of extreme rays. On the other hand, the lattice $\lat{MID}$ is neither atomistic nor coatomistic. The reason is that in any down-set lattice, the elements corresponding to the down-sets generated by single-element antichains (also called \textit{principal}) are join-irreducible, and therefore are not in general joins of atoms.\footnote{\, However, if the poset is totally unordered, the down-set lattice is just a power-set lattice, which is atomistic because all join-irreducible elements are atoms.} Furthermore, it is a general fact that in any down-set lattice,\footnote{\label{footnote:birkhoff}\, In fact, this is true for any distributive lattice, since by Birkhoff's representation theorem any distributive lattice is the lattice of down-sets of the poset of its join-irreducible elements \cite{birkhoff1967lattice}.} the poset of join-irreducible elements (with the induced partial order from the lattice) is isomorphic to the order dual of the poset of meet-irreducible elements (this can be seen explicitly for $\N=2$ in \Cref{fig:ALatticeMIPosetN2}), making the lattice in general not coatomistic.\footnote{\, Again, this holds except when the poset is totally unordered.}

In light of \eqref{eq:lattice_intersection}, since in a finite lattice any element can be obtained as the join of join-irreducible elements, or as the meet of meet-irreducible elements, it is then interesting to investigate the structure of the set of such elements for $\lat{KC}^\N$, and in particular whether $\lat{KC}^\N$ is atomistic or coatomistic.  
As one can immediately verify from \Cref{fig:N3KPMIlattice}, in $\lat{KC}^3$ there are six atoms, corresponding to hyperplanes of the form
\begin{equation} \label{eq:singleton}
    \ghyp:\qquad \mi(\uI:\uK)=0,\quad \text{with}\quad |\uI|=|\uK|=1\,.
\end{equation}
As we mentioned above, these are atoms of $\lat{PMI}^3$. The fact they are also atoms of $\lat{MID}^3$ follows from them being the minimal elements of the MI-poset, and therefore each single MI instance in \eqref{eq:singleton} is by itself a down-set. 

To check whether $\lat{KC}^3$ is atomistic, it suffices to note that the join of all the atoms is the PMI generated by the extreme ray $\pt$. This immediately follows from \eqref{eq:kc3join}, since as one can see from \Cref{fig:N3KPMIlattice}, $\pt$ is the intersection of the collections of generators of all the atoms. The fact that the join of all the atoms is not the top means that the top cannot be the join of any collection of atoms, and so $\lat{KC}^3$ is not atomistic.

It is also immediate to see that $\lat{KC}^3$ \textit{is} coatomistic, since in the representations of the PMIs given \Cref{fig:N3KPMIlattice}, every element is explicitly a collection of coatoms. Furthermore, it is clear that in $\lat{KC}^3$ all coatoms are 1-dimensional PMIs, and that there are no meet-irreducible elements other than the coatoms (as it must be, given that the lattice is coatomistic).

We conclude the analysis of the meet and join-irreducible elements in $\lat{KC}^3$ with an intriguing observation that we leave as an exercise for the reader to verify. The join-irreducible elements of $\lat{KC}^3$ correspond precisely to the principal down-sets of the subposet\footnote{\, A subposet of a poset $(\mathcal{P},\preceq)$ is a subset of $\mathcal{P}$ with the partial order induced by $(\mathcal{P},\preceq)$.
} 
of the MI-poset obtained by simply deleting the trivial MI instances given in \eqref{eq:trivial_ins}. We will comment again on this property in the next subsection, where we discuss the general case of $\N\geq 4$.


\paragraph{Modularity and semimodularity:} From our previous observations, it is immediately clear that $\lat{KC}^3$ is not distributive, since the poset of join-irreducible elements is not dually isomorphic to the poset of meet-irreducibles (cf.\ \Cref{footnote:birkhoff}). There are however two important generalizations of distributivity, namely \textit{modularity} and \textit{semi-distributivity}, that are interesting to explore. Let us first consider the former.

Modular lattices are defined as those lattices that obey the \textit{modular law}, which states that for any three elements $x,y,z \in \lat{}$,
\begin{align}
\label{eq:modular_condition}
    x \preceq y \quad\implies\quad x \vee (z \wedge y) = (x \vee z) \wedge y\,.
\end{align}
A well known result in lattice theory asserts that any lattice is modular if and only if it does not contain a sublattice isomorphic to the \textit{pentagon} lattice $\mathfrak{N}_5$. As shown in \Cref{fig:N5}, $\lat{KC}^3$ contains an instance of $\mathfrak{N}_5$ as a sublattice (which we leave as an exercise for the reader to show\footnote{\, Notice that the cover relations in $\mathfrak N_5$ need not coincide with those in $\lat{KC}^3$. For example, unlike in $\lat{KC}^3$, $e$ covers $b$ in $\mathfrak N_5$ (cf.\  \Cref{fig:N5}).}), and therefore it is not modular. There is however a generalization of modularity which is also interesting to explore.

\begin{figure}
\centering
\begin{tikzpicture}[scale=0.9]
\node[draw,rounded corners] (top) at (0,6) {{\footnotesize $a = \bigl\{\,\bp{A}{C}, \bp{B}{C}\,\bigr\}$}};
\node[draw,rounded corners] (1left) at (-4,4) {{\footnotesize $e = \bigl\{\, \bp{A}{C}, \bp{B}{C}, \bp{A}{O}\,\bigr \}$}};
\node[draw,rounded corners] (1right) at (4,4) {{\footnotesize $d = \bigl\{\, \bp{A}{C}, \bp{B}{C}, \bp{B}{O}\,\bigr \}$}};
\node[draw,rounded corners] (2right) at (4,2) {{\footnotesize $c = \bigl\{\, \bp{A}{C}, \bp{B}{C}, \bp{B}{O}, \pt\,\bigr \}$}};
\node[draw,rounded corners] (bottom) at (0,0) {{\footnotesize $b = \bigl\{\,\bp{A}{C}, \bp{B}{C}, \bp{A}{O}, \bp{B}{O}, \pt\,\bigr \}$}};
\draw[-] (top.south) -- (1left.north);
\draw[-] (top.south) -- (1right.north);
\draw[-] (1left.south) -- (bottom.north);
\draw[-] (1right.south) -- (2right.north);
\draw[-] (2right.south) -- (bottom.north);
\end{tikzpicture}
\caption{An explicit realization of the pentagon lattice $\mathfrak{N}_5$ as a sublattice of $\lat{KC}^3$, implying that $\lat{KC}^3$ is not modular. Notice that even if $c \preceq d$, we have $c \vee (e \wedge d)=c\vee b=c$ while $(c \vee e) \wedge d = a \wedge d=d$, violating \eqref{eq:modular_condition}.}
\label{fig:N5}
\end{figure}
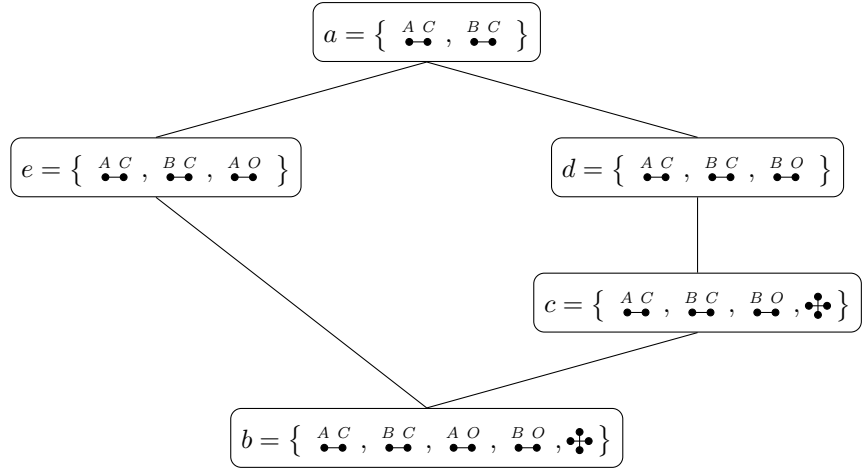

A \emph{semimodular lattice}\footnote{\, We follow the common convention of calling a lattice simply \emph{semimodular} when it is actually \textit{upper semimodular}. The dual notion of \emph{lower semimodularlity} is introduced below.} is a lattice $\lat{}$ such that for any $x,y,z \in \lat{}$,
\begin{align}
\label{eq:semimodular_cond}
    x \covb \,y \quad\implies\quad x \vee z \covb \,y \vee z \;\;\; \text{or}\;\;\; x \vee z = y \vee z\,.
\end{align}
Dually, a \emph{lower semimodular lattice} is one that satisfies the dual condition to \eqref{eq:semimodular_cond}, i.e., such that for any $x,y,z \in \lat{}$,
\begin{align}
    x \covb y \quad\implies\quad x \wedge z \covb y \wedge z\;\;\; \text{or}\;\;\; x \wedge z = y \wedge z\,.
\end{align}
A finite lattice is modular if and only if it is both upper and lower semimodular. We will now verify that $\lat{KC}^{3}$ is semimodular, thus also implying that it is not lower semimodular.

To see that $\lat{KC}^{3}$ is semimodular, it will be convenient to use a different, but equivalent, characterization of semimodularity. A lattice $\lat{}$ is semimodular if and only if it is graded by its height function (and therefore satisfies the JDCC) and its height function is \textit{submodular}, that is, for any $x,y \in \lat{}$, we have\footnote{\, Similarly, a lattice is lower-semimodular if it is graded by its height function, and the height function is \textit{supermodular}, i.e., $h(x) + h(y) \leq h(x \vee y) + h(x \wedge y)$ for any $x,y \in \lat{}$.}
\begin{align}\label{eq:height-semimod}
    h(x) + h(y) \geq h(x \vee y) + h(x \wedge y)\,.
\end{align}

Thus, we should first check that $\lat{KC}^3$ satisfies the JDCC. This is clear from \Cref{fig:N3KPMIlattice}, since starting from any $\pmi\in\lat{KC}^3$, we can get any $\pmi'$ covering it (or covered by it) by simply removing (or adding) a single $\bpg$ or $\pt$ from its representation $\widehat{\pmi}$. Furthermore, one can immediately verify using \Cref{fig:N3KPMIlattice} that the height function is given by $h(\pmi)=7-|\widehat{\pmi}|$, where $|\widehat{\pmi}|$ is the cardinality of $\widehat{\pmi}$. Since the join in this representation of PMIs in $\lat{KC}^3$ corresponds to intersection (cf.\ \eqref{eq:kc3join}), given two elements $\pmi$ and $\pmi'$, we have
\begin{align}
    h(\pmi \vee \pmi') = 7 - (|\widehat{\pmi}\cap\widehat{\pmi'}|)\,.
\end{align}
Furthermore, if either $\widehat{\pmi}$ or $\widehat{\pmi'}$ contains $\pt$ we have
\begin{equation} 
    \pmi''=\pmi\wedge\pmi' \qquad \text{with}\qquad \widehat{\pmi''}=\widehat{\pmi}\cup\widehat{\pmi'}\,,
\end{equation}
which means
\begin{align}
    h(\pmi \wedge \pmi') = 7 - (|\widehat{\pmi}\cup\widehat{\pmi'}|)\,,
\end{align}
and \eqref{eq:height-semimod} holds as an equality. Therefore it only remains to check the case where neither $\widehat{\pmi}$ nor $\widehat{\pmi'}$ contains $\pt$. In this case, notice again from \Cref{fig:N3KPMIlattice} that
\begin{equation} 
    \pmi''=\pmi\wedge\pmi' \qquad \text{with}\qquad \widehat{\pmi''}= \widehat{\pmi}\cup\widehat{\pmi'}\cup\{\pt\}\,.
\end{equation}
This implies
\begin{equation}
    h(\pmi\wedge\pmi') < 7 - (|\widehat{\pmi}\cup\widehat{\pmi'}|)\,,
\end{equation}
from which it follows that \eqref{eq:height-semimod} is strictly satisfied.


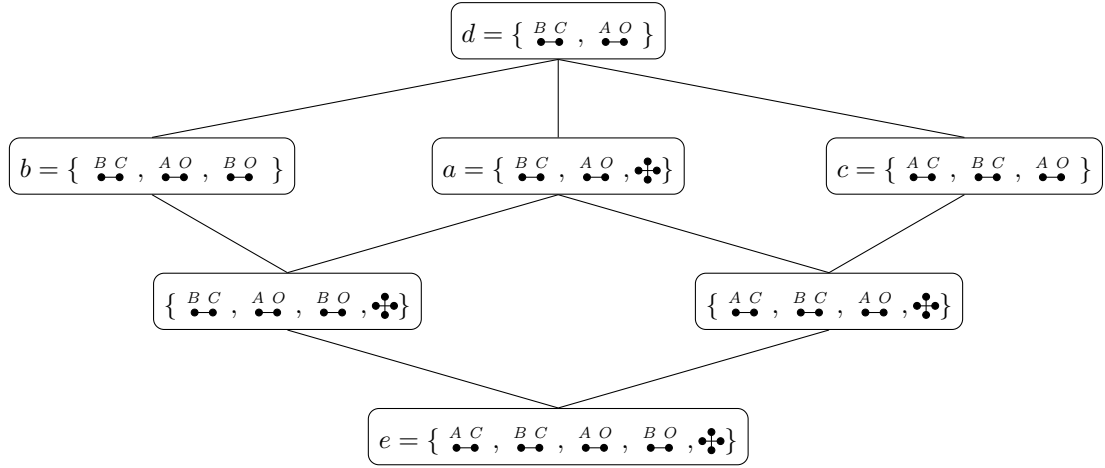
\begin{figure}
    \centering
    \begin{tikzpicture}[scale=0.9]
    \node[draw,rounded corners] (top) at (0,6) {{\footnotesize $d = \{\bp{B}{C}, \bp{A}{O}\}$}};
    \node[draw,rounded corners] (1left) at (-6,4) {{\footnotesize $b =\{\,\bp{B}{C},\bp{A}{O},\bp{B}{O}\,\}$}};
    \node[draw,rounded corners] (1mid) at (0,4) {{\footnotesize $a = \{\bp{B}{C},\bp{A}{O}, \pt\}$}};
    \node[draw,rounded corners] (1right) at (6,4) {{\footnotesize $c = \{\bp{A}{C},\bp{B}{C},\bp{A}{O}\}$}};
    \node[draw,rounded corners] (2left) at (-4,2) {{\footnotesize $\{\bp{B}{C},\bp{A}{O},\bp{B}{O}, \pt\}$}};
    \node[draw, align=center,rounded corners] (2right) at (4,2) {{\footnotesize $\{\bp{A}{C},\bp{B}{C},\bp{A}{O},\pt\}$}};
    \node[draw, align=center,rounded corners] (bottom) at (0,0) {{\footnotesize $e= \{\bp{A}{C}, \bp{B}{C}, \bp{A}{O}, \bp{B}{O}, \pt\}$}};    
    \draw[-] (top.south) -- (1left.north);
    \draw[-] (top.south) -- (1mid.north);
    \draw[-] (top.south) -- (1right.north);
    \draw[-] (1left.south) -- (2left.north);
    \draw[-] (1mid.south) -- (2left.north);
    \draw[-] (1mid.south) -- (2right.north);
    \draw[-] (1right.south) -- (2right.north);
    \draw[-] (2left.south) -- (bottom.north);
    \draw[-] (2right.south) -- (bottom.north);
    \end{tikzpicture}
    \caption{A sublattice of $\lat{KC}^{3}$, called $\mathfrak{S}_7$, showing that $\lat{KC}^{3}$ is not join-semidistributive. Notice that $a \vee b = a \vee c=d$, while $a \vee (b \wedge c)=a\vee e=a$, violating \eqref{eq:sdjoin}. }
    \label{fig:not-joinsemi}
\end{figure}

\paragraph{Semi-distributivity:} We now turn to the other generalization of distributivity mentioned above. Given arbitrary elements $x,y,z$ in a lattice $\lat{}$, the \emph{join-semidistributive law} (SD$_\vee$) and the \emph{meet-semidistributive law} (SD$_\wedge$) are respectively defined as
\begin{align}
    \text{SD$_\vee$:} & \quad u = x \vee y = x \vee z \quad \implies\quad u = x \vee (y \wedge z) \label{eq:sdjoin}\\
    \text{SD$_\wedge$:} & \quad u = x \wedge y = x \wedge z \quad\implies\quad u = x \wedge (y \vee z)\,.\label{eq:sdmeet}
\end{align}
A lattice $\lat{}$ is \textit{join-semidistributive} if it satisfies SD$_\vee$. Dually, a lattice $\lat{}$ is \textit{meet-semi-distributive} if it satisfies SD$_\wedge$.\footnote{\, A lattice that is both meet-semidistributive and join-semidistributive is said to be \textit{semidistributive} (it is not necessarily distributive).}

A sublattice of $\lat{KC}^{3}$, showing that $\lat{KC}^{3}$ is \emph{not} join-semidistributive, is depicted in \Cref{fig:not-joinsemi}. On the other hand, $\lat{KC}^{3}$ is meet-semidistributive. To see this, we will use the following characterization of meet-semidistributivity.\footnote{\, This is the order dual of the characterization given in \cite{gratzer2016} for join-semidistributivity.} 
\begin{lemma}
\label{lem:meetsemidis}
    A lattice $\lat{}$ satisfies \emph{SD}$_\wedge$ if and only if for every join-irreducible element $x\in\lat{}$, and $x^*$ the unique element covered by it, there exists a meet-irreducible element $y(x)$ which is the unique maximal element $y$ such that $y\succeq x^*$ but $y \nsucceq x$.
\end{lemma}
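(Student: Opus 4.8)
The plan is to establish the equivalence by reducing it to the analogous characterization of \emph{join}-semidistributivity proved in \cite{gratzer2016}, exploiting order-duality, and to complement this with a short direct verification of the easy (forward) implication. Recall that SD$_\wedge$ in a lattice $\lat{}$ is precisely SD$_\vee$ in the order dual $\lat{}^{\partial}$ (obtained by reversing $\preceq$), that the join-irreducible elements of $\lat{}$ are exactly the meet-irreducible elements of $\lat{}^{\partial}$, that the unique element $x^*$ covered by a join-irreducible $x$ becomes the unique element covering $x$ in $\lat{}^{\partial}$, and that ``maximal $y$ with $y\succeq x^*$ and $y\nsucceq x$'' turns into ``minimal $y$ with $y\preceq x^*$ and $y\npreceq x$''. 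Under this dictionary the statement of the lemma is, clause for clause, the join-semidistributive characterization of \cite{gratzer2016} applied to $\lat{}^{\partial}$, so the only real task on this route is to check the translation carefully; no further computation is needed.

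For the reader's benefit I would also prove the forward implication (SD$_\wedge$ $\Rightarrow$ the stated condition) directly, since it is short and isolates the role of the meet-irreducibility claim. Fix a join-irreducible $x$ with unique lower cover $x^*$ and set $M_x=\{\,y : y\succeq x^*\text{ and }y\nsucceq x\,\}$, which is nonempty since $x^*\in M_x$. For any $y\in M_x$ one has $x^*\preceq x\wedge y\preceq x$, and because $x^*\covb x$ the meet $x\wedge y$ equals $x^*$ or $x$; it cannot be $x$ (that would force $y\succeq x$), so $x\wedge y=x^*$. If $y_1,y_2$ were two distinct maximal elements of $M_x$, then $x\wedge y_1=x\wedge y_2=x^*$, and SD$_\wedge$ would give $x\wedge(y_1\vee y_2)=x^*$, placing $y_1\vee y_2$ in $M_x$ and contradicting maximality; hence $M_x$ has a unique maximal element $y(x)$, which in a finite lattice is its maximum. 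That $y(x)$ is meet-irreducible is automatic and does not use SD$_\wedge$: a maximal $y\in M_x$ with two distinct upper covers $w_1,w_2$ would have $w_1,w_2\notin M_x$, so (being $\succeq x^*$) both satisfy $w_i\succeq x$; but two distinct upper covers of $y$ have meet $y$, whence $y=w_1\wedge w_2\succeq x$, contradicting $y\in M_x$.

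The converse (the stated condition $\Rightarrow$ SD$_\wedge$) is the substantive direction, and is exactly what the duality argument imports from \cite{gratzer2016}. If instead one wants a self-contained proof, the natural route is a minimal counterexample: assuming $x\wedge y=x\wedge z=u$ but $u':=x\wedge(y\vee z)\succ u$, one picks a join-irreducible $j\preceq u'$ with $j\npreceq u$ (such $j$ exists because $u'$ is the join of the join-irreducibles beneath it), and checks $j\preceq x$, $j\preceq y\vee z$, $j\npreceq y$, $j\npreceq z$. If both $y\vee j^*$ and $z\vee j^*$ avoid $j$, they lie in $M_j$ and are dominated by its maximum $m:=y(j)$, forcing $y\vee z\preceq m$ and hence $j\preceq m$, contradicting $m\in M_j$. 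The remaining possibility, that (say) $y\vee j^*\succeq j$, is the main obstacle: it is not disposed of by elementary manipulation and requires either a more careful inductive choice of $j$ or precisely the machinery underlying \cite{gratzer2016}. For this reason I would present the proof via order-duality, which settles this delicate case once and for all inside the cited theorem.
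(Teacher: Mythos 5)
Your proposal takes essentially the same route as the paper, whose entire proof is the citation to Theorem 3-1.4 of \cite{gratzer2016} together with the remark (in a footnote) that the lemma is the order dual of that join-semidistributivity characterization; your duality dictionary is exactly the translation the paper relies on. Your supplementary direct proof of the forward implication (SD$_\wedge$ implies the condition) is correct, and you rightly flag that the converse is the substantive direction that neither you nor the paper proves from scratch but instead imports from the cited theorem.
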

\begin{proof}
   See Theorem 3-1.4 of \cite{gratzer2016}. 
\end{proof}

As shown in \Cref{fig:N3KPMIlattice}, the join-irreducible elements are the atoms and the allowed collections of four $\bpg$. If $\pmi$ is an atom, $\pmi^*$ is the bottom element, and the unique meet-irreducible element $\pmi'(\pmi)$ which satisfies the conditions on $y$ in \Cref{lem:meetsemidis} is the only coatom (extreme ray) which is not greater than $\pmi$ (since $\widehat{\pmi}$ contains $\pt$ and 5 $\bpg$). On the other hand, if $\pmi$ is one of the other join-irreducible elements, $\pmi^*$ is represented by the union of $\pt$ with the elements in $\widehat{\pmi}$, so the unique meet-irreducible element $\pmi'(\pmi)$ satisfying \Cref{lem:meetsemidis} is now the coatom represented by $\pt$. This shows that the conditions of \Cref{lem:meetsemidis} are satisfied, and that $\lat{KC}^3$ is meet-semidistributive.

\subsection{Comments on the lattice \texorpdfstring{$\lat{KC}^4$}{} of KC-compatible PMIs for \texorpdfstring{$\N=4$}{N=4}}
\label{ssec:lattic4}

\begin{table}
    \centering
    \footnotesize
    \begin{tabular}{c|c|c|c|c|c|c|c|c|c|}
        \cline{2-10}
        & 0 & 1 & 2 & 3 & 4 & 5 & 6 & 7 & 8\\
        \hline
        \multicolumn{1}{|c|}{$\lat{PMI}^4$} & 1 & 3085 & 66005 & 532585 & 2254005 & 5719656 & 9301825 & 10032200 & 7275805\\
        \hline
        \multicolumn{1}{|c|}{$\lat{KC}^4$} & 1 & 20 & 175 & 840 & 2465 & 4843 & 6345 & 5875 & 4100\\
        \hline
    \end{tabular}\\
    \vspace{0.5cm}
    \begin{tabular}{|c|c|c|c|c|c|c|c|c|}
        \hline
        9 & 10 & 11 & 12 & 13 & 14 & 15\\
        \hline
        3541900 & 1138826 & 234470 & 29455 & 2100 & 75 & 1\\
        \hline
        2300 & 1072 & 430 & 150 & 45 & 10 & 1\\
        \hline
    \end{tabular}
    \caption{The number of faces of $\lat{PMI}^4$ and $\lat{KC}^4$ for each dimension $0\leq d\leq 15$.}
    \label{tab:n4}
\end{table}


\begin{figure}[tbp]
    \centering
    \begin{tikzpicture}[scale=0.85]
    \node[draw,rounded corners, fill=green!20!] (1) at (0,21) {{\footnotesize $d=1$}};
    \node[draw,rounded corners] (2) at (0,20) {{\footnotesize $d=2$}};
    \node[draw,rounded corners] (3) at (0,19) {{\footnotesize $d=3$}};
    \node[draw,rounded corners] (4a) at (-4,17) {{\footnotesize $d=4$}};
    \node[draw,rounded corners, fill=green!20!] (4b) at (4,17) {{\footnotesize $d=4$}};
    \node (5a) at (-6,15) {$\bullet$};
    \node at (-5.1,15) {{\footnotesize $(d=5)$}};
    \node[draw,rounded corners] (5b) at (-2,15) {{\footnotesize $d=5$}};
    \node[draw,rounded corners, fill=green!20!] (5c) at (2,15) {{\footnotesize $d=5$}};
    \node[draw,rounded corners] (5d) at (6,15) {{\footnotesize $d=5$}};
    \node[draw,rounded corners] (6a) at (-4,13) {{\footnotesize $d=6$}};
    \node[draw,rounded corners, fill=blue!20!] (6b) at (0,13) {{\footnotesize $d=6$}};
    \node[draw,rounded corners, fill=green!20!] (6c) at (4,13) {{\footnotesize $d=6$}};
    \node[draw,rounded corners] (7) at (0,11) {{\footnotesize $d=7$}};
    \node[draw,rounded corners] (8a) at (-3,9) {{\footnotesize $d=8$}};
    \node[draw,rounded corners, fill=blue!20!] (8b) at (3,9) {{\footnotesize $d=8$}};
    \node[draw,rounded corners] (9) at (0,7) {{\footnotesize $d=9$}};
    \node[draw,rounded corners] (10) at (0,6) {{\footnotesize $d=10$}};
    \node[draw,rounded corners] (11) at (0,5) {{\footnotesize $d=11$}};
    \node[draw,rounded corners] (12a) at (-3,3) {{\footnotesize $d=12$}};
    \node[draw,rounded corners, fill=blue!20!] (12b) at (3,3) {{\footnotesize $d=12$}};
    \node[draw,rounded corners] (13) at (0,1) {{\footnotesize $d=13$}};
    \node[draw,rounded corners, fill=blue!20!] (14) at (0,0) {{\footnotesize $d=14$}};
    \draw[dashed] (1.south) -- (2.north);
    \draw[dashed] (2.south) -- (3.north);
    \draw[dashed] (3.south west) -- (4a.north);
    \draw[dashed] (4b.north) -- (3.south east);
    \draw[dashed] (4a.south) -- ([xshift=-10pt,yshift=-9pt]5a.north east);
    \draw[dashed] (4a.south) -- (5b.north);
    \draw[dashed] (5c.north) -- (4a.south east);
    \draw[dashed] (4b.south) -- (5b.north);
    \draw[dashed,red] (5d.north) .. controls (6,17.5) and (4,18.5) .. (3.east);
    \draw[dashed] ([xshift=1pt,yshift=7pt]5a.south) -- (6a.north);
    \draw[dashed] (5b.south) -- (6a.north);
    \draw[dashed] (5c.south) -- (6a.north);
    \draw[dashed] (5d.south) -- (6a.north);
    \draw[dashed] (5b.south) -- (6b.north);
    \draw[dashed] (6c.north) -- (5b.south east);
    \draw[dashed] (6a.south) -- (7.north);
    \draw[dashed] (6b.south) -- (7.north);
    \draw[dashed] (6c.south) -- (7.north);
    \draw[dashed] (7.south) -- (8a.north);
    \draw[dashed] (7.south) -- (8b.north);
    \draw[dashed] (8a.south) -- (9.north west);
    \draw[dashed] (8b.south) -- (9.north east);
    \draw[dashed] (9.south) -- (10.north);
    \draw[dashed] (10.south) -- (11.north);
    \draw[dashed] (11.south) -- (12a.north);
    \draw[dashed] (11.south) -- (12b.north);
    \draw[dashed] (12a.south) -- (13.north west);
    \draw[dashed] (12b.south) -- (13.north east);
    \draw[dashed] (13.south) -- (14.north);
    \end{tikzpicture}
    \caption{A schematic representation of $\lat{KC}^4$ where the top (origin) and bottom ($\mathbb{R}^{\D}$) elements have been omitted. The blue (green) elements are join-irreducible (meet-irreducible). The red line highlights a cover relation between elements whose dimensions differ by more than one. The solid dot is the join of all the atoms.}
    \label{fig:N4KClattice}
\end{figure}
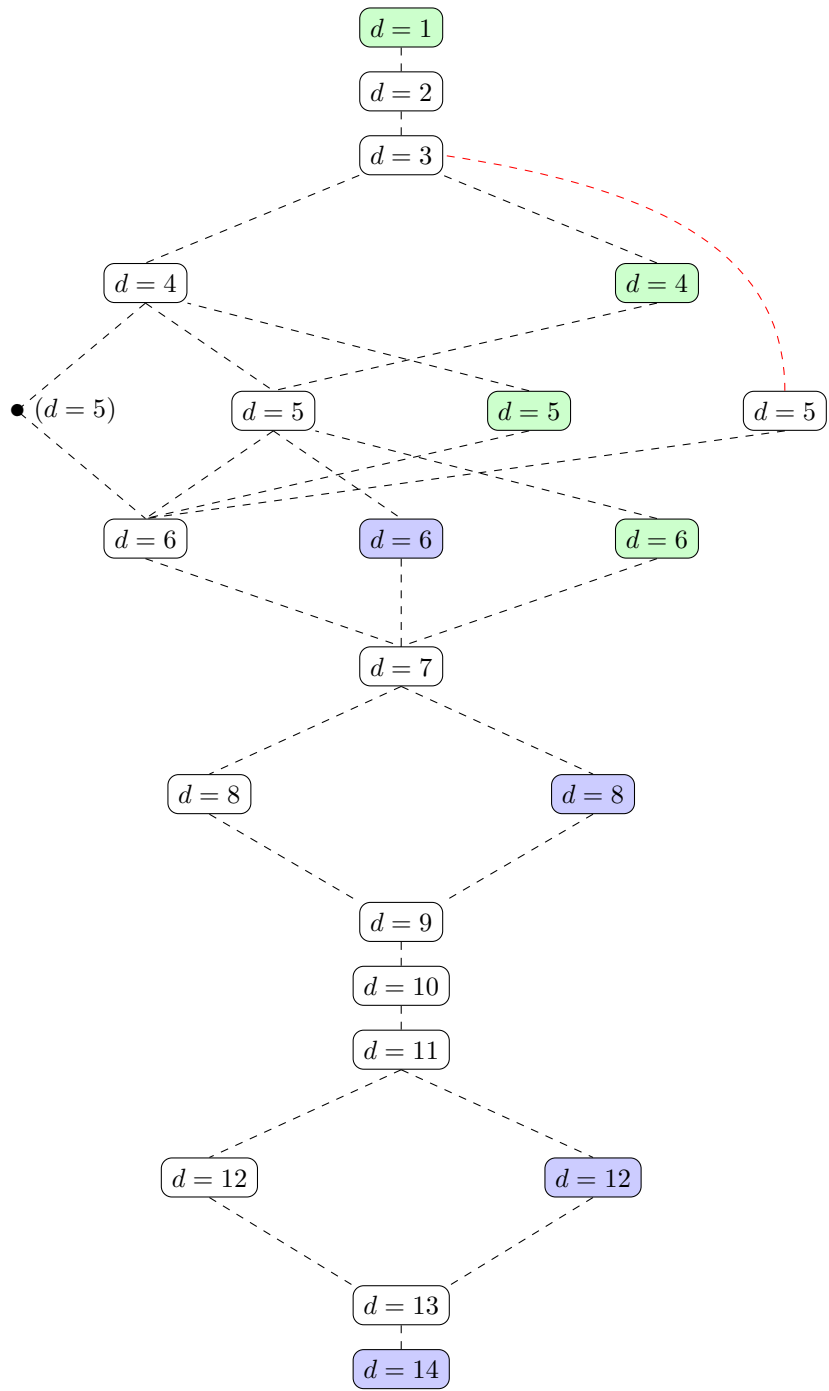

The lattice $\lat{KC}^4$ is considerably more complicated than $\lat{KC}^3$, and a graphic representation similar to the one in \Cref{fig:N3KPMIlattice} is unfeasible. The number of elements for each dimension is given in \Cref{tab:n4}, where we also show the number of faces of the SAC$_4$. One can immediately appreciate how small $\lat{KC}^4$ is compared to $\lat{PMI}^4$. Nevertheless, its structure is already sufficiently rich to mark a departure from the simple case of $\N=3$. Specifically, it turns out that $\lat{KC}^4$ is \textit{not coatomistic}, \textit{does not satisfy the} JDCC, is \textit{not a sublattice} of $\lat{PMI}^4$, and is \textit{not meet-semidistributive}. 

A cartoon of the structure of $\lat{KC}^4$ is shown in \Cref{fig:N4KClattice}, where the top and bottom elements have been omitted. The boxes represent sets of elements with the indicated dimension, and the dashed lines represent the cover relations. For each dimension we have distinguished the join-irreducible elements (in blue) from the meet-irreducible ones (in green), and the elements which are neither join-irreducible nor meet-irreducible are uncolored. Notice that the presence of meet-irreducible elements that are not coatoms immediately implies that, unlike $\lat{KC}^3$, $\lat{KC}^4$ is not coatomistic. 
On the other hand, one can verify that as in the $\N=3$ case, all join-irreducible elements are principal down-sets in the MI-poset. 

The solid dot is the $5$-dimensional PMI obtained by taking the join of all atoms, and notice that it is incomparable with all the meet-irreducible elements that are not coatoms. The dashed line in red highlights that there exist $3$-dimensional PMIs that cover $5$-dimensional ones. This in particular implies that $\lat{KC}^4$ does not satisfy the JDCC, and in turn, that it is neither lower nor upper-semimodular. The fact that $\lat{KC}^4$ is not a sublattice of $\lat{PMI}^4$ can be verified using the knowledge of the full $\lat{PMI}^4$. It suffices to check that there is at least one pair of PMIs in $\lat{KC}^4$ whose join in $\lat{PMI}^4$ is not an element of $\lat{KC}^4$. Indeed, the fact that $\lat{KC}^4$ is not meet-semidistributive can also be verified explicitly using \Cref{lem:meetsemidis}. Since these computations are not particularly illuminating, and the explicit description of the counterexamples is complicated by the large cardinality of the sets involved, we omit the details.

\subsection{General properties of \texorpdfstring{$\lat{KC}^\N$}{L-KC} for arbitrary \texorpdfstring{$\N$}{N}}
\label{ssec:latticN}

Having analyzed the structure of $\lat{KC}^3$ and $\lat{KC}^4$ in detail, we now turn to the general case. We start by proving a general result regarding a specific embedding of the $\N$-party lattice into an $\N'$-party lattice for $\N'>\N$. We then combine this result with the previous observations about $\N=3$ and $\N=4$ into theorems that summarize the main structural properties of $\lat{KC}^\N$ for an arbitrary number of parties, and list a few technical questions that remain open, which we leave for future work. 

As we will further explore in the next subsection, given any $\N$ and $\N'>\N$, there are several sublattices of $\lat{KC}^{\N'}$ which are isomorphic to $\lat{KC}^{\N}$. Intuitively, this has to do with the fact that when we specify a PMI for $\N$ parties, this does not automatically specify which MI instances involving the additional parties should vanish, at least not completely. Here we focus on one such sublattice that will be particularly useful for the proof of \Cref{thm:main_theorem} below, and for more general constructions see \S\ref{subsec:constructions}. 

The simple construction that we describe here is akin to the one in quantum mechanics, where given an $\N$-party state we simply add an ancillary system with $\N'-\N$ parties in a pure state, the new parties being both completely uncorrelated with the original system and among themselves. Notice however that while this quantum mechanical intuition is useful to formulate the construction, we are working with PMIs that a priori might be unrealizable, rather than with genuine quantum states.

As usual we denote by $\uI,\uJ,\uK,\ldots$ the subsets of $\psys{\N}$, and for the new parties we introduce the notation
\begin{equation}
    \Delta = [\N']\setminus[\N]\,,\qquad \Il,\Jl,\Kl,\dots\subseteq \Delta\,.
\end{equation}
A generic subset of $\psys{\N'}$ that contains some of the original parties in $[\N]$ (including the purifier), as well as new parties in $\Delta$, will be written as $\uI\Il=\uI\cup\Il$.\footnote{\, In general, the purifier $0$ in the larger $(\N+1)$-party system can be different from that in the original $\N$-party system. Notice that we do not underline the new indices, labeling the purifier of the entire system only with the indices that pertain to the $\N$-party system. This choice will be convenient below and intuitively it is justified by the fact that the new parties are all uncorrelated and in a pure state.}

Given a PMI $\pmi$ in $\lat{KC}^\N$, we would like to map it to a corresponding PMI $\pmi'$ in $\lat{KC}^{\N'}$ by specifying its $\zds$. First, we demand that all MI instances where one of the arguments only involves new parties must vanish, i.e.,
\begin{align}\label{eq:newPMI1}
    \mi(\Il:\Kl)' = 0 \,, \quad \mi(\Il:\uK)' = 0 \,, \quad \mi(\Il:\Kl\uK)' = 0\,,
\end{align}
where $\mi(\cdot:\cdot)'$ indicates that the MI instance is in the $\N'$-party system (as opposed to the $\N$-party system). Notice that the above equations do not depend on the choice of the PMI involving $\N$ parties. They implement the independence (and purity) of the new parties that we mentioned above, and their implications can be understood geometrically as follows.

Consider the linear subspace of $\mathbb R^{\D'}$ specified by \eqref{eq:newPMI1} and notice in particular that 
\begin{equation}  \label{eq:basis1}
    \mi(\Il:\Il^c)' = 0\quad \implies\quad \ent_{\Il}=0\qquad \forall\, \Il\,,
\end{equation}
where $\Il^c \equiv \psys{\N'}\setminus \Il$ is the complement of $\Il$ in the $\N'$-party system,\footnote{\,  Although $\Il^c$ includes the purifier, we do not underline the index for the sake of notational simplicity.} and we used \eqref{eq:trivial_ins}. From \eqref{eq:basis1}, it follows that the second condition $\mi(\Il:\uK)' = 0$ in \eqref{eq:newPMI1} becomes
\begin{equation} \label{eq:basis2}
    \mi(\Il:\uK)' = 0\quad \implies \quad \ent_{\Il\uK}=\ent_{\uK}\qquad \forall\, \Il,\uK\,.
\end{equation}
Notice that the first condition, as well as the third condition (for $\Kl\uK\subset \Il^c$), in \eqref{eq:newPMI1} are automatically implied by \eqref{eq:basis1} and \eqref{eq:basis2}, and that together these relations specify an embedding of $\mathbb{R}^{\D}$ inside $\mathbb{R}^{\D'}$. 

We can now use these relations to rewrite the instances of SA in the $\N'$-party system that are not yet specified in the $\N$-party system. These are all the SA instances of the form $\mi(\uI:\Kl\uK)' \geq 0$ and $\mi(\Il\uI:\Kl\uK)' \geq 0$. We compute 
\begin{align} \label{eq:sa_reduction}
   \begin{split}
       \mi(\Il\uI:\Kl\uK)' &= \ent_{\Il\uI} + \ent_{\Kl\uK} - \ent_{\Il\uI\Kl\uK} \\
        &= \ent_{\uI} + \ent_{\uK} - \ent_{\uI\uK}  \\
       &= \mi(\uI:\uK) \geq 0 \,,
   \end{split}
\end{align}
where we used \eqref{eq:basis2} in the second equality. It is similarly straightforward to verify that
\begin{align}\label{eq:sa_reduction2}
    \mi(\uI:\Kl\uK)' = \mi(\uI:\uK) \geq 0\,.
\end{align}
The meaning of \eqref{eq:sa_reduction} and \eqref{eq:sa_reduction2} is that on the subspace of $\mathbb{R}^{\D'}$ specified by \eqref{eq:basis1} and \eqref{eq:basis2} (which is isomorphic to $\mathbb{R}^{\D}$), the new instances of SA all reduce to the set of instances of SA for the $\N$-party system. These inequalities therefore specify a cone which is simply an embedding of the SAC$_{\N}$ in $\mathbb{R}^{\D'}$ as a face of the SAC$_{\N'}$. 

Given an $\N$-party PMI $\pmi$ corresponding to the face $\face=\mu^{-1}(\pmi)$ of the SAC$_{\N}$, we can then map it to the $\N'$-party PMI $\pmi'=\mu(\face')$ specified by the analogous face $\face'$ of the SAC$_{\N}$ embedded in SAC$_{\N'}$.\footnote{\, With a slight abuse of notation, we denoted the map between faces of SAC and PMIs by $\mu$ in both spaces.} This is achieved by including in the $\zds$ of $\pmi'$, besides the vanishing MI instances specified in \eqref{eq:newPMI1}, also the instances obtained via the following implication due to \eqref{eq:sa_reduction} and \eqref{eq:sa_reduction2}:
\begin{align}\label{eq:newPMI4}
    \mi(\uI: \uK) = 0  &\quad\implies\quad \mi(\uI: \Kl\uK)' = 0\,, \quad \mi(\Il\uI: \Kl\uK)' = 0 \,.
\end{align}
Notice that this construction is independent from KC, and we can map any $\N$-party PMI to a new $\N'$-party one in this fashion. However, it is immediately clear from \eqref{eq:newPMI1} and \eqref{eq:newPMI4} that KC-compatible PMIs are mapped to KC-compatible PMIs.

We can view the construction we just described as a map $\phi$ that maps PMIs from an $\N$-party system to an $\N'$-party system with $\N' > \N$, i.e.,
\begin{align} \label{eq:embedding_map}
\begin{split}
    \phi:\;\; \lat{KC}^{\N}\; &\hookrightarrow\; \lat{KC}^{\N'}\\
    \pmi\;\; &\mapsto\;\; \phi(\pmi)\,.
\end{split}
\end{align}
This map has particularly nice properties, since it is a ``cover-preserving embedding'' of $\lat{KC}^{\N}$ into $\lat{KC}^{\N'}$. We will prove this in \Cref{lem:canonical_embedding} below, but let us first briefly clarify what we mean by this terminology. Given two lattices $\lat{}$ and $\lat{}'$, an \textit{embedding} $\phi$ of $\lat{}$ into $\lat{}'$ is a one-to-one lattice homomorphism from $\lat{}$ to $\lat{}'$, and we will say that $\phi$ is \textit{cover-preserving}, if 
\begin{equation}\label{eq:cover-preserving}
    \phi(x)\covb\phi(y)\quad \iff\quad x\covb y\qquad \forall\, x,y\in \lat{}\,.
\end{equation}
With these definitions, we then have the following result.

\begin{lemma}
\label{lem:canonical_embedding}
    The map \eqref{eq:embedding_map} is a cover-preserving embedding.
\end{lemma}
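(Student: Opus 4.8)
The plan is to prove the three facts that together make up the statement, using the definition of an embedding from the paragraph preceding \eqref{eq:cover-preserving}, namely that $\phi$ is injective, a lattice homomorphism (preserves both meet and join), and satisfies \eqref{eq:cover-preserving}. The organizing observation is that $\phi$ realizes $\lat{KC}^{\N}$ as an \emph{interval} of $\lat{KC}^{\N'}$; once this is established, all three properties follow from elementary lattice theory, which neatly isolates the one genuinely geometric input.

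First I would record the combinatorial form of $\phi$. Writing $S=\mgs^0(\pmi)$ for the $\zds$ of $\pmi$, the construction gives $\mgs^0(\phi(\pmi)) = \mgs^0_\Delta \cup L(S)$, where $\mgs^0_\Delta$ is the fixed set of all instances in \eqref{eq:newPMI1} (independent of $\pmi$), and $L(S)=\{\,\mi(\uI:\Kl\uK)',\ \mi(\Il\uI:\Kl\uK)' : \mi(\uI:\uK)\in S\,\}$ is the lift dictated by \eqref{eq:newPMI4}. The map $S\mapsto L(S)$ is manifestly injective and monotone for inclusion and satisfies $L(S_1)\cap L(S_2)=L(S_1\cap S_2)$, since membership of a lifted instance depends only on the underlying $\N$-party instance. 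Because the order on both $\lat{KC}$ lattices is reverse inclusion of subspaces, i.e.\ inclusion of $\zdss$ (cf.\ \Cref{fn:revincl}), these properties show at once that $\phi$ is injective, order-preserving and order-reflecting, hence an order-embedding; and since the meet in $\lat{KC}$ is the intersection of $\zdss$ (\Cref{thm:KC-PMIlattice}, eq.\ \eqref{eq:PMImeet}), they give $\mgs^0(\phi(\pmi_1\wedge\pmi_2))=\mgs^0_\Delta\cup L(S_1\cap S_2)=\mgs^0(\phi(\pmi_1))\cap\mgs^0(\phi(\pmi_2))$, so $\phi$ preserves the meet.

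The crux is to identify the image. I claim $\phi(\lat{KC}^{\N})$ is exactly the interval $[\phi(\bot),\top]$ of $\lat{KC}^{\N'}$, where $\bot$ is the full cone SAC$_\N$ (so $\phi(\bot)$ is the embedded face $\Phi$ with $\zds$ equal to $\mgs^0_\Delta$) and $\top$ is the origin, with $\phi(\top)=\top$. That the image lies in this interval is clear, as every $\mgs^0(\phi(\pmi))$ contains $\mgs^0_\Delta$. For the reverse inclusion I would argue geometrically: the relations \eqref{eq:basis1}--\eqref{eq:basis2} associated with $\mgs^0_\Delta$ cut out the subspace carrying the embedded cone $\Phi\cong\text{SAC}_{\N}$, so any KC-PMI $\pmi'$ with $\mgs^0(\pmi')\supseteq\mgs^0_\Delta$ has $\mu^{-1}(\pmi')\subseteq\Phi$. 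Since the faces of SAC$_{\N'}$ contained in the face $\Phi$ are precisely the faces of the cone $\Phi$, such a face is the image of a unique face of SAC$_{\N}$, i.e.\ $\pmi'=\phi(\pmi)$ for a unique $\N$-party PMI $\pmi$; and the equivalence ``$\pmi$ is KC $\iff$ $\phi(\pmi)$ is KC'' (immediate from \Cref{lem:KCdownset} and the monotonicity of $L$) places $\pmi\in\lat{KC}^{\N}$. Thus the image is the entire interval.

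Finally I would harvest the consequences. An interval of any lattice is a sublattice, so meets and joins taken inside $[\phi(\bot),\top]$ coincide with those of $\lat{KC}^{\N'}$; since $\phi$ is an order-isomorphism onto this interval, it is a lattice isomorphism onto a sublattice and in particular preserves the join $\kcjoin$ (this is exactly where the interval structure is indispensable, because the $\lat{KC}$ join is \emph{not} the union of $\zdss$). Hence $\phi$ is an injective lattice homomorphism, i.e.\ an embedding. For \eqref{eq:cover-preserving}, note that the interval is convex: if $c,d\in[\phi(\bot),\top]$ and $c\prec z\prec d$ then $z$ lies in the interval, so $c\covb d$ holds in $\lat{KC}^{\N'}$ iff it holds in the interval; combined with the order-isomorphism $\phi$ this yields $x\covb y\iff\phi(x)\covb\phi(y)$, finishing the proof. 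The main obstacle is the reverse inclusion in the image computation, that is, showing every KC-PMI above $\Phi$ has the product form $\mgs^0_\Delta\cup L(S)$, for which the faces-of-a-face fact from polyhedral geometry is the essential ingredient.
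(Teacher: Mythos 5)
Your proof is correct, and while it rests on the same geometric core as the paper's, it organizes the argument differently. The shared essential ingredient is the fact that any PMI of $\lat{KC}^{\N'}$ whose $\zds$ contains the instances \eqref{eq:newPMI1} automatically satisfies the implications \eqref{eq:newPMI4} (via the reductions \eqref{eq:sa_reduction} and \eqref{eq:sa_reduction2}) and hence lies in the image of $\phi$. The paper deploys this fact locally, inside a proof by contradiction of the reverse implication in \eqref{eq:cover-preserving} (an intermediate element $\pmi_*$ with $\phi(\pmi_1)\prec\pmi_*\prec\phi(\pmi_2)$ would have to contain \eqref{eq:newPMI1} in its $\zds$, hence be in the image, contradicting $\pmi_1\covb\pmi_2$), and dispatches the embedding claim with a one-line appeal to the geometric description. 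You instead elevate the same fact to a structural statement, that $\phi(\lat{KC}^{\N})$ is exactly the interval $[\phi(\bot),\top]$ of $\lat{KC}^{\N'}$, and then harvest both halves of the lemma from general lattice theory: intervals are sublattices (so the order-isomorphism onto the image preserves $\wedge$ and $\kcjoin$ as computed in the ambient lattice) and intervals are convex (which is precisely the paper's contradiction argument in disguise). What your packaging buys is a genuinely more careful treatment of the embedding claim: an order-isomorphism onto an arbitrary subset need not preserve the ambient join, and since $\lat{KC}$ is \emph{not} a sublattice of $\lat{MID}$ or (for $\N\geq 4$) of $\lat{PMI}$, join-preservation is exactly the kind of point that deserves the explicit interval argument you give, whereas the paper leaves it implicit in ``the image is isomorphic to $\lat{KC}^{\N}$.'' The cost is that you must verify the interval identification in full (the face-of-a-face fact plus the transfer of KC-compatibility through the lift $L$), which you do correctly.
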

\begin{proof}
    From the geometric description discussed above, it is clear that the image $\phi(\lat{KC}^{\N})$ is isomorphic to $\lat{KC}^{\N}$, implying that $\phi$ is an embedding.
    
   To prove that $\phi$ is cover-preserving, we need to prove both implications in \eqref{eq:cover-preserving}. For the forward implication, suppose there are two PMIs $\pmi_1,\pmi_2 \in \lat{KC}^\N$ such that $\phi(\pmi_1) \covb \phi(\pmi_2)$. As $\phi(\lat{KC}^\N)$ is isomorphic to $\lat{KC}^\N$, this immediately also implies $\pmi_1 \covb \pmi_2$. 
   For the reverse implication, suppose that $\phi$ is \emph{not} cover-preserving, and consider two PMIs $\pmi_1,\pmi_2\in\lat{KC}^{\N}$ such that $\pmi_1\covb\pmi_2$, but there exists $\pmi_{\!*}\in\lat{KC}^{\N'}$ such that $\phi(\pmi_1)\prec\pmi_{\!*}\prec\phi(\pmi_2)$. Notice that we must have $\pmi_{\!*}\notin\phi(\lat{KC}^{\N})$, since otherwise $\pmi_1 \not\covb \pmi_2$. Let $\mgs^0_{1,2}$ and $\mgs^0_{\!*}$ be the respective $\zdss$ of $\phi(\pmi_{1,2})$ and $\pmi_{\!*}$. Since $\phi(\pmi_1)\prec\pmi_{\!*}$, we have $\mgs^0_1 \subset \mgs^0_{\!*}$,\footnote{\, By $\mgs^0_1 \subset \mgs^0_*$ we mean that $\mgs^0_1$ is a \textit{strict} subset of $\mgs^0_*$, that is, $\mgs^0_1 \subsetneq \mgs^0_*$.} 
which means $\mgs^0_{\!*}$ includes all the MI instances in \eqref{eq:newPMI1}. However, by \eqref{eq:sa_reduction} and \eqref{eq:sa_reduction2}, this in turn implies that $\pmi_{\!*}$ obeys \eqref{eq:newPMI4} and is therefore an element of $\phi(\lat{KC}^{\N})$, contradicting our above assumption and completing the proof.
\end{proof}

Having shown that the map \eqref{eq:embedding_map} is an embedding in the lattice sense, we then introduce the following terminology.

\begin{defi}[Canonical embedding of a PMI]
For $\N' > \N$, the canonical embedding of a \emph{PMI} $\pmi\in\lat{\emph{KC}}^{\N}$ with $\zds$ $\mgs^0$ into $\lat{\emph{KC}}^{\N'}$ is the \emph{PMI} whose $\zds$ is obtained by appending to $\mgs^0$ the vanishing \emph{MI} instances given in \eqref{eq:newPMI1} as well as those given by the implication \eqref{eq:newPMI4}.
\end{defi}

Notice that in the definition of the canonical embedding, we have chosen \textit{not} to permute the parties (for example, $1\in\psys{\N}$ is mapped to $1\in\psys{\N'}$, etc.). By including such permutations, we can obtain several new embeddings of $\lat{KC}^{\N}$ into $\lat{KC}^{\N'}$; we will come back to this point in \S\ref{subsec:constructions}. 

We can now use \Cref{lem:canonical_embedding}, and various observations that we made for $\lat{KC}^3$ and $\lat{KC}^4$ in the previous subsections, to prove several properties of $\lat{KC}^\N$ for an arbitrary number of parties. These are summarized in the following theorem.

\begin{thm} \label{thm:main_theorem}
    For any number of parties $\N\geq 4$, the lattice $\lat{\emph{KC}}^\N$ of \emph{KC}-compatible \emph{PMIs} has the following properties: 
\begin{enumerate}[label={\footnotesize \emph{(}\roman*\emph{)}}, itemsep=0pt]
    \item It has the same meet as both $\lat{\emph{PMI}}^\N$ and $\lat{\emph{MID}}^\N$, but is not a sublattice of either.
    \item Each atom is a \emph{PMI} with a single vanishing \emph{MI} instance that is a minimal element of the \emph{MI}-poset (i.e., an instance involving only single parties).
    \item It is neither atomistic nor coatomistic.
    \item It does not satisfy the Jordan-Dedekind chain condition.\footnote{\, Therefore it is also not upper or lower semimodular, modular, or distributive.}
    \item It is neither join-semidistributive nor meet-semidistributive.
\end{enumerate}
\end{thm}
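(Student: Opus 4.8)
The plan is to make \Cref{lem:canonical_embedding} do essentially all the work: I will exhibit a small, fixed set of explicit counterexamples living in $\lat{KC}^3$ and $\lat{KC}^4$ and then transport them to every larger $\N$ through the canonical embedding $\phi$. The key preliminary observation—a mild strengthening of \Cref{lem:canonical_embedding} that I would record first—is that the image $\phi(\lat{KC}^{\N})$ is an \emph{up-set} (order filter) in $\lat{KC}^{\N'}$. This is exactly the reverse-implication argument already carried out in the proof of \Cref{lem:canonical_embedding}: any PMI whose $\zds$ contains all the instances in \eqref{eq:newPMI1} automatically obeys \eqref{eq:newPMI4} and hence lies in the image; since every image element has a $\zds$ containing \eqref{eq:newPMI1}, the image is precisely $\{\pmi\in\lat{KC}^{\N'}:\mgs^0(\pmi)\supseteq\eqref{eq:newPMI1}\}$, which is closed upward. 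Consequently $\phi$ is an isomorphism of $\lat{KC}^{\N}$ onto the principal filter $[\phi(\bot),\top]$ of $\lat{KC}^{\N'}$; being such an isomorphism it preserves meets, joins, cover relations, meet- and join-irreducibility, and, being a linear embedding of subspaces, the dimension of each PMI. The identical construction applied to all of $\lat{PMI}$ and $\lat{MID}$ realizes these as intervals inside $\lat{PMI}^{\N'}$ and $\lat{MID}^{\N'}$ too, so $\phi$ also preserves the $\lat{PMI}$- and $\lat{MID}$-joins.

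Part (ii) I would prove directly for all $\N$. The bottom of $\lat{KC}^\N$ has empty $\zds$, so an atom is a minimal nonempty $\zds$ that is simultaneously a down-set and a PMI. Any nonempty down-set contains a minimal element $x$ of the MI-poset, and by the height formula \eqref{eq:height} such $x$ is an instance $\mi(i\!:\!j)$ with $|i|=|j|=1$; by \Cref{lem:facet} the hyperplane $\mi(i\!:\!j)=0$ is a facet (since $\{j\}\neq\{i\}^c$ for $\N\geq 2$), so $\{x\}$ is itself a KC-compatible PMI. Hence no atom can carry more than one vanishing instance, and conversely each such singleton is an atom. For part (i), the meet equality with $\lat{PMI}^\N$ is \Cref{thm:KC-PMIlattice}, and the equality with $\lat{MID}^\N$ is immediate from \eqref{eq:lattice_intersection} and \eqref{eq:PMImeet} since in all three lattices the meet is intersection of $\zdss$. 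The sublattice failures I witness by embedding fixed examples: for $\lat{MID}$ I take the three principal down-sets of \eqref{eq:kcmidex}, whose $\lat{KC}^3$-join is the origin while their $\lat{MID}^3$-join is a proper union, and since $\phi$ preserves both joins the two still differ in $\lat{KC}^{\N}$; for $\lat{PMI}$ I use the pair in $\lat{KC}^4$ whose $\lat{PMI}^4$-join $\pmi_3\notin\lat{KC}^4$ (verified in \S\ref{ssec:lattic4}), noting that $\phi$ preserves the $\lat{PMI}$-join and that $\phi(\pmi_3)\notin\lat{KC}^{\N'}$ because down-set-ness of a $\zds$ is detected already on the original MI instances (the embedding only adds instances involving new parties).

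Parts (iii)–(v) follow the same template. For ``not atomistic'' I observe that the canonical embedding $\phi(\pt)$ of the $\N=3$ perfect-state PMI has a $\zds$ containing \emph{every} single-party instance (by \eqref{eq:newPMI1} all new parties are totally independent, and the original single-party MIs already vanish for $\pt$), so it is a common upper bound of all atoms that is not the top; hence the join of all atoms is not the top. For ``not coatomistic'' I transport a meet-irreducible non-coatom of $\lat{KC}^4$ (a green, non-$d{=}1$ element of \Cref{fig:N4KClattice}); because $\phi$ maps $\lat{KC}^\N$ onto a filter, the interval above the image element is isomorphic to that above the original, so its number of covers is preserved and it remains meet-irreducible and non-top, hence a non-coatom. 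For the failure of the JDCC I transport the dimension-two cover of \Cref{fig:N4KClattice} (the red line): $\phi$ is cover- and dimension-preserving, so the image is again a cover between PMIs whose dimensions differ by two. Finally, join- and meet-semidistributivity fail on the fixed sublattices $\mathfrak{S}_7\subseteq\lat{KC}^3$ of \Cref{fig:not-joinsemi} and the $\N=4$ configuration flagged in \S\ref{ssec:lattic4}; as $\phi$ is a lattice embedding it carries each onto an isomorphic sublattice violating \eqref{eq:sdjoin} and \eqref{eq:sdmeet} in the same way.

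The main obstacle, and precisely the reason the up-set observation is indispensable, is that several of the relevant properties—meet-irreducibility, the cover relation, being a (co)atom—are \emph{local}, and could a priori be spoiled by the MI instances involving the newly added parties: an element with a unique cover inside $\phi(\lat{KC}^\N)$ might acquire additional covers in $\lat{KC}^{\N'}$, breaking the transport of ``not coatomistic'' and of the JDCC violation. Establishing that nothing lying above an image element escapes the image—the up-set property—is exactly what rules this out, and is the step on which the whole argument turns. The only remaining non-structural input is the finite verification of the $\N=4$ counterexamples themselves, which I take as given from \S\ref{ssec:lattic4}.
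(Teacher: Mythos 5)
Your proposal is correct, and in outline it is the same proof as the paper's: prove (ii) and non-atomisticity directly, and transport fixed $\N=3$ and $\N=4$ counterexamples to all larger $\N$ via the canonical embedding, taking the finite $\N=4$ verifications of \S\ref{ssec:lattic4} as given (exactly as the paper does). The one genuine refinement is your up-set observation: the paper's proof of \Cref{lem:canonical_embedding} contains precisely the reverse-implication argument you extract, and the paper in fact uses the filter property implicitly but \emph{only} in its non-coatomistic step (``any coatom of $\lat{KC}^{\N}$ greater than $\phi(\pmi)$ must obey \eqref{eq:newPMI1}\dots''), whereas you promote it to a stated principle, identifying $\phi(\lat{KC}^{\N})$ with the principal filter above $\phi(\bot)$ and thereby getting preservation of meets, joins, covers, (co)atoms and irreducibility uniformly. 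This buys you one concrete simplification in (v): the paper transports the failure of SD$_\wedge$ from $\lat{KC}^4$ by invoking the forbidden-sublattice characterization of meet-semidistributivity \cite{gratzer2016}, while your filter isomorphism lets you push the violating configuration $u=x\wedge y=x\wedge z$, $u\neq x\wedge(y\vee z)$ forward directly, since both operations agree with those computed inside the filter --- a more elementary route. Two cosmetic cautions: transporting the single dimension-gap-two cover is not by itself a JDCC violation (covers between their own endpoints trivially satisfy JDCC); what one transports is the witnessing \emph{pair} of maximal chains with common endpoints, which cover-preservation handles --- this is also how the paper's one-line argument for (iv) should be read. And in the $\lat{PMI}$ non-sublattice step, your claim that $\phi(\pmi_3)$ stays non-KC-compatible needs (and has) the easy facts that the original-labeled instances form a down-closed subposet of the $\N'$-party MI-poset and that, by \eqref{eq:sa_reduction} and \eqref{eq:sa_reduction2}, the embedding forces no new vanishings among original-labeled instances, so the restriction of the embedded $\zds$ is exactly the original non-down-set $\mgs^0$.
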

\begin{proof}
For (i), we have already seen that $\lat{KC}^\N$ is a lattice with the same meet as $\lat{PMI}^\N$ (cf.\ \Cref{thm:KC-PMIlattice}) and $\lat{MID}^\N$ (cf.\ \eqref{eq:lattice_intersection}). We have also seen in \S\ref{ssec:lattice3} that $\lat{KC}^3$ is not a sublattice of $\lat{MID}^3$ since they have different joins, and likewise in \S\ref{ssec:lattic4} that $\lat{KC}^4$ is not a sublattice of $\lat{PMI}^4$. Now for $\N \geq 4$, consider the canonical embedding of two PMIs $\pmi_1,\pmi_2\in\lat{KC}^4$, whose join is different from their join in $\lat{PMI}^4$, into $\lat{KC}^{\N}$. Letting $\pmi_3 = \pmi_1 \vee_{\text{kc}} \pmi_2$, it follows $\phi(\pmi_1) \vee_{\text{kc}} \phi(\pmi_2) = \phi(\pmi_3)$. Furthermore, it is clear that the map $\phi$ is also an embedding of $\lat{PMI}^4$ in $\lat{PMI}^{\N}$, implying that
\begin{align}
    \pmi_1 \vee \pmi_2 = \pmi_3' \not= \pmi_3 \quad\implies\quad \phi(\pmi_1) \vee \phi(\pmi_2) = \phi(\pmi_3') \not= \phi(\pmi_3)\,,
\end{align}
so the join of $\phi(\pmi_1)$ and $\phi(\pmi_2)$ in $\lat{PMI}^\N$ is different from their join in $\lat{KC}^\N$. By a similar argument, $\lat{KC}^\N$ is not a sublattice of $\lat{MID}^\N$ either.

For (ii), notice that the down-sets generated by the minimal elements of the MI-poset are precisely the down-sets with cardinality one. Since their corresponding hyperplanes are PMIs, they are the atoms of $\lat{KC}^\N$. 
    
For (iii), consider the pure quantum state $|\!\pt~\!\!\!\rangle_{1234} \otimes |0\rangle_5 \otimes \cdots \otimes |0\rangle_{\N}$. Its PMI $\pmi$ is greater than or equal to the join of all the atoms, since all the MIs between two single parties vanish. As $\pmi$ is clearly not the top element (which is the state with every MI instance vanishing), this means the top element of $\lat{KC}^{\N}$ is not the join of all the atoms, so the lattice is not atomistic.

Next, to prove that $\lat{KC}^{\N}$ is not coatomistic, recall that we showed in \S\ref{ssec:lattic4} that $\lat{KC}^4$ is not coatomistic, so there exists a PMI $\pmi\in\lat{KC}^4$ that is not the meet of a collection of coatoms in $\lat{KC}^4$. Now consider its canonical embedding $\phi(\pmi)\in\lat{KC}^{\N}$. Notice that any coatom $\overline{\pmi}$ of $\lat{KC}^{\N}$ greater than $\phi(\pmi)$ must obey \eqref{eq:newPMI1} and hence is part of the canonical embedding of the lattice $\lat{KC}^4$, and $\phi(\pmi)$ by assumption cannot be the meet of such coatoms. This proves $\phi(\pmi)$ is not the meet of an arbitrary collection of coatoms of $\lat{KC}^{\N}$, and so $\lat{KC}^{\N}$ is not coatomistic.

For (iv), we know from \S\ref{ssec:lattic4} that $\lat{KC}^4$ does not satisfy the JDCC, and since the canonical embedding is cover-preserving (cf.\ \Cref{lem:canonical_embedding}), it follows that $\lat{KC}^\N$ does not either.

For (v), we have seen in \S\ref{ssec:lattice3} that $\lat{KC}^3$ is not join-semidistributive because it contains the sublattice $\mathfrak{S}_7$ (cf.\ \Cref{fig:not-joinsemi}). This automatically implies that $\lat{KC}^\N$ is likewise not join-semidistributive for any $\N > 3$ since it also contains the same sublattice. Furthermore, in \S\ref{ssec:lattic4} we used \Cref{lem:meetsemidis} to verify that $\lat{KC}^4$ is not meet-semidistributive. As for join-semidistributivity, it is a general result of lattice theory that a lattice is meet-semidistributive if and only if it does not contain a sublattice isomorphic to one of several forbidden lattices (see for example \cite{gratzer2016}). Since $\lat{KC}^4$ is not meet-semidistributive, it contains such a sublattice, and therefore so does $\lat{KC}^{\N}$ for any $\N\geq 4$, which means $\lat{KC}^\N$ is also not meet-semidistributive.
\end{proof}

Despite the fact that \Cref{thm:main_theorem} above proves that $\lat{KC}^\N$ does not belong to any of the widely studied classes of lattices, we prove below two interesting properties of $\lat{KC}^\N$.

\begin{thm} \label{thm:JIE}
    The set of join-irreducible elements of $\lat{\emph{KC}}^\N$ includes all those of $\lat{\emph{MID}}^\N$ that are not principal down-sets of maximal elements of the \emph{MI}-poset.
\end{thm}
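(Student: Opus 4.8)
The plan is to reduce the statement to two explicit realizability claims and then invoke the ``unique lower cover'' characterization of join-irreducibility. First recall (cf.\ \Cref{footnote:birkhoff}) that the join-irreducible elements of the distributive lattice $\lat{MID}^\N$ are exactly the \emph{principal} down-sets $D_x \coloneqq \{y\in\mgs : y\preceq x\}$, one for each $x$ in the MI-poset. By the height formula \eqref{eq:height}, the maximal elements of the MI-poset are precisely the trivial instances $\mi(\uI:\comp{\uI})$, so the theorem amounts to showing that $D_x$ is join-irreducible in $\lat{KC}^\N$ whenever $x=\mi(\uI:\uK)$ with $\uK\neq\comp{\uI}$. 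Since $D_x$ is a down-set, by \Cref{lem:KCdownset} together with \eqref{eq:lattice_intersection} it lies in $\lat{KC}^\N$ iff it lies in $\lat{PMI}^\N$, i.e.\ iff some entropy vector in $\text{SAC}_\N$ has exact vanishing set $D_x$ (cf.\ \eqref{eq:pmimap}); the same holds for $D_x\setminus\{x\}$, which is again a down-set because $x$ is the unique maximal element of $D_x$. I therefore aim to prove the two claims: (a) $D_x\in\lat{KC}^\N$, and (b) $D_x\setminus\{x\}\in\lat{KC}^\N$.

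Given (a) and (b), join-irreducibility follows by a short order-theoretic argument. Represent elements of $\lat{KC}^\N$ by their $\zdss$, ordered by inclusion, so that the meet is intersection (cf.\ \eqref{eq:PMImeet}). Any $\mgs^0\in\lat{KC}^\N$ with $\mgs^0\subsetneq D_x$ cannot contain $x$: if it did, being a down-set it would contain all of $D_x$. Hence every element strictly below $D_x$ is contained in $D_x\setminus\{x\}$, which by (b) is itself in $\lat{KC}^\N$. Thus $D_x\setminus\{x\}$ is the unique lower cover of $D_x$, so $D_x$ is join-irreducible.

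For the realizability claims I would construct explicit states, exploiting that for a disjoint union of Bell pairs one has $\mi(\uA:\uB)=2\times(\text{number of Bell pairs with one endpoint in }\uA\text{ and the other in }\uB)$, which is non-negative and vanishes exactly when no pair crosses the cut. For (a), take a pure state on $\psys{\N}$ consisting of one Bell pair between every pair of parties \emph{except} the ``cross'' pairs having one endpoint in $\uI$ and the other in $\uK$. Then $\mi(\uI':\uK')=0$ precisely when no cross pair is split, and a short case check---using that $\uL\coloneqq\psys{\N}\setminus(\uI\cup\uK)$ is nonempty because $x$ is non-maximal---shows this happens exactly when $\uI'\subseteq\uI,\ \uK'\subseteq\uK$ (or with the roles of the two arguments swapped), i.e.\ exactly for $\mi(\uI':\uK')\preceq x$. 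Hence the exact vanishing set is $D_x$.

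For (b) I would keep the same Bell-pair background (which already forces all proper sub-instances \emph{and} $x$ itself to vanish, while keeping everything outside $D_x$ strictly positive) and superimpose a single classical resource that correlates the full $\uI$ with the full $\uK$ only: distribute a random bit among the parties of $\uI$ by XOR secret sharing, and an identically-valued bit among the parties of $\uK$ using fresh independent randomness. Any proper subset of $\uI$ (resp.\ $\uK$) is then statistically independent of that bit and of the other block, so $\mi(\uI':\uK')$ stays zero for every proper sub-instance, whereas the full blocks share the bit, making $\mi(\uI:\uK)>0$; since adding correlations can only raise mutual informations, no instance outside $D_x$ is driven to zero, and the exact vanishing set is $D_x\setminus\{x\}$. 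I expect the main obstacle to be the bookkeeping in these two constructions---verifying that the vanishing sets are \emph{exactly} $D_x$ and $D_x\setminus\{x\}$, in particular in case (b), where the pattern is non-monotonic and requires combining the Bell-pair and secret-sharing resources correctly---rather than the order-theoretic deduction, which is immediate once (a) and (b) are established.
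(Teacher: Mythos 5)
Your proposal is correct and follows the paper's skeleton for everything except one key step, where it takes a genuinely different route. The reduction is identical to the paper's: both proofs establish (a) that the principal down-set $D_x$ is the $\zds$ of a PMI in $\lat{KC}^\N$, and (b) that $D_x\setminus\{x\}$ is as well, and then conclude join-irreducibility from the unique-lower-cover argument (any KC-compatible down-set strictly contained in $D_x$ cannot contain $x$, hence sits below $D_x\setminus\{x\}$). For (a), your Bell-pair construction is exactly the paper's state \eqref{eq:quantum_state}. The divergence is in (b): the paper argues \emph{indirectly} that $\mgs^0_* = D_x\setminus\{x\}$ is a PMI, first ruling out linear dependence (each $\mi(\uI':\uK')\in D_x$ carries the entropy term $\ent_{\uI'\uK'}$ appearing in no other instance of $D_x$ --- valid precisely because $x$ is non-maximal, so $\ent_{\uI\uK}\neq 0$), and then ruling out that SA alone forces $\mi(\uI:\uK)=0$ by the geometric observation that a single halfspace cannot cut the subspace $\mathbb{S}_*$ down to lower dimension. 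You instead construct an explicit state: the Bell background tensored with a classical XOR secret-sharing resource correlating the full blocks $\uI$ and $\uK$ while leaving all proper sub-blocks independent. Your route, once completed, actually proves something strictly stronger than the paper's --- that $\pmi_*$ is \emph{realizable by a quantum state} (indeed by a classical-quantum hybrid), not merely that it is a KC-compatible PMI --- at the cost of the bookkeeping you flag. One detail you should make explicit: the secret-sharing resource is a mixed state, and in this framework the realizing pattern is that of a \emph{pure} state on all $\N+1$ parties (note that $\uI$ or $\uK$ may contain the purifier $0$). The fix is exactly where non-maximality enters your construction a second time: since $\uL = \psys{\N}\setminus(\uI\cup\uK)\neq\varnothing$, purify the classical randomness into a party $\ell\in\uL$; the extra correlations with $\ell$ touch no instance of $D_x$, and every instance outside $D_x$ stays strictly positive by additivity of MI under tensor products, so the exact vanishing set is $D_x\setminus\{x\}$ as claimed. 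With that sentence added, your proof is complete; the paper's version avoids this purification bookkeeping but yields no realizability information for $\pmi_*$.
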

\begin{proof}
   Consider a non-maximal element of the MI-poset $\mi(\uI:\uK)$ (so $\uK \not= \uI^c$), and let $\mgs^0$ be its principal down-set. Because the join of two elements in $\lat{MID}^\N$ is given by their union, any principal down-set is a join-irreducible element of $\lat{MID}^{\N}$. We now want to show that $\mgs^0$ is the $\zds$ of a PMI $\pmi$, and that $\pmi$ is also join-irreducible in $\lat{KC}^{\N}$.

To show that $\mgs^0$ is the $\zds$ of a PMI $\pmi$, consider the quantum state
\begin{equation}\label{eq:quantum_state}
    \ket{\pmi}=\bigotimes_{\forall\{i,k\}\,\notin\,\Gamma} \ket{\bpg}_{ik}\,,
\end{equation}
where $\ket{\bpg}_{ik}$ is a Bell pair between parties $i$ and $k$, and $\Gamma$ is the set
\begin{equation}
    \Gamma=\{\{i,k\}:i\in\uI,k\in\uK\}\,.
\end{equation}
The $\zds$ of the PMI of $\ket{\pmi}$ is precisely $\mgs^0$, and since $\ket{\pmi}$ is a quantum state, it automatically follows that $\pmi$ is in $\lat{KC}^\N$. 

We now want to show that $\pmi$ is join-irreducible in $\lat{KC}^\N$. Let $\mgs^0_*$ be the unique element in $\lat{MID}^{\N}$ such that $\mgs^0\cov \mgs^0_*$, which is obtained from $\mgs^0$ by simply removing the generating instance $\mi(\uI:\uK)$. Notice that since $\lat{KC}^{\N}$ is a subset of $\lat{MID}^{\N}$ with the induced partial order, to show that $\pmi$ is join-irreducible in $\lat{KC}^{\N}$ it suffices to show that $\mgs^0_*$ is the $\zds$ of a PMI $\pmi_*$ that is also in $\lat{KC}^{\N}$. If $\mgs^0_*$ is the $\zds$ of a PMI, this PMI is in $\lat{KC}^{\N}$ by construction, so suppose that $\mgs^0_*$ is not the $\zds$ of a PMI. Then it must be by linear dependence and SA that the MI instances in $\mgs^0_*$ vanishing implies that additional MI instances must also vanish. However, if any instance other than $\mi(\uI:\uK)$ must be added to $\mgs^0_*$ to obtain a PMI, it must also be added to $\mgs^0$, contradicting the fact that $\mgs^0$ is already a PMI. The only remaining possibility then is that linear dependence and SA imply that $\mi(\uI:\uK)$, and nothing else, must be added to $\mgs^0_*$. We now proceed to show that this is also not possible.

First notice that every MI instance in $\mgs^0$ is of the form $\mi(\uI':\uK')$ for $\uI' \subseteq \uI$ and $\uK' \subseteq \uK$. Thus, if we write each MI instance $\mi(\uI':\uK')$ in $\mgs^0$ in terms of the entropies, they each contain a term that is the entropy $\ent_{\uI'\uK'}$, which is not present in any of the other MI instances (notice that this is true only if $\mi(\uI:\uK)$ is not a maximal element of the MI-poset, since otherwise $\ent_{\uI\uK} = 0$). This means there cannot exist a set of MI instances in $\mgs^0$ that are linearly dependent, and in particular that $\mi(\uI:\uK)$ cannot be required to vanish by the fact that all MI instances in $\mgs^0_*$ vanish.

The only remaining possibility is that the vanishing of $\mi(\uI:\uK)$ is required by SA. To clarify what we mean by this statement, denote by $\mathbb{S}_*$ the linear subspace of entropy space that is the intersection of all the hyperplanes corresponding to the vanishing MI instances in $\mgs^0_*$. If $\mathbb{S}_*$ is spanned by a face of the SAC, it is a PMI, so the possibility we are contemplating is that this is not the case, but rather that the largest PMI contained in $\mathbb{S}_*$ is a proper subspace obtained by imposing the SA instance $\mi(\uI:\uK) \geq 0$. However, imposing $\mi(\uI:\uK) \geq 0$ simply partitions $\mathbb S_*$ into two codimension-0 regions in $\mathbb S_*$. Therefore, to restrict the solution of the system of inequalities corresponding to the instances of SA which are not in $\mgs^0_*$ to a lower-dimensional subspace of $\mathbb S_*$, one needs at least two additional linear inequalities to be saturated. This contradicts our above hypothesis that $\mi(\uI:\uK)$ is the only additional SA that must be saturated, and thus completes the proof that $\mgs^0_*$ is the $\zds$ of a PMI $\pmi_*$, and that $\pmi$ is join-irreducible in $\lat{KC}$.

Finally, consider the case where $\mi(\uI:\uK)$ is a maximal element of the MI poset (so $\uK = \uI^c$). Its principal down-set $\mgs^0$ still corresponds to a PMI $\pmi$, since the construction \eqref{eq:quantum_state} gives a quantum state whose PMI has $\mgs^0$ as its $\zds$. However, while $\mgs^0$ is still a join-irreducible element of $\lat{MID}^{\N}$, $\pmi$ is no longer join-irreducible in $\lat{KC}^{\N}$. To prove this, first notice that the down-set $\mgs^0_*$ obtained from $\mgs^0$ by removing the top element $\mi(\uI:\uK)$ is not the vanishing down-set of a PMI. This follows from the linear dependence relation
\begin{equation} \label{eq:triangle}
    \mi(\uI:\uK_1)+\mi(\uI:\uK_2)- \,\mi(\uI:\uK)=0 \,,
\end{equation}
where $\{\uK_1,\uK_2\}$ is an arbitrary bipartition of $\uK$. Because the first two MI instances in \eqref{eq:triangle} are in $\mgs^0_*$ but not the third, the down-set $\mgs^0_*$ is not the vanishing down-set of a PMI. In the MI-poset, let $\mathcal{C}$ denote the set of MI instances covered by $\mi(\uI:\uK)$, and let $\mgs^0_1,\ldots, \mgs^0_{|\mathcal C|}$ be the principal down-sets generated by the elements of $\mathcal C$. As we showed above, the principal down-sets $\mgs^0_i$ all correspond to the vanishing down-sets of PMIs in $\lat{KC}^\N$. The join of all these PMIs is a new PMI whose $\zds$ is the smallest down-set that corresponds to a PMI and also includes the union of all the down-sets $\mgs^0_i$. However, notice that the union of these down-sets is precisely $\mgs^0_*$, so the smallest down-set that both contains $\mgs^0_*$ and corresponds to a PMI is the principal down-set $\mgs^0$, since by \eqref{eq:triangle} we must at least include $\mi(\uI:\uK)$ in the down-set. This proves that $\pmi$, whose vanishing down-set is the principal down-set of $\mi(\uI:\uK)$, can be obtained as the join of a collection of PMIs in $\lat{KC}^\N$ and is thus not join-irreducible.
\end{proof}

In \Cref{thm:JIE}, we determined that the set of join-irreducible elements of $\lat{KC}^\N$ contains a specific subset of those in $\lat{MID}^\N$. Naturally, it would be interesting to know if these are all the join-irreducible elements of $\lat{KC}^\N$. If this is true, then the set of join-irreducible elements would be isomorphic to the subposet of the MI-poset where the maximal elements have been removed. We checked explicitly that this is true for $\N=3$ and $\N=4$, and we phrase this formally as a question for arbitrary $\N$ below.

\begin{question} \label{question:question1}
Is the set of join-irreducible elements of $\lat{\emph{KC}}$, with the induced partial order, isomorphic to the subposet of the \emph{MI}-poset obtained by removing the maximal elements? 
\end{question}

We conclude this subsection with an additional result about the order relation between atoms and coatoms of $\lat{KC}^\N$ and a few comments about its usefulness for explicit computations.

\begin{thm} \label{thm:new_coatoms}
    Given any coatom $\overline{\pmi}$ that is not a permutation of the canonical embedding of an $\N=2$ \emph{PMI} realized by a Bell pair, and any atom $\underline{\pmi}$, we have $\overline{\pmi} \succ \underline{\pmi}$. Moreover, any coatom that corresponds to $\bp{i}{j}$, with $i,j\in\psys{\N}$, is comparable with every atom except for the one with $\zds$ $\{\mi(i:j)\}$.
\end{thm}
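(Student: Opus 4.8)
The plan is to classify the coatoms of $\lat{KC}^\N$ according to which minimal elements of the MI-poset lie in their $\zds$. Recall that the top of $\lat{KC}^\N$ is the origin, whose $\zds$ is the whole MI-poset, and that by \Cref{thm:main_theorem} (property (ii)) every atom $\underline\pmi$ has $\zds$ of the form $\{\mi(i:j)\}$ with $i,j$ single parties. Writing each PMI through its $\zds$ and recalling that reverse inclusion of PMIs is inclusion of $\zdss$, the comparability $\overline\pmi\succ\underline\pmi$ is equivalent to $\mi(i:j)\in\mgs^0(\overline\pmi)$, while $\overline\pmi$ and $\underline\pmi$ are incomparable precisely when $\mi(i:j)\notin\mgs^0(\overline\pmi)$. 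Thus the theorem amounts to controlling which minimal instances a coatom can omit.

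The central object is the PMI $\bp{i}{j}$ realized by a state in which $i$ and $j$ share a Bell pair while all remaining parties are pure and uncorrelated. It lies in $\lat{KC}^\N$ because it is realized by a quantum state, and since the only correlation present is between $i$ and $j$, an instance $\mi(\uI:\uK)$ vanishes on it exactly when the partition does not separate $i$ from $j$; hence $\mgs^0(\bp{i}{j})$ is the set of all \emph{non-separating} instances, i.e.\ the whole MI-poset with the principal filter of $\mi(i:j)$ removed. In particular it is a proper $\zds$ (it omits $\mi(i:j)$), and $\bp{i}{j}$ is, up to a permutation of $\psys{\N}$, the canonical embedding of one of the three Bell-pair coatoms of $\lat{KC}^2$.

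The key step, which I expect to be the main obstacle, is to show that $\bp{i}{j}$ is genuinely a coatom. Since permutations of $\psys{\N}$ are symmetries of $\lat{KC}^\N$ it suffices to treat one representative, which I would handle via \Cref{lem:canonical_embedding}: the $\N=2$ Bell-pair PMI is covered by the top of $\lat{KC}^2$, the canonical embedding $\phi$ sends that top to the top of $\lat{KC}^\N$ (one checks directly that the vanishing conditions \eqref{eq:newPMI1} together with the embedded origin force every $\N$-party instance into the $\zds$), and cover-preservation then places $\bp{i}{j}$ immediately below the top. A self-contained alternative, which also supplies the needed maximality, is to show that imposing $\mi(i:j)=0$ on top of all non-separating instances collapses the corresponding flat to the origin. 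This follows from the chain identity
\begin{equation}
    \mi(\uI:\uK_1\uK_2)+\mi(\uK_1:\uK_2)=\mi(\uI:\uK_1)+\mi(\uI\uK_1:\uK_2)\,,
\end{equation}
which, applied with a party $p\notin\{i,j\}$ split off from the side containing it, rewrites any separating $\mi(\uI:\uK)$ as $\mi(\uI:\uK\setminus p)$ modulo two non-separating (hence vanishing) terms; iterating removes all parties but $i,j$ and reduces every separating instance to $\mi(i:j)=0$.

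Granting this, the two assertions follow quickly. For the first, let $\overline\pmi$ be a coatom with $\mi(i:j)\notin\mgs^0(\overline\pmi)$ for some atom $\{\mi(i:j)\}$. Since $\mgs^0(\overline\pmi)$ is a down-set (\Cref{lem:KCdownset}), it contains no instance above $\mi(i:j)$, hence no separating instance, so $\mgs^0(\overline\pmi)\subseteq\mgs^0(\bp{i}{j})$. As $\bp{i}{j}$ is a proper element with $\bp{i}{j}\succeq\overline\pmi$ and $\overline\pmi$ is maximal among proper elements, I conclude $\overline\pmi=\bp{i}{j}$, a permutation of the canonical embedding of an $\N=2$ Bell-pair PMI; contrapositively, any coatom not of this form omits no minimal element and hence satisfies $\overline\pmi\succ\underline\pmi$ for every atom. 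For the second, a minimal instance $\mi(k:l)$ separates $i,j$ iff $\{k,l\}=\{i,j\}$, so $\mgs^0(\bp{i}{j})$ contains every minimal element except $\mi(i:j)$; thus $\bp{i}{j}\succ\underline\pmi$ for all atoms other than $\{\mi(i:j)\}$, with which it is incomparable since $\mi(i:j)\notin\mgs^0(\bp{i}{j})$.
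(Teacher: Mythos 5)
Your proof is correct and takes essentially the same route as the paper's: both identify the $\zds$ of the Bell-pair PMI $\bp{i}{j}$ as the complement of the principal up-set of $\mi(i:j)$ (which immediately yields the second claim, since that down-set contains every atom's instance except $\mi(i:j)$), and both deduce the first claim by noting that a coatom whose $\zds$ omits $\mi(i:j)$ must, by the down-set property, have its $\zds$ contained in $\mgs^0(\bp{i}{j})$, whence maximality forces equality with $\bp{i}{j}$. The only divergence is that the step you single out as the ``main obstacle''---proving $\bp{i}{j}$ is genuinely a coatom---is in fact never used in your final deductions (only the properness of $\bp{i}{j}$ as an element of $\lat{KC}^{\N}$ is needed, which its realization by a quantum state already supplies), and accordingly the paper's proof omits it entirely; your chain-identity argument for it is nonetheless correct and a nice self-contained addition.
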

\begin{proof}
   Let us denote by $\overline{\pmi}_{ij}$ the PMI of a quantum state which is the tensor product of a Bell pair between parties $i,j\in\psys{\N}$ and arbitrary pure, uncorrelated states for the remaining $\N-1$ parties. From the point of view of the MI-poset, the down-set $\mgs^0_{ij}$ of vanishing instances of $\overline{\pmi}_{ij}$ is the complement of the \textit{principal up-set}\footnote{\, Similarly to principal down-sets, these are up-sets generated by a single MI instance.} of $\mi(i:j)$, which means $\mgs^0_{ij}$ contains all the atoms except the one where $\mi(i:j) = 0$. This proves the second claim in the theorem. In addition, any PMI $\pmi\in\lat{KC}^{\N}$ such that $\mgs^0\subset\mgs^0_{ij}$ cannot be a coatom, and by permutations, the same implication obviously holds for any $i,j$. Thus, for any coatom $\overline{\pmi}$ which is not of the form $\overline{\pmi}_{ij}$ for some $i,j$, its vanishing set of MI instances $\mgs^0$ must include all atoms, and so the first claim is also proven.
\end{proof}

It is clear that any $\N$-party $1$-dimensional KC-compatible PMI is a coatom of $\lat{KC}^\N$, and \Cref{thm:new_coatoms} immediately implies the following interesting result about the extreme rays of the SAC that can possibly be realized by quantum states.

\begin{cor}
\label{cor:er_corollary}
For any number of parties $\N$, all extreme rays of the \emph{SAC}$_{\N}$ that satisfy SSA and are not realized by Bell pairs are localized on the face $\face$ spanning the codimension-$\binom{\N+1}{2}$ subspace of entropy space given by
\begin{equation}
\label{eq:atoms_sub}
    \mi(i:j)=0, \quad \forall\, i,j\in \psys{\N} \,.
\end{equation}
\end{cor}
\begin{proof}
    Since all extreme rays of the SAC$_\N$ satisfying SSA generate 1-dimensional KC-compatible PMIs, which are coatoms of $\lat{KC}^\N$, it follows immediately from \Cref{thm:main_theorem}(ii) and \Cref{thm:new_coatoms} that they are contained in the subspace given by \eqref{eq:atoms_sub}. To show that there is a face of the SAC$_\N$ that spans this subspace, it suffices to find an entropy vector that satisfies the condition \eqref{eq:atoms_sub} with all other MI instance are strictly positive. We leave it as an exercise for the reader to verify that this is indeed the case for $\N=3$, where $\face$ is simply a positive rescaling of $\pt$. For any $\N>3$, denote by $\vec{\ent}$ the sum of all entropy vectors obtained via the canonical embedding of $\pt$ and all permutations of the parties. It is then a straightforward calculation to verify that the $\zds$ of $\pi(\vec{\ent})$, which is the intersection of the $\zds$s of the PMIs of all these vectors (cf.~\eqref{eq:PMImeet}), is precisely the set of MI instances in \eqref{eq:atoms_sub}. 
\end{proof}

\Cref{cor:er_corollary} is particularly useful for computations, since it reduces the search for the KC-compatible extreme rays of the SAC$_{\N}$ to the search of extreme rays of the codimension-$\binom{\N+1}{2}$ face defined in \Cref{cor:er_corollary}, which is a simpler polyhedral cone. As mentioned in the proof, the reader can verify that for $\N=3$ the face $\face$ of the SAC$_3$ defined in  \Cref{cor:er_corollary} is 1-dimensional, and is in fact the extreme ray realized by $|\!\pt~\!\!\!\rangle_{1234}$. We will come back to the application of \Cref{cor:er_corollary} in \S\ref{sec:realizability}, where we comment about the relation between KC and SSA, and leave the exploration of possible generalizations of this result for future work \cite{He:2022wip}. 

Lastly, note that although every 1-dimensional KC-compatible PMI is a coatom of $\lat{KC}^\N$, it is not clear that the converse is true. We observe via direct computation that the coatoms for $\lat{KC}^\N$ with $\N=3,4$ are indeed all 1-dimensional PMIs. It is then intriguing to speculate that this is true for arbitrary $\N$, and we pose the following question, which we leave for future investigation.

\begin{question} \label{question:question2}
Are all coatoms of $\lat{\emph{KC}}^\N$
\emph{1}-dimensional \emph{PMI}s?
\end{question}

\subsection{Constructing \texorpdfstring{$\N$}{N}-party PMIs from fewer parties}
\label{subsec:constructions}

In the previous subsection, we have seen how given the lattice $\lat{KC}^{\N}$ (for any $\N$) we can embed it inside the bigger lattice $\lat{KC}^{\N'}$ for any $\N'>\N$. This embedding has particularly nice properties (cf.\ \Cref{lem:canonical_embedding}) that will be useful for the proof of \Cref{thm:embedding}, but it is highly non-unique. The goal of this subsection is to discuss the multitude of different possible embeddings, and how they can be ``combined'' to construct a part of $\lat{KC}^{\N'}$. The main motivation for this analysis is that in the construction of $\lat{KC}^{\N'}$, we may assume complete knowledge of the lattices involving fewer parties, and we would then like to focus on the ``genuinely new'' elements.

As we already mentioned, the first obvious reason why the canonical embedding is non-unique has to do with permutations of the parties.  In \eqref{eq:newPMI1} and \eqref{eq:newPMI4}, each party labeled by an element of $\psys{\N}$ was given the same label in $\psys{\N'}$, but in general this does not have to be the case. Indeed, given $\lat{KC}^{\N}$ and $\N'>\N$, one can obtain different embeddings into $\lat{KC}^{\N'}$ by first constructing the canonical embedding, and then simply permuting the parties.

To describe more general embeddings, it will be convenient to use a standard lattice construction, which we will now briefly review. Given two lattices $\lat{1}$ and $\lat{2}$, their \textit{product} $\lat{}=\lat{1}\times\lat{2}$ is obtained by first constructing the Cartesian product of the two sets $\lat{1}$ and $\lat{2}$, and then specifying the following partial order:\footnote{\, The generalization to an arbitrary number of factors is obvious.} 
\begin{equation} \label{eq:prodorder}
    (x_1,y_1) \preceq (x_2,y_2) \quad \Longleftrightarrow \quad x_1\preceq x_2\;\; \text{and}\;\;
    y_1\preceq y_2\,.
\end{equation}

For PMIs that are realizable in quantum mechanics, we can intuitively think of this construction as follows. Suppose we have two quantum systems with $\N_1$ and $\N_2$ parties. As long as the two systems are independent, we are free to construct a quantum state for each one, and realize any quantum mechanical PMI in both $\lat{KC}^{\N_1}$ and $\lat{KC}^{\N_2}$. The tensor product of two such states will then realize a PMI in the larger lattice $\lat{KC}^{\N}$, with $\N=\N_1+\N_2$. Since the choices for the two systems are independent, we have an embedding of the subset of quantum mechanically realizable PMIs of $\lat{KC}^{\N_1}\times\lat{KC}^{\N_2}$ inside $\lat{KC}^{\N}$. 

While this construction is clear in quantum mechanics, not all KC-compatible PMIs are a priori realizable, and we should therefore reformulate it purely in the language of PMIs without relying on quantum states. We will soon discuss this reformulation precisely, but first it is instructive to consider the simple example $\lat{KC}^{1}\times\lat{KC}^{1}\hookrightarrow\lat{KC}^{2}$, where all KC-compatible PMIs are realizable by quantum states and we can conveniently use the construction mentioned above. 

\begin{figure}[tb]
\centering
\begin{subfigure}[b]{0.3\textwidth}
\centering
\begin{tikzpicture}
\node[draw,rounded corners] (top) at (0,2) {{\footnotesize $\mi(A:O)$}};
\end{tikzpicture}
\vspace{0.7cm}
\subcaption[]{}
\end{subfigure}
\begin{subfigure}[b]{0.3\textwidth}
\centering
\begin{tikzpicture}
\draw[dotted] (-1,-1) -- (2,2);
\draw[-] (-0.5,-0.5) -- (1.5,1.5);
\draw[->, gray, very thick] (0,0) -- (1,1) ;
\node[gray] (5a) at (0,0) {$\bullet$};
\node at (1.7,1.2) {{\footnotesize $\mathbb{R}^1$}};
\end{tikzpicture}
\subcaption[]{}
\end{subfigure}
\begin{subfigure}[b]{0.3\textwidth}
\centering
\begin{tikzpicture}
\node[draw,rounded corners] (top) at (0,2) {{\footnotesize $\{\mi(A:O)$\}}};
\node[draw,rounded corners] (bottom) at (0,0) {{\footnotesize $\varnothing$}};
\draw[-] (top.south) -- (bottom.north);
\end{tikzpicture}
\vspace{0.2cm}
\subcaption[]{}
\end{subfigure}
\caption{Various constructs for $\N=1$: (a) the MI poset, (b)  SAC$_1$, and (c) $\lat{KC}^1$.}
\label{fig:n1}
\end{figure}
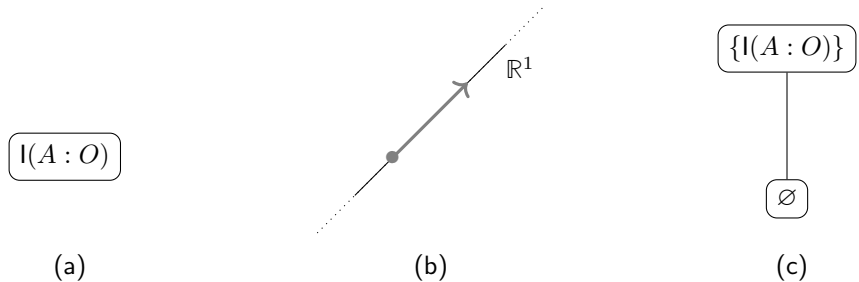

Let us first clarify how to construct the trivial $\lat{KC}^1$. For $\N=1$, the MI-poset contains the single instance $\mi(A:O)$, and SAC$_1$, which is a cone in $\mathbb{R}^1$, reduces to a single ray. There are only two faces in SAC$_1$: the ray, and the origin. Clearly the corresponding PMIs are KC-compatible and quantum mechanically realizable, and in fact they simply distinguish mixed from pure states (all the constructs for $\N=1$ are shown in \Cref{fig:n1}).

The lattice $\lat{KC}^{1}\times\lat{KC}^{1}$, is shown in \Cref{subfig:l1xl1}, where for each element we also show an explicit realization by a quantum state. Notice that these are 2-party states, and that each corresponds to a unique $\N=2$ PMI. In particular, all correlations between the parties $A$ and $B$ and the purifier $O$ are completely fixed by the specification of which subsystems are pure or mixed. The three possible embeddings of this lattice inside $\lat{KC}^{2}$, obtained by permuting the parties, are shown in \Cref{subfig:l1xl1inl2ab}, \ref{subfig:l1xl1inl2ao} and \ref{subfig:l1xl1inl2bo}.

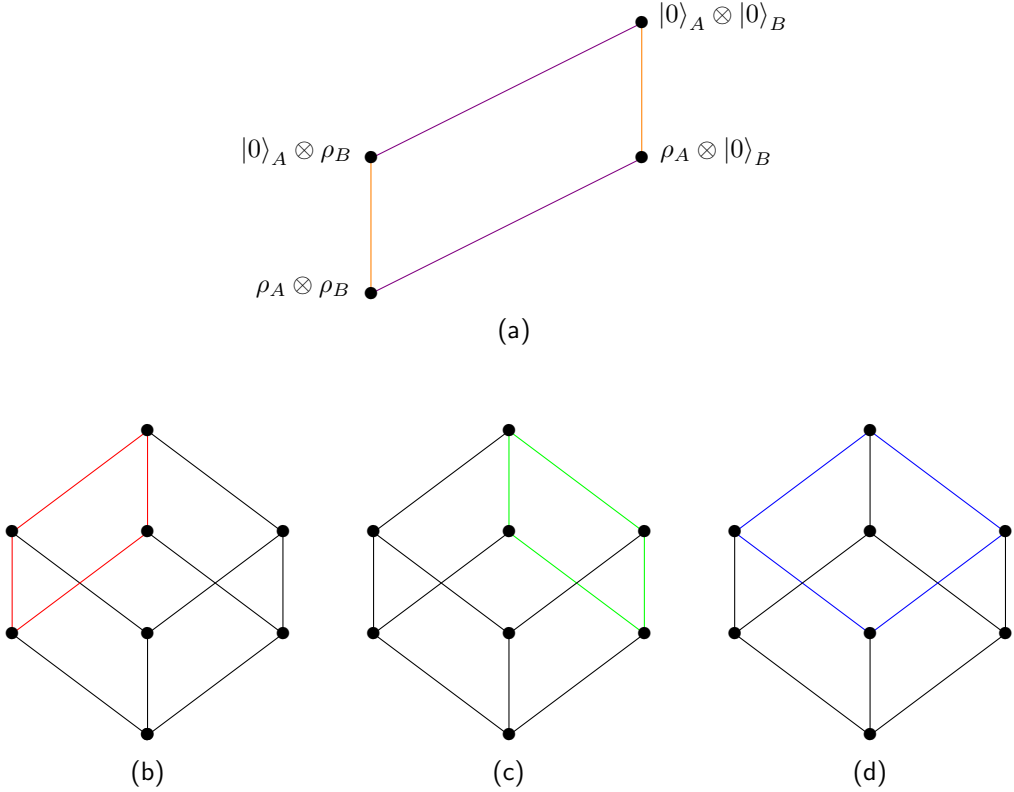
\begin{figure}[tb]
    \centering
    \begin{subfigure}{0.9\textwidth}
\centering
\begin{tikzpicture}[scale=0.9, every node/.style={inner sep=-1pt, outer sep=-1pt}]
    \node (1) at (-2,2) {};
    \node (2) at (-2,0) {};
    \node (1x) at (2,4) {};
    \node (2x) at (2,2) {};
    \draw[-,orange] (1) -- (2);
    \draw[-,orange] (1x) -- (2x);
    \draw[-, violet] (1) -- (1x);
    \draw[-, violet] (2) -- (2x);
    \node at (-3,0.1) {{\footnotesize $\rho_A\otimes\rho_B$}};
    \node at (-3.1,2.1) {{\footnotesize $\ket{0}_A\otimes\rho_B$}};
    \node at (3.1,2.1) {{\footnotesize $\rho_A\otimes\ket{0}_B$}};
    \node at (3.2,4.1) {{\footnotesize $\ket{0}_A\otimes\ket{0}_B$}};
    \node at (-2,2) {$\bullet$};
    \node at (-2,0) {$\bullet$};
    \node at (2,4) {$\bullet$};
    \node at (2,2) {$\bullet$};
    \end{tikzpicture}
\subcaption[]{}
\label{subfig:l1xl1}
\end{subfigure}   

\vspace{1cm}
\begin{subfigure}{0.3\textwidth}
\centering
\begin{tikzpicture}[scale=0.9, every node/.style={inner sep=-1pt, outer sep=-1pt}]
    \node (1) at (-3,4.5) {};
    \node (2a) at (-5,3) {};
    \node (2b) at (-3,3) {};
    \node (2c) at (-1,3) {};
    \node (3a) at (-5,1.5) {};
    \node (3b) at (-3,1.5) {};
    \node (3c) at (-1,1.5) {};
    \node (4) at (-3,0) {$\bullet$};
    \draw[-,red] (1) -- (2a);
    \draw[-,red] (1) -- (2b);
    \draw[-] (1) -- (2c);
    \draw[-,red] (2a) -- (3a);
    \draw[-] (2a) -- (3b);
    \draw[-,red] (2b) -- (3a);
    \draw[-] (2b) -- (3c);
    \draw[-] (2c) -- (3b);
    \draw[-] (2c) -- (3c);
    \draw[-] (3a) -- (4);
    \draw[-] (3b) -- (4);
    \draw[-] (3c) -- (4);
    \node  at (-3,4.5) {$\bullet$};
    \node  at (-5,3) {$\bullet$};
    \node  at (-3,3) {$\bullet$};
    \node  at (-1,3) {$\bullet$};
    \node  at (-5,1.5) {$\bullet$};
    \node  at (-3,1.5) {$\bullet$};
    \node  at (-1,1.5) {$\bullet$};
    \node  at (-3,0) {$\bullet$};
\end{tikzpicture}
\subcaption[]{}
\label{subfig:l1xl1inl2ab}
\end{subfigure}
\begin{subfigure}{0.3\textwidth}
\centering
\begin{tikzpicture}[scale=0.9, every node/.style={inner sep=-1pt, outer sep=-1pt}]
    \draw[-] (1) -- (2a);
    \draw[-,green] (1) -- (2b);
    \draw[-,green] (1) -- (2c);
    \draw[-] (2a) -- (3a);
    \draw[-] (2a) -- (3b);
    \draw[-] (2b) -- (3a);
    \draw[-,green] (2b) -- (3c);
    \draw[-] (2c) -- (3b);
    \draw[-,green] (2c) -- (3c);
    \draw[-] (3a) -- (4);
    \draw[-] (3b) -- (4);
    \draw[-] (3c) -- (4);
    \node (1) at (-3,4.5) {$\bullet$};
    \node (2a) at (-5,3) {$\bullet$};
    \node (2b) at (-3,3) {$\bullet$};
    \node (2c) at (-1,3) {$\bullet$};
    \node (3a) at (-5,1.5) {$\bullet$};
    \node (3b) at (-3,1.5) {$\bullet$};
    \node (3c) at (-1,1.5) {$\bullet$};
    \node (4) at (-3,0) {$\bullet$};
\end{tikzpicture}
\subcaption[]{}
\label{subfig:l1xl1inl2ao}
\end{subfigure}
\begin{subfigure}{0.3\textwidth}
\centering
\begin{tikzpicture}[scale=0.9, every node/.style={inner sep=-1pt, outer sep=-1pt}]
    \draw[-,blue] (1) -- (2a);
    \draw[-] (1) -- (2b);
    \draw[-,blue] (1) -- (2c);
    \draw[-] (2a) -- (3a);
    \draw[-,blue] (2a) -- (3b);
    \draw[-] (2b) -- (3a);
    \draw[-] (2b) -- (3c);
    \draw[-,blue] (2c) -- (3b);
    \draw[-] (2c) -- (3c);
    \draw[-] (3a) -- (4);
    \draw[-] (3b) -- (4);
    \draw[-] (3c) -- (4);
    \node (1) at (-3,4.5) {$\bullet$};
    \node (2a) at (-5,3) {$\bullet$};
    \node (2b) at (-3,3) {$\bullet$};
    \node (2c) at (-1,3) {$\bullet$};
    \node (3a) at (-5,1.5) {$\bullet$};
    \node (3b) at (-3,1.5) {$\bullet$};
    \node (3c) at (-1,1.5) {$\bullet$};
    \node (4) at (-3,0) {$\bullet$};
\end{tikzpicture}
\subcaption[]{}
\label{subfig:l1xl1inl2bo}
\end{subfigure}
\vspace{0.4cm}
\caption{The three possible embeddings of $\lat{KC}^{1}\times\lat{KC}^{1}$ into $\lat{KC}^{2}$. In (a), we give a detailed description of $\lat{KC}^{1}\times\lat{KC}^{1}$ (one factor in orange and one in violet), with an explicit realization of the element by quantum states (all density matrices are assumed to be mixed). Notice that the correlation between $A,B$ and the purifier $O$ (not shown) is completely fixed by the specification of which subsystems are pure and which are mixed. In (b), (c), and (d), we illustrate the three possible embeddings of $\lat{KC}^{1}\times\lat{KC}^{1}$ in $\lat{KC}^{2}$, which are described by \Cref{thm:embedding}.}
\label{fig:l1timesl1}
\end{figure}

For an arbitrary number of parties $\N$, the embedding of these product lattices inside $\lat{KC}^{\N}$ are described by the following result.

\begin{thm}
\label{thm:embedding}
    For any given $\N$, and $\uI_1,\uI_2\subset \psys{\N}$ with $\N_1=|\I_1|$ and $\N_2=|\I_2|$, such that $\uI_1\cap\uI_2=\varnothing$ and $\N_1+\N_2=\N$, the subset of $\lat{\emph{KC}}^{\N}$ of \emph{PMI}s whose $\zdss$ include the principal down-set of $\mi(\uI_1:\uI_2)'$, with the induced partial order, is isomorphic to $\lat{\emph{KC}}^{\N_1}\times\lat{\emph{KC}}^{\N_2}$.
\end{thm}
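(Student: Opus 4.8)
The plan is to realize the isomorphism geometrically: I would show that the face of $\text{SAC}_{\N}$ on which $\uI_1$ and $\uI_2$ are forced to be independent is itself a \emph{product} of two smaller subadditivity cones, deduce the order isomorphism from the fact that the face lattice of a product cone is the product of the face lattices, and finally check that Klein's condition factorizes across this product. As a preliminary reduction, let $p$ be the unique party of $\psys{\N}$ not contained in $\uI_1\cup\uI_2$ (it exists and is unique since $|\uI_1|+|\uI_2|=\N$ while $|\psys{\N}|=\N+1$). By the purification symmetry \eqref{eq:purification} I may assume $p=0$, so that $\uI_1\sqcup\uI_2=[\N]$ and $\mi(\uI_1:\uI_2)$ is a facet of $\text{SAC}_{\N}$ by \Cref{lem:facet} (indeed $\uI_2=\comp{\uI_1}\setminus\{0\}\neq\comp{\uI_1}$).

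Next I would set up the geometric reduction, in direct analogy with \eqref{eq:sa_reduction}. Let $V\subset\mathbb{R}^{\D}$ be the subspace cut out by the vanishing of all cross instances $\mi(\uA:\uB)$ with $\uA\subseteq\uI_1,\uB\subseteq\uI_2$, i.e.\ the principal down-set of $\mi(\uI_1:\uI_2)$. On $V$ one has $\ent_{\uA\uB}=\ent_{\uA}+\ent_{\uB}$, so the free coordinates are exactly $\{\ent_{\uA}:\varnothing\neq\uA\subseteq\uI_1\}$ and $\{\ent_{\uB}:\varnothing\neq\uB\subseteq\uI_2\}$, giving $V\cong\mathbb{R}^{\D_1}\times\mathbb{R}^{\D_2}$ with $\D_i=2^{\N_i}-1$. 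I would then verify, exactly as in \eqref{eq:sa_reduction} but keeping track of instances involving the purifier via \eqref{eq:purification}, that every instance of SA of the $\N$-party system restricts on $V$ to a sum $\mi(\uX:\uY)|_V=\mi_1+\mi_2$, where $\mi_1,\mi_2$ are instances of SA of the two subsystems (with $0$ playing the role of purifier in each). Since the two coordinate blocks are independent and each subsystem instance is realized, the face $\face_0=V\cap\text{SAC}_{\N}$ is precisely the product cone $\text{SAC}_{\N_1}\times\text{SAC}_{\N_2}$, and $\text{span}(\face_0)=V=\pmi_0$ is the $\preceq$-least element of the subset $\mathcal{S}$ appearing in the statement.

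The order-isomorphism part is then standard. The PMIs in $\mathcal{S}$ are exactly the $\mu(\face)$ with $\face\subseteq\face_0$, i.e.\ the faces of $\face_0$; and for a product of polyhedral cones the face lattice is the product of the face lattices, so the faces of $\face_0$ (ordered by reverse inclusion) form $\lat{SAC}^{\N_1}\times\lat{SAC}^{\N_2}$. Composing with the lattice isomorphism $\mu$ of \Cref{lem:one-to-one} on each factor gives $\mathcal{S}\cap\lat{PMI}^{\N}\cong\lat{PMI}^{\N_1}\times\lat{PMI}^{\N_2}$ with the componentwise order \eqref{eq:prodorder}; under this bijection a face splits as $\face^{(1)}\times\face^{(2)}$ and its PMI as a pair $(\pmi^{(1)},\pmi^{(2)})$.

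It then remains to show that $\pmi$ is KC-compatible iff both $\pmi^{(1)}$ and $\pmi^{(2)}$ are, which upgrades the result to $\mathcal{S}\cong\lat{KC}^{\N_1}\times\lat{KC}^{\N_2}$. The structural fact driving this is that, since $\mi(\uX:\uY)|_{\face_0}=\mi_1+\mi_2$ with both summands non-negative on the respective factor cones, a global instance lies in $\mgs^0(\pmi)$ iff both projections $\mi_1,\mi_2$ lie in $\mgs^0(\pmi^{(1)})$ and $\mgs^0(\pmi^{(2)})$. By \Cref{lem:KCdownset} KC-compatibility is precisely the down-set property of $\mgs^0$, so it suffices to check that the projection $\mi(\uX:\uY)\mapsto(\mi_1,\mi_2)$ is order-preserving, and that each subsystem instance lifts back to a global one (with the complementary projection identically zero) in an order-preserving way; both are direct verifications against the order \eqref{eq:po-def}. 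I expect this final KC step to be the main obstacle, specifically the bookkeeping required to show that the projection and lift maps between the $\N$-party MI-poset and the two subsystem posets respect the order once the purifier $0$ is shared as the pivot party; by contrast the product-cone argument of the preceding paragraphs is comparatively routine.
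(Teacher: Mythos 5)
Your proposal is correct and follows essentially the same route as the paper's proof: reduce by the Sym$_{\N+1}$ symmetry to the case $\uI_1\sqcup\uI_2=[\N]$ with the purifier as the leftover party, impose the principal down-set to split the entropy coordinates as $\mathbb{R}^{\D_1}\oplus\mathbb{R}^{\D_2}$, check (using purification in both the big and small systems) that every $\N$-party SA instance restricts to subsystem instances or non-negative sums thereof, and conclude that the relevant face is $\text{SAC}_{\N_1}\times\text{SAC}_{\N_2}$ with product face lattice. The only difference is one of emphasis: you make explicit the final verification that KC factorizes across the product (the order-preserving projection/lift bookkeeping on the MI-posets), which the paper's proof leaves implicit after establishing the reduction formulas \eqref{eq:newPMI5a}--\eqref{eq:redundant}.
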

\begin{proof}
   It is sufficient to prove the theorem in the case of $\uI_1= [\N_1]$ and $ \uI_2=[\N]\setminus[\N_1]$  (where neither $\uI_1$ nor $\uI_2$ contain the purifier), since any other case can be reduced to it by a permutation of the parties. We will proceed in a similar fashion as in the description of the canonical embedding in \S\ref{ssec:latticN}. We denote by $\hat{\I},\hat{\J},\hat{\K}$ the subsets of $[\N_1]$, and by $\Il,\Jl,\Kl$ those of $[\N]\setminus[\N_1]$. As usual, we will continue the convention where the underlined version of these indices can include the purifier 0.
   
   For a PMI such that its $\zds$ includes the principal down-set of $\mi(\uI_1:\uI_2)$, we have
    \begin{equation} \label{eq:direct_sum_dec}
    \mi(\hat{\I}:\Kl)' = 0\quad \implies\quad \ent_{\hat{\I}\Kl}=\ent_{\hat{\I}}+\ent_{\Kl}\qquad \forall\, \hat{\I},\Kl\,.
    \end{equation}
    We now proceed to show that given \eqref{eq:direct_sum_dec}, all the MI instances in a PMI $\pmi$ in the $\N$-party system reduce to those in the smaller $\N_1$ and $\N_2$-party systems. Obviously, we have
    \begin{align}\label{eq:newPMI2}
        \mi(\uIh:\uKh)' = \mi(\uIh:\uKh) \geq 0\,, \qquad \mi(\uIl:\uKl)' = \mi(\uIl:\uKl) \geq 0\,, 
    \end{align}
    so it suffices to focus on the MI instances involving parties from both the $\N_1$-party system and $\N_2$-party system. Proceeding as in \eqref{eq:sa_reduction}, it is straightforward to evaluate using \eqref{eq:direct_sum_dec} the cases not involving the purifier $0$:
    \begin{align}
        \mi(\Ih:\Kh\Kl)' &= \mi(\Ih:\Kh) \geq 0 \label{eq:newPMI5a} \\
        \mi(\Il:\Kh\Kl)' &= \mi(\Il:\Kl) \geq 0 \label{eq:newPMI5b} \\
        \mi(\Ih\Il:\Kh\Kl)' &= \mi(\Ih:\Kh) + \mi(\Il:\Kl) \geq 0\,. \label{eq:newPMI5c}
    \end{align}
    The cases involving the purifier $0$ are a little more tricky. However, we will show that the MI instances in the larger $\N$-party system reduce in the same manner to MI instances in the smaller systems. It is convenient to use the notation $\Ihc \equiv [\N_1] \setminus \Ih$, which is a subset of parties in $[\N_1]$, and similarly $\Ilc \equiv ([\N] \setminus [\N_1]) \setminus \Il$, which is a subset of parties in $[\N] \setminus [\N_1]$.\footnote{\, Note that $\Ihc$ ($\Ilc$) is almost the complement of $\Ih$ ($\Il$) in the $\N_1$-party ($\N_2$-party) subsystem, except that we are ignoring the purifier.} It then follows\footnote{\, In the rest of the proof we denote by $0_1$ and $0_2$ the purifiers of the smaller subsystems.}
    \begin{align}\label{eq:newPMI6a}
    \begin{split}
        \mi(\Ih: 0\Kh\Kl)' &= \ent_{\Ih} + \ent_{0\Kh\Kl} - \ent_{0\Ih\Kh\Kl} \\
        &= \ent_{\Ih} + \ent_{\Khc\Klc} - \ent_{(\Ih\Kh)^\dag\Klc} \\
        &= \ent_{\Ih} + \ent_{\Khc} - \ent_{(\Ih\Kh)^\dag} \\
        &= \mi(\Ih:0_1\Kh) \geq 0\,,
    \end{split}
    \end{align}
    where we first used purification symmetry in the full $\N$-party system, then \eqref{eq:direct_sum_dec}, and finally applied purification once more but now in the smaller $\N_1$-party system. Applying identical arguments, we also derive
    \begin{align}\label{eq:newPMI6b}
    \begin{split}
        \mi(\Il:0\Kh\Kl)' &= \mi(\Il:0_2\Kl) \geq 0\,.
    \end{split}
    \end{align}
    The equations \eqref{eq:newPMI5a}, \eqref{eq:newPMI5b}, \eqref{eq:newPMI6a}, and \eqref{eq:newPMI6b} (note that \eqref{eq:newPMI5c} is redundant) demonstrate how some of the new MI instances of the $\N$-party system reduces to simply the MI instances of the $\N_1$-party and $\N_2$-party systems. There are still a few MI instances in the $\N$-party system that we have to check whether they can be written in terms of the $\N_1$ and $\N_2$-party MI instances. We demonstrate one such MI instance computation below: 
    \begin{align}\label{eq:redundant}
    \begin{split}
        \mi(0\Ih\Il:\Kh\Kl)' &= \ent_{0\Ih\Il} + \ent_{\Kh\Kl} - \ent_{0\Ih\Kh\Il\Kl} \\
        &= \ent_{\Ihc\Il} + \ent_{\Kh\Kl} - \ent_{(\Ih\Kh)^\dag \Il\Kl} \\
        &= \ent_{\Ihc} + \ent_{\Il} + \ent_{\Kh} + \ent_{\Kl} - \ent_{(\Ih\Kh)^\dag} - \ent_{\Il\Kl} \\
        &= \mi(0_1\Ih:\Kh) + \mi(\Il:\Kl) \geq 0\,,
    \end{split}
    \end{align}
    where we utilized purification symmetry (in both the larger $\N$-party system and in the smaller $\N_1$-party system) and \eqref{eq:direct_sum_dec}. We leave the reader to check that all other MI instances in the $\N$-party system similarly reduce to a positive sum of two MI instances involving the $\N_1$ and $\N_2$-party systems (analogous to \eqref{eq:newPMI5c}).
    
    Geometrically, \eqref{eq:direct_sum_dec} means that we have an embedding of $\mathbb{R}^{\D_1}\oplus\mathbb{R}^{\D_2}$ inside $\mathbb{R}^{\D}$, while the SA instances in \eqref{eq:newPMI2}, \eqref{eq:newPMI5a}, \eqref{eq:newPMI5b}, \eqref{eq:newPMI6a}, and \eqref{eq:newPMI6b} specify two cones on $\mathbb{R}^{\D_1}$ and $\mathbb{R}^{\D_2}$ which are isomorphic respectively to SAC$_{\N_1}$ and SAC$_{\N_2}$ (the SA instances in \eqref{eq:newPMI5c} and \eqref{eq:redundant} are redundant). Globally then, we have a cone in $\mathbb{R}^{\D_1}\oplus\mathbb{R}^{\D_2}$ which is the direct sum of SAC$_{\N_1}$ and SAC$_{\N_2}$. The set of faces of this cone is the Cartesian product of the sets of faces the two cones, and the partial order of this set is given by \eqref{eq:prodorder}, completing the proof.
\end{proof}

The theorem above describes the possible embeddings of  $\lat{KC}^{\N_1}\times\lat{KC}^{\N_2}$ inside $\lat{KC}^{\N}$ for $\N=\N_1+\N_2$. An alternative way of understanding it is that it reveals the structure of the set of embeddings of $\lat{KC}^{\N}$ inside $\lat{KC}^{\N'}$ for all $\N < \N'$, where each element of $\lat{KC}^{\N'-\N}$ corresponds to one such embedding. From this perspective, the canonical embedding of \Cref{lem:canonical_embedding} can be seen as corresponding to the embedding of $\lat{KC}^{\N}$ in $\lat{KC}^{\N'}$ specified by the top element of $\lat{KC}^{\N'-\N}$. A similar reasoning also clarifies what are the embeddings of $\lat{KC}^{\N_1}\times\lat{KC}^{\N_2}$ inside $\lat{KC}^{\N}$ for $\N>\N_1+\N_2$, since one can simply construct $\lat{KC}^{\N_1}\times\lat{KC}^{\N_2}\times\lat{KC}^{\N'}$, with $\N'=\N-\N_1-\N_2$. Indeed, starting from $\lat{KC}^{\N}$, one can apply \Cref{thm:embedding} repeatedly to obtain the embeddings of $\lat{KC}^{\N_1}\times\ldots\times\lat{KC}^{\N_{n}}$ inside $\lat{KC}^{\N}$, with $\sum_{i=1}^{n}\N_i=\N$.

We conclude this section with a comment about how \Cref{thm:embedding} can be utilized to make the derivation of $\lat{KC}^{\N}$ more efficient. Suppose that we know $\lat{KC}^{\N-1}$, and we want to construct $\lat{KC}^{\N}$. Using \Cref{thm:embedding} we can construct the possible embeddings of $\lat{KC}^{\N-1}\times\lat{KC}^{1}$ in $\lat{KC}^{\N}$, including all possible permutations of the parties. Notice that this construction automatically includes all the embeddings of $\lat{KC}^{\N'}$ in $\lat{KC}^{\N}$, with $\N'<\N-1$, since these lattices are already embedded in $\lat{KC}^{\N-1}$. We can then take the meet for all possible collections of the elements obtained via these embeddings to construct a large portion of $\lat{KC}^{\N}$. For example, in the $\N=2$ case, this procedure yields the entire $\lat{KC}^{2}$ starting from $\lat{KC}^{1}$, and in this sense there are no ``genuinely $2$-party'' PMIs. This can be seen in \Cref{fig:l1timesl1}.

It would be interesting to investigate possible generalizations of \Cref{thm:embedding}, and to analyze more systematically the necessary conditions that must be imposed on the $\zds$ of a PMI such that it cannot be obtained from fewer party PMIs via these type of constructions. We leave this problem for future work.

\section{Comments on the realizability of KC-compatible PMIs}
\label{sec:realizability}

In this short section, we make a few comments on the realizability of PMIs that are KC-compatible, as well as on the relation between KC and SSA. In \Cref{subsec:ssa_k_condition}, we defined realizability of PMIs for an arbitrary class of quantum states $\Omega$ (cf.\ \Cref{def:realizable_pmi}). Here we make a few more assumptions about $\Omega$. Specifically, we assume that for any $\N$, $\Omega$ includes the completely uncorrelated state 
\begin{equation}\label{eq:uncorr}
    \ket{0}_1\otimes\ket{0}_2\otimes\cdots\otimes\ket{0}_{\N} \,,
\end{equation}
and that it is \textit{closed under tensor products}, i.e., the following implication holds:\footnote{\, Notice that we are fixing the number of parties $\N$ even if the tensor product of the density matrices is an operator acting on a larger Hilbert space. This simply amounts to enlarging the dimension of the Hilbert space for each party in $[\N]$.}
\begin{equation}\label{eq:prod_clos}
    \forall\,\rho_{\N},\rho_{\N}'\in\Omega \quad\implies\quad \rho_{\N}\otimes\rho_{\N}'\in\Omega\,.
\end{equation}
With these (very mild) assumptions, we then have the following result. 
\begin{thm} \label{thm:realizability}
    For any given $\N$, the set $\lat{$\Omega$}^{\N}$ of \emph{PMI}s which are realizable by a class of quantum states $\Omega$ that include the state \eqref{eq:uncorr} and satisfy \eqref{eq:prod_clos} is a lattice, with the meet given by \eqref{eq:PMImeet}.
\end{thm}
\begin{proof}
   For any $\N$, the top element of $\lat{KC}^{\N}$ is realized in quantum mechanics by the fully uncorrelated state \eqref{eq:uncorr}, which by assumption belongs to $\Omega$. The set of PMIs realized by $\Omega$ is in general a subset of $\lat{KC}^{\N}$, which is finite, and by (the dual of) \Cref{thm:lattice_condition} it is then sufficient to show that there is a meet operation. To prove this, we use  \eqref{eq:prod_clos} to show that the PMI of the tensor product of two density matrices in $\Omega$ is the meet (in $\lat{KC}^{\N}$) of the PMIs of the individual density matrices, i.e.,
   \begin{equation}
   \label{eq:meet_tp}
       \pi\left(\,\vec{\ent}\,(\,\rho\otimes\rho'\,)\,\right)=\,\pi\left(\vec{\ent}(\rho)\right)\,\wedge\,\pi\left(\vec{\ent}(\rho')\right)\,,
   \end{equation}
   where we keep the $\N$ dependence implicit for notational clarity. To see this, notice that from the additivity of the von Neumann entropy for tensor products, namely
   \begin{equation}
       \vec{\ent}\,(\rho\otimes\rho')=\,\vec{\ent}(\rho)+\vec{\ent}(\rho')\,,
   \end{equation}
   we have
   \begin{equation}
       \mi(\uI:\uK)(\rho\otimes\rho') = \mi(\uI:\uK)(\rho) + \mi(\uI:\uK)(\rho')\,, 
   \end{equation}
   where the argument of $\mi(\uI:\uK)$ is the quantum state on which we are evaluating the MI. This implies by SA that
   \begin{equation}
       \mi(\uI:\uK)(\rho\otimes\rho')=0\quad \iff \quad \mi(\uI:\uK)(\rho)=0\;\; \text{and}\;\; \mi(\uI:\uK)(\rho')=0\,.
   \end{equation}
   Therefore, the $\zds$ for $\rho\otimes\rho'$, in terms of those for $\rho$ and $\rho'$, is given by

   \begin{equation}
       \mgs^0_{\rho\otimes\rho'} = \mgs^0_{\rho} \cap \mgs^0_{\rho'}\,,
   \end{equation}%
   which we know from \eqref{eq:PMImeet} and \Cref{thm:KC-PMIlattice} to be the meet in $\lat{KC}^{\N}$. This completes the proof.
\end{proof}

In light of \Cref{thm:realizability}, to tackle the marginal independence problem for a class of states $\Omega$ satisfying the assumptions \eqref{eq:uncorr} and \eqref{eq:prod_clos}, one might hope it suffices to only focus on the meet-irreducible elements of $\lat{KC}^\N$. Indeed, if all meet-irreducible elements are realizable by states in $\Omega$, then by \eqref{eq:meet_tp} all other elements are also realizable. 
However, the following result shows that for $\N\geq 4$ this can never be the case.\footnote{\, For $\N=2,3$ the lattice $\lat{KC}^\N$ is coatomistic, and we have given explicit realizations of the coatoms by quantum states (see \Cref{fig:ALatticeMIPosetN2} and \Cref{fig:N3KPMIlattice}).}

\begin{thm}
\label{thm:KC_SSA}
    For any $\N\geq 4$, there exist \emph{KC}-compatible \emph{PMI}s that are not \emph{SSA}-compatible, and therefore cannot be realized by any quantum state.
\end{thm}
\begin{proof}
    It was shown in \cite{Hernandez-Cuenca:2019jpv} that for $\N=4$, the set of all PMIs compatible with SSA can be realized by quantum states.\footnote{\, In fact, it is sufficient to consider stabilizer states.} By \Cref{thm:realizability}, it follows that this set is in fact a lattice with meet given by \eqref{eq:PMImeet}, and the meet-irreducible elements are the PMIs realized by the extreme rays of the cone carved by all instances of SA and SSA.\footnote{\, Notice that this does not imply that all the extreme rays of this cone can be realized by quantum states. In fact, this is not the case because of the constrained inequalities mentioned in the introduction.} Via explicit computation, it is straightforward to verify that the number of these meet-irreducible elements is strictly smaller than the number of meet-irreducible elements of $\lat{KC}^4$. This shows that the set of $\N=4$ PMIs that are SSA-compatible is properly contained in the set of $\N=4$ PMIs that are KC-compatible, proving the claim for $\N=4$. The generalization to any $\N>4$ is immediate using the canonical embedding described in \Cref{thm:embedding}.  
\end{proof}

\Cref{thm:KC_SSA} implies that for any class of states $\Omega$, and any $\N\geq 4$, the lattice $\lat{$\Omega$}^{\N}$ identified by \Cref{thm:realizability} is not $\lat{KC}^{\N}$. However, since the meet for both lattices is simply given by intersection, it suffices to know all the meet-irreducible elements of these lattices to construct all their elements. A useful strategy to find $\lat{$\Omega$}^{\N}$ then might be to look for its meet-irreducible elements starting from those of $\lat{KC}^{\N}$.

The proof of \Cref{thm:KC_SSA} also shows that for $\N\geq 4$ there exist PMIs that are KC-compatible but not compatible with SSA. Geometrically, this means that there exist faces of the SAC$_{\N}$ whose corresponding PMIs are KC-compatible, while all the vectors in their \textit{interior} violate SSA. Thus, KC should be interpreted as an approximation of SSA, and correspondingly the KC lattice as an approximation of the lattice of SSA-compatible PMIs given by the following result. 

\begin{thm}
\label{thm:SSA_lattice}
    For any given $\N$, the set $\lat{\emph{SSA}}^{\N}\subseteq\lat{\emph{KC}}^{\N}$ of \emph{PMI}s which are compatible with \emph{SSA} is a lattice, with the meet given by \eqref{eq:PMImeet}.
\end{thm}
\begin{proof}
    The set inclusion $\lat{SSA}^{\N}\subseteq\lat{KC}^{\N}$ trivially follows from the fact that KC-compatibility is a necessary condition for SSA-compatibility for PMIs. To show that $\lat{SSA}^{\N}$ is a lattice and that the meet is given by \eqref{eq:PMImeet}, we can proceed as in the proof of \Cref{thm:KC-PMIlattice}. Clearly, the top element is the origin of the SAC$_\N$, which satisfies SSA trivially. To show that the meet is given by \eqref{eq:PMImeet}, notice that for any $\pmi,\pmi' \in \lat{SSA}^{\N}$, both $\pmi$ and $\pmi'$ must contain at least an entropy vector $\vec{\ent},\vec{\ent}'$ satisfying SSA and belonging to the corresponding faces $\face=\mu^{-1}(\pmi)$ and $\face'=\mu^{-1}(\pmi')$ of the SAC$_{\N}$. We can then consider the sum $\vec{\ent}''=\vec{\ent}+\vec{\ent}'$ and proceed as in the proof of \Cref{thm:realizability} (notice that whether $\vec{\ent},\vec{\ent}'$ are realizable by quantum states is immaterial). Finally, the resulting PMI is guaranteed to be SSA-compatible since by convexity $\vec{\ent}''$ satisfies all instances of SSA.  
\end{proof}

In light of this result, a natural first step in the derivation of $\lat{$\Omega$}^{\N}$ is then the construction of $\lat{SSA}^{\N}$. Again, since the meet of $\lat{SSA}^{\N}$ is, like $\lat{KC}^{\N}$, also given by \eqref{eq:PMImeet}, one could hope to efficiently derive the meet-irreducible elements of $\lat{SSA}^{\N}$ starting from those of $\lat{KC}^{\N}$. We leave this problem, as well as the structural analysis of $\lat{SSA}^{\N}$, and in particular the question of how well it approximates the KC lattice for future work, and conclude this section with a few comments about the particularly interesting case of 1-dimensional PMIs. 

It was already found in \cite{Hernandez-Cuenca:2019jpv} that for all $\N\leq 5$, all extreme rays of the SAC$_{\N}$ which satisfy SSA can be realized by quantum states.\footnote{\, In fact they can all be realized by stabilizer states, and they all are extreme rays of the holographic entropy cone \cite{Bao:2015bfa,Cuenca:2019uzx}.} This result was obtained by the explicit computation of the extreme rays of the cone carved out by all instances of SA and SSA for $\N=5$. To explore how good KC is as an approximation of SSA for 1-dimensional PMIs, we need to find all KC-compatible extreme rays of SAC$_5$. Although it is tempting to achieve this via a brute force search by simply determining all extreme rays of SAC$_5$ and checking which satisfy KC, the combinatorics render this search very difficult due to the doubly exponential timescales involved. This is where the power of \Cref{cor:er_corollary} becomes apparent, since by utilizing it, we are able to easily compute the KC-compatible extreme rays for $\N=5$, leading us to make the following observation.  

\begin{obs}
    For all $\N\leq 5$, all \emph{KC}-compatible extreme rays of the \emph{SAC}$_{\N}$ satisfy \emph{SSA}.
\end{obs}

The same fact also holds for all currently known KC-compatible extreme rays of the highly non-trivial case of the SAC$_6$ \cite{He:2022wip}. Motivated by this, we pose the following conjecture. 

\begin{conjecture}
\label{conj:1d_KC_SSA}
    For any $\N$, all \emph{KC}-compatible extreme rays of the \emph{SAC}$_{\N}$ satisfy \emph{SSA}. 
\end{conjecture}

While compatible with all known data, we stress that if true, this statement would be quite a remarkable result, given that the analogous statement does not hold for all PMIs (and more generally holds only on a set of measure zero of all entropy vectors).
It would be particularly interesting to understand what property of extreme rays ensures this restricted equivalence between KC and SSA.

\section{Discussion}
\label{sec:discussion}

The main goal of this work was to analyze the subset of PMIs (or equivalently faces of the SAC) which are compatible with the basic quantum mechanical fact that the mutual information between two systems vanishes only if their joint state factorizes. By elementary properties of the tensor product, this implies that the mutual information between their respective subsystems must also vanish, imposing a restriction on the set of PMIs that can be realized in quantum mechanics.
We have shown that for an arbitrary number of parties, the set of PMIs compatible with this requirement, which we dubbed Klein's condition and conveniently rephrased using a partial order on the set of MI-instances, forms a lattice. We then explored various structural properties of this lattice, which is the content of Theorems \ref{thm:main_theorem}, \ref{thm:JIE}, \ref{thm:new_coatoms}, and \ref{thm:embedding}, and proved a general result that simplifies the explicit computation of 1-dimensional KC-compatible PMIs (\Cref{cor:er_corollary}) for an arbitrary number of parties. Lastly, we commented on the realizability of the elements of $\lat{KC}^\N$ by quantum states (cf. \Cref{thm:realizability}), and on the relation between KC and SSA (cf. \Cref{thm:KC_SSA} and \Cref{thm:SSA_lattice}). Using \Cref{cor:er_corollary}, we were able to find the full set of KC-compatible extreme rays of the SAC$_5$ and verify that it is precisely the set of extreme rays which satisfy SSA. Based on this observation, and a partial result for the case involving six parties \cite{He:2022wip}, we conjectured the equivalence of these sets for an arbitrary number of parties (cf. \Cref{conj:1d_KC_SSA}). 

There are still many questions that remain open, both about the structure of the lattices $\lat{KC}^\N$ and $\lat{SSA}^\N$ for arbitrary $\N$, and about their applicability to studying the set of PMIs realizable either by arbitrary quantum states, or by states in restricted classes of particular interest. Some technical questions pertaining to the structure of $\lat{KC}^\N$ were already mentioned in \S\ref{ssec:latticN}. In this section, we would like to comment on more general open questions and future directions of investigations.

\paragraph{Meet-irreducible elements and closure systems:} If the join (meet) operation is known explicitly, any lattice can be constructed from its set of join-irreducible (meet-irreducible) elements. In the case of $\lat{KC}^{\N}$, the join does not appear to have a simple form, as it depends on the intricate relations among the various instances of SA, which vary with $\N$. On the other hand, since PMIs are described by their $\zdss$, the meet in $\lat{KC}^\N$ is simply given by the
intersection of $\zdss$ (cf.\ \eqref{eq:PMImeet}), so $\lat{KC}^{\N}$ can be easily  constructed from the meet-irreducible elements. To make further progress in the analysis of $\lat{KC}^\N$ for larger $\N$, one can then focus on these elements only.
 
Since (as a set) $\lat{KC}^{\N}$ is a subset of both $\lat{PMI}^{\N}$ and $\lat{MID}^{\N}$ (cf.\ \eqref{eq:lattice_intersection}), and the meet operation is the same for all these lattices (cf.\ \Cref{thm:main_theorem}), one may attempt to derive the meet-irreducible elements of $\lat{KC}^{\N}$ starting from those of $\lat{PMI}^{\N}$ or $\lat{MID}^{\N}$, and if necessary, take intersections of their corresponding sets of vanishing MI instances. However, in the case of $\lat{PMI}^{\N}$, the computation of the meet-irreducible elements is challenging, since it amounts to finding the extreme rays of the full SAC starting from its facets, and in general there is no efficient algorithm for such a computation. On the other hand, the meet-irreducible elements of $\lat{MID}^{\N}$ are straightforward to derive. Since the lattice is distributive (see the paragraph above \Cref{def:mid_lattice}), they correspond, from the point of view of the MI-poset, to the complements of the principal up-sets. An example of these elements are the PMIs of single Bell pairs, which are the complements of the principal up-sets of the minimal elements of the MI-poset (cf.\ proof of \Cref{thm:new_coatoms}). Notice that these Bell pairs are also meet-irreducible in $\lat{KC}^{\N}$ and $\lat{PMI}^{\N}$ since they are coatoms, but they are not coatoms in $\lat{MID}^{\N}$. In contrast, the complements of principal up-sets of non-minimal elements of the MI-poset do not correspond to PMIs, since they all include $\D$ linearly independent MI instances,\footnote{\, This follows from the fact that these sets strictly contain the complements of the principal up-sets of minimal elements, which are coatoms of $\lat{KC}$.} and the intersection of the corresponding hyperplanes in entropy space is the origin.
 
From these simple considerations one can then expect that the meet-irreducible elements of $\lat{KC}^{\N}$ will typically be given by the meet of several meet-irreducible elements of $\lat{MID}^{\N}$, since as $\N$ grows, the width of the MI poset grows much faster than its height.\footnote{\, The width of a poset is the cardinality of its maximal antichain.} Therefore, in order to efficiently derive the meet-irreducible elements of $\lat{KC}^{\N}$, one needs to better understand the dependence relations among the various instances of SA. A natural framework for this analysis lies in the theory of matroids\footnote{\, Matroids are combinatorial structures that abstractify the notion of linear dependence and can be axiomatized in several different ways.} \cite{oxley} (and its oriented version \cite{MR1226888}), which provides tools to phrase the implications of the dependence among the instances of SA purely in combinatorial terms \cite{He:2022wip}. 
 
 A matroid can also be seen as a particular instance of a more general structure called a \textit{closure system}, i.e., a pair composed of a set $\mathcal{X}$ and a \textit{closure operator} $\text{cl}_{\mathcal{X}}$.\footnote{\, A closure operator $\text{cl}_{\mathcal{X}}$ is map on the power-set of $\mathcal{X}$ such that for all $\mathcal{Y},\mathcal{Z}\subseteq\mathcal{X}$: 
 \begin{align*}
 \begin{split}
    & \mathcal{Y}\subseteq \text{cl}_{\mathcal{X}}(\mathcal{Y}) \\
    & \mathcal{Y}\subseteq\mathcal{Z} \implies \text{cl}_{\mathcal{X}}(\mathcal{Y})\subseteq\text{cl}_{\mathcal{X}}(\mathcal{Z})\\
    & \text{cl}_{\mathcal{X}}(\text{cl}_{\mathcal{X}}(\mathcal{Y}))=\text{cl}_{\mathcal{X}}(\mathcal{Y})\,.
 \end{split}
 \end{align*}
}
For any closure system, the set of closed subsets (with partial order given by inclusion) is a lattice. Moreover, any lattice such that its elements can be represented by sets, and its meet by intersection, is the lattice of closed sets of a closure system \cite{davey1990introduction}. Therefore $\lat{KC}$, $\lat{PMI}$, and $\lat{MID}$ can all be thought of as lattices of closed sets for different closure operators on the set $\mgs$ of MI instances, and it would be interesting to explore in detail the properties of the closure operator of $\lat{KC}$, and its relation to the closure operators of $\lat{PMI}$ and $\lat{MID}$. The closure operator of $\lat{MID}$ is simple, as it just associates to a collection of MI instances the smallest down-set in the MI-poset that contains the collection. On the other hand, the closure operator of $\lat{PMI}$ is more complicated.\footnote{\, It is associated to the closure operator that defines the oriented matroid \cite{BUCHI1988293} of SA instances.} Since $\lat{KC}$ is the intersection of $\lat{MID}$ and $\lat{PMI}$, and given the simplicity of the closure operator for $\lat{MID}$, it would be interesting to understand how much information is necessary about the closure operator of $\lat{PMI}$ to extract the closure operator of $\lat{KC}$. The general theory of lattices of closure systems, closure operators, and implicational bases \cite{gratzer2016,CASPARD2003241,BERTET201893} provides the appropriate framework for this analysis.

Finally, we have shown in \Cref{thm:JIE} that the set of join-irreducible elements of $\lat{KC}$ includes all the PMIs whose $\zdss$ correspond to the join-irreducible elements of $\lat{MID}$ that are not principal down-sets of maximal elements in the MI-poset. Furthermore, we have also seen that for $\N\leq 4$ there is no join-irreducible element in $\lat{KC}$ whose $\zds$ corresponds to an element that is not join-irreducible in $\lat{MID}$. It would be interesting to understand whether this property holds for an arbitrary number of parties (cf.\ \Cref{question:question1}), as well as its implications for the closure operator of $\lat{KC}$ \cite{finkbeiner}.

\paragraph{Realizability in quantum mechanics:} 

We mentioned at the end of \S\ref{sec:realizability} that for $\N\geq 4$ there exist PMIs in $\lat{KC}^{\N}$ that are not compatible with SSA, and therefore are not realizable by quantum states. In search of the solution to the QMIP, the next natural step is then the extraction of the set of PMIs which are compatible with SSA. As shown in \Cref{thm:SSA_lattice}, this set is also a lattice, with meet corresponding to the intersection of $\zdss$. 

A natural strategy to obtain this lattice would be to first construct a cone specified not only by the instances of SA, but also of SSA. For each face of this cone one could then determine the PMI, and the set of PMIs obtained in this fashion would give the desired lattice. However, as we mentioned in \S\ref{subsec:ssa_k_condition}, this procedure quickly becomes extremely inefficient as $\N$ grows, due to the fact that many faces correspond to the same PMI (cf.\ \Cref{ft:ssa}). 

We suggest that a more promising approach instead is to utilize $\lat{KC}^\N$ as the starting point, viewing KC as an approximation of SSA for PMIs. Starting from the meet-irreducible elements of $\lat{KC}^{\N}$, one can first verify which elements are consistent with SSA.\footnote{\, For example, this can be done efficiently with a linear program, by first reducing all instances of SSA to the subspace of interest, and then checking for the feasibility of the resulting system of inequalities. Alternatively, it would also be interesting to obtain a purely combinatorial characterization of SSA-compatible PMIs using matroid theory.} By taking the meet of such elements, one can then determine new elements of $\lat{KC}^{\N}$ and proceed to repeat the process until one has the generating set of the desired lattice. The efficiency of this procedure will depend on how effective KC is to approximating SSA, as well as on the strategy utilized to determine which pairs of elements should be chosen to compute the meet. For instance, if one could prove \Cref{conj:1d_KC_SSA}, it suffices then to focus only on higher dimensional meet-irreducible elements, and it would be interesting to explore whether such elements also, up to some maximal dimension independent of $\N$, all satisfy SSA automatically. More generally, it would be useful to explore these aspects of the construction in more detail, as well as the structural properties of $\lat{SSA}^{\N}$.

Once the lattice of SSA-compatible PMIs has been found for some $\N$, one would like to explore if all such PMIs are realizable in quantum mechanics. As we mentioned in the introduction, for $\N\geq 4$ there are no known unconstrained inequalities for the von Neumann entropy beyond SA and SSA, but infinitely many constrained inequalities have been found \cite{Linden:2004ebt, Cadney_2012}. It would be an intriguing research direction to ascertain if any of these inequalities provide new restrictions on the set of PMIs which are independent from SSA. However, we already know that this does not happen for $\N=4$, since in this case all SSA-compatible PMIs are realizable by quantum states \cite{Hernandez-Cuenca:2019jpv}.

Given an arbitrary PMI that satisfies all known entropy inequalities, it is typically hard to determine whether it can be realized by a quantum state. This is because there is no standard procedure to construct a quantum state with a desired pattern of correlation. Nevertheless, at least for the restricted class of stabilizer states, there are more tools available. One possibility is to utilize the hypergraph models of \cite{Bao:2020zgx}, which provide direct means of constructing entropy vectors that can be realized by stabilizer states \cite{Walter:2020zvt}.\footnote{\, The hypergraph models were introduced in \cite{Bao:2020zgx} as a generalization of the graph models of holographic entanglement \cite{Bao:2015bfa}, and one may wonder if the latter may also be useful for explicit realizations. However, it was shown in \cite{Hernandez-Cuenca:2019jpv} that already for $\N=4$, there exist PMIs that are realizable by stabilizer states, and in particular by hypergraph models, but not by graph models.} Even though it was shown in \cite{Bao:2020mqq} that there are stabilizer states whose entropy vectors cannot be realized by hypergraph models, it is in principle also possible that hypergraph models are nevertheless sufficient to realize all PMIs that can be realized by stabilizer states.

Finally, it is certainly possible that there exist PMIs which can be realized by quantum states but not by stabilizer states, and presumably these would be harder to find. However, it was speculated in \cite{Cadney_2012} that all balanced inequalities satisfied by classical probability distributions could also hold for the von Neumann entropy. As these inequalities are likewise known to be satisfied by stabilizer states \cite{Gross_2013}, it is also interesting to contemplate the possibility that stabilizer states are in fact sufficient to realize all quantum mechanical PMIs. If this were the case, the set of PMIs realizable by quantum states would be a proper subset of the set of PMIs compatible with SSA, since already for $\N=5$ there exist PMIs which are SSA-compatible but not realizable by stabilizer states \cite{Hernandez-Cuenca:2019jpv}.

\paragraph{1-dimensional PMIs and applications to quantum gravity (holography):}

In the previous paragraph we commented on the relation between SSA and KC, both of which are necessary conditions for the realizability of PMIs. While these conditions are in general not equivalent (cf. \Cref{thm:KC_SSA}), an open question is whether they are equivalent for the specific case of 1-dimensional PMIs. As we mentioned at the end of \S\ref{sec:realizability}, we have verified this equivalence up to $\N=5$ (and partially for $\N=6$ \cite{He:2022wip}), and we conjecture that it holds for any $\N$ (cf. \Cref{conj:1d_KC_SSA}).

The interest in 1-dimensional PMIs stems from its potential application to the analysis of entropic constraints in the gauge/gravity duality. It was recently argued in \cite{Hernandez-Cuenca:2022pst} that all holographic entropy inequalities might in principle be reconstructed from the knowledge of which extreme rays of the SAC can be realized by the graph models of \cite{Bao:2015bfa}.\footnote{\, Crucially, for the reconstruction of the $\N$-party inequalities, it is in general not sufficient to know the extreme rays of the SAC$_\N$ realized by graph models. However, \cite{Hernandez-Cuenca:2022pst} argued that more generally, for any $\N$, there exists a finite $\N'\geq \N$ such that this type of information would be sufficient for the reconstruction (see \cite{Hernandez-Cuenca:2022pst} for more details).} These observations suggest a natural question which aims at understanding the very origin of holographic entropic constraints, namely whether the extreme rays of the SAC that can be realized by graph models are precisely the ones that can be realized in quantum mechanics, or if instead there exist extreme rays that can be realized by quantum states but not by graph models.

To further comment on this question, it is useful to consider the following subsets of extreme rays of SAC$_\N$ from \cite{Hernandez-Cuenca:2022pst}:
\begin{align}
\begin{split}
    \er &\coloneqq \{\pmi\in \lat{PMI}^{\N}:\, \text{dim}(\pmi)=1 \}\\
    \erssa  &\coloneqq \{\pmi\in\er:\, \pmi\; \text{is SSA-compatible}\}\\
    \erq &\coloneqq \{\pmi\in\er:\, \pmi\; \text{is realizable by a quantum state}\}\\
    \erh &\coloneqq \{\pmi\in\er:\, \pmi\; \text{is realizable by a graph model}\}\,,
\end{split}
\end{align}
and similarly define the new set
\begin{equation}
    \erkc \!\coloneqq \{\pmi\in \lat{KC}^{\N}:\, \text{dim}(\pmi)=1\}\,.
\end{equation}
From these definitions we clearly have
\begin{equation} \label{eq:erinclusions}
     \erh \subseteq \erq \subseteq \erssa \subseteq  \erkc \subseteq \er \,,
\end{equation}
where the last inclusion is strict for any $\N\geq 3$. While it seems reasonable to expect that $\erh$ could be derived explicitly using graph model constructions, determining $\erq$ in principle could be much harder. A tantalizing possibility suggested in \cite{Hernandez-Cuenca:2022pst} was that $\erh\!=\erssa$, and the initial hope was to show it by demonstrating that in fact $\erh\!=\erkc$. By \eqref{eq:erinclusions}, this would imply $\erh\!=\erq$, which would answer the question mentioned above, as well as $\erq\!=\erssa$. This would provide a partial solution to the QMIP and interesting structural information about the quantum entropy cone. However, subsequently to the first version of this work, partial analysis involving $\N=6$ parties already showed that $\erh\subset\erq$ strictly \cite{he2023gap}. On the other hand, the possibility that $\erq\!=\erssa\!=\erkc$ remains open, and is currently being investigated in \cite{He:2022wip}.

\paragraph{Unbounded number of parties:} Finally, we observe that throughout this paper, as well as in all other previous works on entropy cones, the number of parties $\N$ was always assumed to be finite. While this is natural from an operational perspective (for example, qubits in a quantum computer), in quantum field theory (QFT) there are no real restrictions to the possible refinements of a given collection of subsystems. In other words, given a state in a QFT, and $\N$ regions of the manifolds on which the theory is defined, one can always imagine partitioning these regions into $\N' > \N$ smaller ones.\footnote{\, As in the case of the multi-boundary systems used in \cite{Bao:2015bfa}, to define the holographic entropy cone we could consider a state in the tensor product of $\N$ QFTs, but here we are focusing on a choice of regions that probes a single QFT.} Fixing the value of $\N$ to be a specific finite number is just a matter of convenience, but this number has no real meaning and might introduce various combinatorial artifacts. For example, the structure of the MI-poset depends on the partitions of the integers between 2 and $\N$, which affects the dependence relations among the instances of SA and in turn might render the lattice more complicated. An interesting research direction would be to investigate a generalization of the analysis presented here to a set-up where $\N$ is infinite, in the hope that some of these artifacts will disappear. The theory of infinite lattice might then be a natural framework for such an investigation.

\acknowledgments
 
It is a pleasure to thank S. Hern\'andez-Cuenca and M. Parisi for useful discussions. MR would also like to thank QMAP, the University of California Davis, and California Institute of Technology for their hospitality during various stages of this project. TH has been supported by funds from the University of California, the U.S. Department of Energy grant DE-SC0020360 under the HEP-QIS QuantISED program, the Heising-Simons Foundation ``Observational Signatures of Quantum Gravity'' collaboration grant 2021-2817, the U.S. Department of Energy grant DE-SC0011632, and the Walter Burke Institute for Theoretical Physics.
VH has been supported in part by the U.S. Department of Energy grant DE-SC0009999 and funds from the University of California. 
MR has been supported by the University of Amsterdam, via the ERC Consolidator Grant QUANTIVIOL, and by the Stichting Nederlandse Wetenschappelijk Onderzoek Instituten (NWO-I).

\bibliography{KTbib}{}
\bibliographystyle{utphys}

\end{document}